\newtheorem{theorem}{Theorem}[section]
\newtheorem{lemma}[theorem]{Lemma}
\newtheorem{proposition}[theorem]{Proposition}
\newtheorem{definition}[theorem]{Definition}
\newlength{\alginputwidth}
\newlength{\algboxwidth}
\newsavebox{\algbox}
\newsavebox{\captionbox}
    {
        \setlength{\algboxwidth}{\columnwidth}
        \addtolength{\algboxwidth}{-\columnsep}
        \addtolength{\algboxwidth}{-1mm}
        \setlength{\alginputwidth}{\algboxwidth}
        \addtolength{\alginputwidth}{-1.7cm}
        \begin{figure}[tb]
            \vspace*{2mm}
            \centering
            \begin{lrbox}{\captionbox}
                \begin{minipage}[b]{\algboxwidth}
                    \centering
                    \caption{#1}
                    \label{#2}
                \end{minipage}
            \end{lrbox}
            \begin{lrbox}{\algbox}
                \begin{minipage}[b]{\algboxwidth}
                    \footnotesize
                    \vspace*{2mm}
    } 
    {
                    \vspace*{0.2mm}
               \end{minipage}
            \end{lrbox}
            \fbox{\usebox{\algbox}\hspace*{1mm}}
            \usebox{\captionbox}
            \vspace*{-4mm}
        \end{figure}
    }
\newsavebox{\algcodebox}
    {
        \begin{enumerate}
            \setlength{\itemsep}{2pt}
            \setlength{\parsep}{0pt}
            \setlength{\topsep}{0pt}
            \setlength{\parskip}{0pt}
            \setlength{\partopsep}{0pt}
    } 
    {\end{enumerate}}
\newcommand{\paramproblem}[4]{\noindent {\sc #1}
\\
{\bf Given:} #2\\
{\bf Parameter:} #3\\
{\bf Question:} #4}
\newcommand{\Oh}{{\mathcal O}}
\newcommand{\Ohstar}{{\mathcal O^{*}}}
\newcommand{\nat}{\mathbb{N}}
\newcommand{\Pol}{\mbox{$\mathcal P$}}
\newcommand{\NP}{\mbox{$\mathcal{NP}$}}
\newcommand{\APX}{\mbox{$\mathcal{APX}$}}
\newcommand{\FPT}{\text{$\mathcal{FPT}$}}
\let\phi=\varphi
\let\epsilon=\varepsilon
\def\eg{{\em e.g.}}
\date{}
\title{On the Ordered List Subgraph Embedding Problems\footnote{A preliminary version of this paper appeared in proceedings of the {\em 8th International Symposium on Parameterized and Exact Computation} (IPEC), volume 8246 of {\em Lecture Notes in Computer Science}, pages~189--201, 2013.}}
\author{
{\sc Olawale Hassan}\footnote{School of
Computing, DePaul University, 243 S. Wabash Avenue, Chicago, IL
60604. Email: {\tt oahassan@gmail.com}, {\tt ikanj@cs.depaul.edu}, {\tt lperkovic@cs.depaul.edu}}
\and
{\sc Iyad Kanj}\footnotemark[2]
\and
{\sc Daniel Lokshtanov}\thanks{Department of Informatics,
University of Bergen, Bergen, Norway.  Email: {\tt daniello@ii.uib.no}}
\and
{\sc Ljubomir Perkovi\'{c}}\footnotemark[2]
}
\begin{document}

\maketitle

\begin{abstract}
In the (parameterized) {\sc Ordered List Subgraph Embedding} problem (p-OLSE) we are given two graphs $G$ and $H$, each with a linear order defined on its vertices, a function $L$ that associates with every vertex in $G$ a list of vertices in $H$, and a parameter $k$. The question is to decide if we can embed (one-to-one) a subgraph $S$ of $G$ of $k$ vertices into $H$ such that: (1) every vertex of $S$ is mapped to a vertex from its associated list, (2) the linear orders inherited by $S$ and its image under the embedding are respected, and (3) if there is an edge between two vertices in $S$ then there is an edge between their images. If we require the subgraph $S$ to be embedded as an induced subgraph, we obtain the {\sc Ordered List Induced Subgraph Embedding} problem (p-OLISE). The p-OLSE and p-OLISE problems model various problems in Bioinformatics related to structural comparison/alignment of proteins.

We investigate the complexity of p-OLSE and p-OLISE with respect to the following structural parameters: the {\em width} $\Delta_L$ of the function $L$ (size of the largest list), and the maximum degree $\Delta_H$ of $H$ and $\Delta_G$ of $G$. In terms of the structural parameters under consideration, we draw a complete complexity landscape of p-OLSE and p-OLISE (and their optimization versions) with respect to the computational frameworks of classical complexity, parameterized complexity, and approximation.
\end{abstract}

\pagenumbering{roman}
\pagenumbering{arabic}

\section{Introduction} \label{sec:intro}
\subsection{Problem Definition and Motivation}
\label{subsec:motivation}
We consider the following problem that we refer to as the parameterized {\sc Ordered List Subgraph Embedding} problem, shortly (p-OLSE):

\paramproblem{} {Two graphs $G$ and $H$ with linear orders $\prec_G$ and $\prec_{H}$ defined on the vertices of $G$ and $H$; a function $L : V(G) \longrightarrow 2^{V(H)}$; and $k \in \nat$}{$k$}{Is there a subgraph $S$ of $G$ of $k$ vertices and an injective map $\phi: V(S) \longrightarrow V(H)$ such that: (1) $\phi(u) \in L(u)$ for every $u \in S$; (2) for every $u, u' \in S$, if $u \prec_G u'$ then $\phi(u) \prec_H \phi(u')$; and (3) for every $u, u' \in S$, if $uu' \in E(G)$ then $\phi(u)\phi(u') \in E(H)$} \\

The parameterized {\sc Ordered List Induced Subgraph Embedding} (p-{\sc OLISE}) problem, in which we require the subgraph $S$ to be embedded as an induced subgraph, is defined as follows:

\paramproblem{} {Two graphs $G$ and $H$ with linear orders $\prec_G$ and $\prec_{H}$ defined on the vertices of $G$ and $H$; a function $L : V(G) \longrightarrow 2^{V(H)}$; and $k \in \nat$}{$k$}{Is there a subgraph $S$ of $G$ of $k$ vertices and an injective map $\phi: V(S) \longrightarrow V(H)$ such that: (1) $\phi(u) \in L(u)$ for every $u \in S$; (2) for every $u, u' \in S$, if $u \prec_G u'$ then $\phi(u) \prec_H \phi(u')$; and (3) for every $u, u' \in S$, $uu' \in E(G)$ if and only if $\phi(u)\phi(u') \in E(H)$} \\

The optimization version of p-OLSE (resp. p-OLISE), denoted opt-OLSE (resp. opt-OLISE), asks for a subgraph $S$ of $G$ with the maximum number of vertices such that there exists a valid list embedding $\phi$ that embeds $S$ as a subgraph (resp. induced subgraph) into $H$.

The p-OLSE and p-OLISE problems, and their optimization versions opt-OLSE and opt-OLISE, have applications in the area of Bioinformatics because they model numerous protein and DNA structural comparison problems (see~\cite{xiuzhen,evansphd,evans,xiuzhencai}).

In this paper we investigate the complexity of p-OLSE and p-OLISE with respect to the following structural parameters: the {\em width} $\Delta_L$ of the function $L$ defined to be $max\{|L(u)| : u \in V(G)\}$ and the maximum degree $\Delta_H$ of $H$ and $\Delta_G$ of $G$. Restrictions on the structural parameters $\Delta_H$, $\Delta_G$ and $\Delta_L$ are very natural in Bioinformatics. The parameters $\Delta_H$ and $\Delta_G$ model the maximum number of hydrophobic bonds that an amino acid in each protein can have; on the other hand, $\Delta_L$ is usually a parameter set by the Bioinformatics practitioners when computing the top few alignments of two proteins~\cite{xiuzhencai}.

\subsection{Previous Related Results}\label{subsec:previous}
Goldman et al.~\cite{Goldman1999} studied protein comparison problems using the notion of {\em contact maps}, which are undirected graphs whose vertices are linearly ordered. They formulated the protein comparison problem as a {\sc Contact Map Overlap} problem, in which we are given two contact maps and we need to identify a subset of vertices $S$ in the first contact map, a subset of vertices $S'$ in the second with $|S| = |S'|$, and an order-preserving bijection $f: S \longrightarrow S'$, such that the number of edges in $S$ that correspond to edges in $S'$ is maximized. In~\cite{Goldman1999}, the authors proved that the {\sc Contact Map Overlap} problem is MAXSNP-complete, even when both contact maps have maximum degree one. The authors develop approximation algorithms by restricting their attention to special cases of the {\sc Contact Map Overlap} problem. In particular, they focus on cases where at least one of the graphs is a self-avoiding walk on the two dimensional grid.  Such walks are formulated as a one-to-one mapping $g : \{1, 2,\dotsc, n\} \to \mathbb{Z}\times\mathbb{Z}$ such that $||g(i)-g(i+1)||_2 = 1$ for $i = 1,\dotsc,n-1$.  Goldman et al.~\cite{Goldman1999} observed that the geometry of proteins exhibit this structure, making self-avoiding walks a natural restriction for comparing protein contact maps.  They create contact maps from walks by taking the set $\{1,2,\dotsc,n\}$ as the set of vertices and including an edge for every pair $ij$ such that $|i - j| > 1$ and $||g(i)-g(j)||_2 = 1$. While the problem remains \NP-complete even when comparing two contact maps that are self-avoiding walks, the authors give a polynomial-time $\frac{1}{4}$-approximation algorithm when comparing two self-avoiding walks and a polynomial-time $\frac{1}{3}$-approximation algorithm when comparing a self-avoiding walk and an arbitrary contact map. The main difference between the {\sc Contact Map Overlap} problem and the opt-OLSE and opt-OLISE problems under consideration is that in opt-OLSE and opt-OLISE the bijection $f$ is restricted to mapping a vertex to one in its list, and the goal is to maximize the size of the subgraph, not the number of edges, that can be embedded.

The p-OLISE problem generalizes the well-studied problem {\sc Longest Arc-Preserving Common Subsequence} (LAPCS) (see~\cite{gramm,evansphd,evans,guo,guohuidam,guohui}).  In LAPCS, we are given two sequences $S_1$ and $S_2$ over a fixed alphabet, where each sequence has arcs/edges between its characters, and the problem is to compute a longest common subsequence of $S_1$ and $S_2$ that respects the arcs. The LAPCS problem was introduced by~\cite{evansphd,evans} where it was shown to be $W[1]$-complete (parameterized by the length of the common subsequence sought) in the case when the arcs are {\em crossing}. Several works studied the complexity and approximation of LAPCS with respect to various restrictions on the types of the arcs (\eg, {\em nested}, {\em crossing}, etc.)~\cite{gramm,evansphd,evans,guo,guohuidam,guohui}. The work in~\cite{gramm,guo} considered the problem in the case of nested arcs parameterized by the total number of characters that need to be deleted from $S_1$ and $S_2$ to obtain the arc-preserving common subsequence. They showed that the problem is $\FPT$ with respect to this parameterization, and they also showed it to be $\FPT$ when parameterized by the length of the common subsequence in the case when the alphabet consists of four characters. The p-OLISE problem generalizes LAPCS since no restriction is placed on the size of the alphabet, and a vertex can be mapped to any vertex from its list. Consequently, the ``positive" results obtained in this paper about p-OLISE and opt-OLISE apply directly to their corresponding versions of LAPCS; on the other hand, we are able to borrow the $W[1]$-hardness result from~\cite{evans} to conclude the $W[1]$-hardness results in Proposition~\ref{prop:whardness} and Proposition~\ref{prop:whardness2}.

A slight variation of p-OLSE was considered in~\cite{xiuzhen,xiuzhencai}, where the linear order imposed on $G$ and $H$ was replaced with a partial order (directed acyclic graphs); the problem was referred to as the {\sc Graph Embedding} problem in~\cite{xiuzhen} and as the {\sc Generalized Subgraph Isomorphism} problem in~\cite{xiuzhencai}. The aforementioned problems were mainly studied in~\cite{xiuzhen,xiuzhencai} assuming no bound on $\Delta_H$ and $\Delta_G$ and, not surprisingly, only hardness results were derived. In~\cite{xiuzhencai}, a parameterized algorithm with respect to the treewidth of $G$ and the map width $\Delta_L$ combined was given. Most of the hardness results in~\cite{xiuzhen,xiuzhencai} were obtained by a direct reduction from the {\sc Independent Set} or {\sc Clique} problems. For example, it was shown in~\cite{xiuzhen} that the problem of embedding the whole graph $G$ into $H$ is \NP-hard, but is in $\Pol$ if $\Delta_L=2$. It was also shown that the problem of embedding a subgraph of $G$ of size $k$ into $H$ is $W[1]$-complete even when $\Delta_L =1$, and cannot be approximated to a ratio $n^{\frac{1}{2} - \epsilon}$ unless $\Pol=\NP$; we borrow these two hardness results as they also work for p-OLSE and p-OLISE.

Fagnot et al.~\cite{Fagnot2008178} introduced {\textsc Exact-$(\mu_G, \mu_H)$-Matching}, a problem in which we are given two graphs, $G$ and $H$, a mapping $L$ as in p-OLISE and p-OLSE, and constants $\mu_G$ and $\mu_H$ where max$\{|L(u)|:u \in G\} \le \mu_G$ and max$\{|L^{-1}(v)|:v \in H\} \le \mu_H$. However,  there is no ordering imposed on the vertices of $G$ or $H$.  The objective is to find an injective homomorphism from $G$ to $H$. The authors proved that {\textsc Exact-$(\mu_G, \mu_H)$-Matching} is \NP-Complete when $\mu_G \ge 3$ and $\mu_H = 1$, even if $G$ and $H$ are bipartite graphs.  Fertin et al. in~\cite{Fertin2005,Fertin2009} revisited the original problem, restricting their attention to graphs of bounded degree, and considered an optimization version, {\textsc Max-$(\mu_G, \mu_H)$-Matching}, where the objective is to embed as many edges from $G$ into $H$ as possible. While the authors found that for small values of $\Delta_G$ and $\Delta_H$ there are cases where {\textsc Exact-$(\mu_G, \mu_H)$-Matching} is in $\Pol$ and cases where {\textsc Max-$(\mu_G, \mu_H)$-Matching} has a constant ratio approximation, generally the problem remains difficult. Fertin et al.~\cite{Fertin2005,Fertin2009} also give an $\FPT$ algorithm for solving {\textsc Max-$(\mu_G, \mu_H)$-Matching} by adding an arbitrary ordering to $G$ and $H$ which indicates that ordering $G$ and $H$ might improve the tractability of graph embedding problems. These results contrast with our results in Section~\ref{sec:complexity} and Section~\ref{subsec:parameterized} that show that p-OLSE and p-OLISE become easy when the ordering constraint is removed.

Finally, one can draw some similarities between p-OLISE and the {\sc Subgraph Isomorphism} and {\sc Graph Embedding} problems. The main differences between p-OLISE and the aforementioned problems are: (1) in p-OLSE we have linear orders on $G$ and $H$ that need to be respected by the map sought, (2) we ask for an embedding of a subgraph of $G$ rather than the whole graph $G$, and (3) each vertex must be mapped to a vertex from its list. In particular, requirement (1) above precludes the application of well-known (logic) meta-theorems (see~\cite{grohebook}) to the restrictions of p-OLISE that are under consideration in this paper.

\subsection{Our Results and Techniques}\label{subsec:results}
We draw a complete complexity landscape of p-OLSE and p-OLISE with respect to the computational frameworks of classical complexity, parameterized complexity, and approximation, in terms of the structural parameters $\Delta_H$, $\Delta_G$ and $\Delta_L$. Table~1 outlines the obtained results about p-OLSE and p-OLISE and their optimization versions. Even though our hardness results are for specific values of the parameters $\Delta_H$, $\Delta_G$, and $\Delta_L$, these results certainly hold true for restrictions of the problems to instances in which the corresponding parameters are upper bounded by (or equal to --- by adding dummy vertices) any constant larger than these specific values. Observe also that the results we obtain {\em completely and tightly} characterize the complexity (with respect to all frameworks under consideration) of the problems with respect to $\Delta_H$, $\Delta_G$ and $\Delta_L$ (unbounded vs. bounded, and when applicable, for different specific values).

Section~\ref{sec:complexity} presents various classical complexity results. The \NP-hardness results are obtained by a reduction from the \textsc{$k$-Multi-Colored Independent Set} problem. Section~\ref{subsec:approximation} presents various approximation results for the optimization versions of p-OLSE and p-OLISE. Section~\ref{subsec:parameterized} presents parameterized complexity results for various restrictions of p-OLSE and p-OLISE. The $W[1]$-hardness results are obtained by tweaking the $W[1]$-hardness results given in the literature~\cite{xiuzhen,evans}, or by simple known reductions from the {\sc Independent Set} problem. The $\FPT$ results in Theorem~\ref{thm:mainboundeddegree} for p-OLSE, when $\Delta_L = O(1)$, $\Delta_G=O(1)$, and $\Delta_H = \infty$, are derived using the random separation method~\cite{cai}. This method is applied after transforming the problem --- via reduction operations --- to the {\sc Independent Set} problem on a graph composed of (1) a permutation graph and (2) a set of additional edges between the permutation graph vertices such that the number of additional edges incident to any vertex is at most a constant; Lemma~\ref{lem:separation} then shows that the {\sc Independent Set} problem on such graphs is $\FPT$. On the other hand, the $\FPT$ results in Proposition~\ref{prop:randomsimple}, when $\Delta_H=0$, $\Delta_G=O(1)$ (resp. $\Delta_G=0$ and $\Delta_H=O(1)$ for p-OLISE by symmetry) and $\Delta_L=\infty$, are also derived using the random separation method~\cite{cai}, but the argument is simpler. We also explore the contribution of the ordering constraint of p-OLSE and p-OLISE to the difficulty of the problems.

To cope with the $W$-hardness of p-OLSE in certain cases, we consider a different parameterization of the problem in Section~\ref{sec:vcnumber}, namely the parameterization by the vertex cover number, and denote the associated problem by p-VC-OLSE. This parameterization is not interesting for p-OLISE since we prove that p-OLISE is \NP-complete in the case when $\Delta_G=0$, $\Delta_H=1$ and $\Delta_L$ = 1, and hence the problem is {\em para-\NP-hard}~\cite{grohebook} with respect to this parameterization. Proposition~\ref{prop:whardnessvc} shows that p-VC-OLSE is $W[1]$-complete in the case when $\Delta_H=1$, $\Delta_G=1$ and $\Delta_L$ is unbounded (note that if either $\Delta_H=0$ or $\Delta_G=0$ then p-OLSE is $\FPT$ when $\Delta_L = \infty$). So we restrict our attention to the case when $\Delta_L=O(1)$, and show in this case that the problem is $\FPT$ even when both $\Delta_H$ and $\Delta_G$ are unbounded; the method relies on a bounded search tree approach, combined with the dynamic programming algorithm described in Proposition~\ref{prop:noedges}.

\begin{table}[htbp] \label{table:map}
\begin{center} \scriptsize
\mbox{\renewcommand{\arraystretch}{1.1}
\begin{tabular}{p{2.65cm}p{1cm}p{1cm}p{1cm}l}
{\small\bf p-OLSE}   & {\bf $\Delta_H$} & {\bf $\Delta_G$} & {\bf $\Delta_L$} & {\bf Complexity} \\ \hline
Classical     & $\infty$ & 0 & $\infty$ & $\Pol$ \\
              & $0$ & $1$ & $1$ & $\NP$-complete \\ \hline
Approximation & $\infty$ & $O(1)$ & $\infty$ & $\APX$-complete \\
              & $0$ & $\infty$ & $1$ &  not approximable to $n^{\frac{1}{2} - \epsilon}$  \\ \hline
Parameterized & $\infty$ & $O(1)$ & $O(1)$ &  $\FPT$ \\
              & $1$ & $1$ & $\infty$ & $W[1]$-complete \\
              & $0$ & $\infty$ & $1$ & $W[1]$-complete  \\
              & $0$ & $O(1)$ & $\infty$ & $\FPT$ \\
              & $\infty$ & $0$ & $\infty$ & $\FPT$ (even in $\Pol$) \\ \hline
\end{tabular}}

\vspace*{0.5cm}

\mbox{\renewcommand{\arraystretch}{1.1}
\begin{tabular}{p{2.65cm}p{1cm}p{1cm}p{1cm}l}
{\small\bf p-OLISE}   & {\bf $\Delta_H$} & {\bf $\Delta_G$} & {\bf $\Delta_L$} & {\bf Complexity} \\ \hline
Classical     & $0$ & $0$ & $\infty$ & $\Pol$ \\
              & $0$ & $1$ & $1$ & $\NP$-complete \\
              & $1$ & $0$ & $1$ & $\NP$-complete \\ \hline
Approximation & $O(1)$ & $O(1)$ & $\infty$ & $\APX$-complete \\
              & $ 0$ & $\infty$ & $1$ &  not approximable to $n^{\frac{1}{2} - \epsilon}$   \\
              & $\infty$ & $0$ & $1$ &  not approximable to $n^{\frac{1}{2} - \epsilon}$ \\ \hline
Parameterized & $\infty$ & $0$ & $1$ & $W[1]$-complete \\
              & $0$ & $\infty$ & $1$ & $W[1]$-complete \\
              & $0$ & $O(1)$ & $\infty$ &   $\FPT$ \\
              & $O(1)$ & $0$ & $\infty$ &   $\FPT$ \\
              & $1$ & $1$ & $1$ & $W[1]$-complete \\
\end{tabular}}

\caption{Classical, approximation, and parameterized complexity maps of p-OLSE and p-OLISE with respect to $\Delta_H$, $\Delta_G$ and $\Delta_L$. The inapproximability results are under the assumption that $\Pol \neq \NP$. The symbol $\infty$ stands for unbounded degree.}
\end{center} \vspace*{-0.5cm}
\end{table}

\subsection{Background and Terminologies}\label{sec:background}
\paragraph{Graphs.} For a graph $H$ we denote by $V(H)$ and $E(H)$ the set of vertices and edges of $H$, respectively; we write $|H|$ for $|V(H)|$.  For a set of vertices $S \subseteq V(H)$, we denote by $H[S]$ the subgraph of $H$ induced by the vertices in $S$. For a subset of edges $E' \subseteq E(H)$, we denote by $H-E'$ the graph $(V(H), E(H) \setminus E')$. For a vertex $v \in H$, $N(v)$ denotes the set of neighbors of $v$ in $H$. The {\em degree} of a vertex $v$ in $H$, denoted $deg_H(v)$, is $|N(v)|$.  A vertex $v$ is {\em isolated} in $H$ if $deg_H(v)=0$. The {\em degree} of $H$, denoted $\Delta(H)$, is $\Delta(H) = \max\{deg_H(v): v \in H\}$. A {\em matching} in a graph is a set of edges $M$ such that no two edges in $M$ share an endpoint. An {\em independent set} of a graph $H$ is a set of vertices $I$ such that no two vertices in $I$ are adjacent.  A {\em maximum independent set} in $H$ is an independent set of maximum cardinality.  A {\em clique} is subset of vertices $Q \subseteq V(H)$ such that for each pair of vertices $u, u' \in Q, uu' \in E(H)$.  A {\em vertex cover} of $H$ is a set of vertices such that each edge in $H$ is incident to at least one vertex in this set; we denote by $\tau(H)$ the cardinality of a minimum vertex cover of $H$. Let $L$ and $L'$ be two parallel lines in the plane. A {\em permutation graph} $P$ is the intersection graph of a set of line segments such that one endpoint of each of those segments lies on $L$ and the other endpoint lies on $L'$. For more information about graphs, we refer the reader to~\cite{west}.

\paragraph{Optimization Problems.} An {\it \NP optimization problem} $Q$ is a 4-tuple $(I_Q, S_Q, f_Q, g_Q)$. $I_Q$ is the set of input instances, which is recognizable in polynomial time. For each instance $x \in I_Q$, $S_Q(x)$ is the set of feasible solutions for $x$, which is defined by a polynomial $p$ and a polynomial-time computable predicate $\pi$ ($p$ and $\pi$ depend only on $Q$) as $S_Q(x) = \{ y : |y| \leq p(|x|) \;\wedge\; \pi(x, y)\}$. The function $f_Q(x, y)$ is the objective function mapping a pair $x \in I_Q$ and $y \in S_Q(x)$ to a non-negative integer.  The function $f_Q$ is computable in polynomial time. The function $g_Q$ is the {\em goal} function, which is one of the two functions $\{\max, \min\}$, and $Q$ is called a {\it maximization problem} if $g_Q = \max$, or a {\it minimization problem} if $g_Q = \min$. We will denote by $opt_Q(x)$ the value $g_Q\{f_Q(x, z) \mid z \in S_Q(x) \}$, and if there is no confusion about the underlying problem $Q$, we will write $opt(x)$ to denote $opt_Q(x)$.

\paragraph{Polynomial Time Approximation Algorithms.} A polynomial time algorithm $A$ is an {\it approximation algorithm} for a (maximization)
problem $Q$ if for each input instance $x \in I_Q$ the algorithm $A$
returns a feasible solution $y_A(x) \in S_Q(x)$.  The solution $y_A(x)$
has an {\it approximation ratio $r(|x|)$} if it satisfies the following
condition:
\begin{eqnarray*}
opt_Q(x)/f_Q(x, y_A(x)) \leq r(|x|).
\end{eqnarray*}

The approximation algorithm $A$ has an {\it approximation ratio $r(|x|)$} if
for every instance $x$ in $I_Q$ the solution $y_A(x)$ constructed by the
algorithm $A$ has an approximation ratio bounded by $r(|x|)$.

An optimization problem $Q$ has a {\em constant-ratio approximation algorithm} if it has an approximation algorithm whose ratio is a constant.

Given two maximization problems $Q = (I_Q, S_Q, f_Q, g_Q)$ and $Q' = (I_{Q'}, S_{Q'}, f_{Q'}, g_{Q'})$, we say that $Q$ {\em L-reduces} to $Q'$~\cite{py91} if there are two polynomial-time computable functions/algorithms $h, h'$ and constants $\alpha$ and $\beta$ such that:

\begin{enumerate}[(i)]
  \item For all $x \in I_Q$, $h$ produces an instance $x' = h(x)$ of $I_{Q'}$, such that $opt_{Q'}(x') \le \alpha \cdot opt_Q(x)$, and
  \item for any solution $y' \in S_{Q'}(x')$, $h'$ produces a solution $y \in S_{Q}(x)$ such that~$|f_Q(y) - opt_Q(x)| \le \beta \cdot |f_{Q'}(y') - opt_{Q'}(x')|$.
\end{enumerate}

\paragraph{Parameterized Complexity.} A {\it parameterized problem} is a set of instances of the form $(x,
k)$, where $x \in \Sigma^*$ for a finite alphabet set $\Sigma$, and
$k$ is a non-negative integer called the {\em parameter}.
A parameterized problem $Q$ is {\it fixed parameter tractable} ($\FPT$), if there exists an algorithm that on input $(x, k)$
decides if $(x, k)$ is a yes-instance of $Q$ in time $f(k)|x|^{O(1)}$,
where $f$ is a computable function independent of $|x|$; we will denote by {\em fpt-time} a running time of the form $f(k)|x|^{O(1)}$.
A hierarchy of fixed-parameter intractability, {\it the $W$-hierarchy} $\bigcup_{t
\geq 0} W[t]$, was introduced based on the notion of {\em fpt-reduction}, in which the $0$-th level $W[0]$ is the class $\FPT$.  It is commonly believed that $W[1] \neq \FPT$. The asymptotic notation $O^*()$ suppresses a polynomial factor in the input length. For more information about parameterized complexity, we refer the reader to~\cite{fptbook,grohebook,rolfbook}.

\paragraph{Randomized Algorithms.} A {\em randomized algorithm} is an algorithm that relies on random data as part of its input.  Such algorithms can be {\em de-randomized} using techniques to enumerate values of the randomized input, yielding a deterministic algorithm.  In this paper we design randomized algorithms using the color coding technique introduced by Alon et al.~\cite{alon95} and the random separation technique introduced by Cai et al.~\cite{cai}.  Both techniques are applicable to problems where the objective is to find a subset of size $k$ satisfying certain properties in a universe of size $n$.

Given a problem where we seek a subset of size $k$ satisfying certain properties in a universe of $n$ elements, color coding works by randomly assigning $k$ colors to the set of $n$ elements and then looking for a subset of size $k$ that is {\em colorful}, that is, where each element of the subset has a distinct color.  The algorithm can be de-randomized using families of $k$-perfect hash functions, or colorings, constructed as shown in~\cite{alon95} in time $\Oh(2^{\Oh(k)}\log(n))$. Thus, color coding algorithms run in time $\Oh(2^{\Oh(k)}\log(n)f(n,k))$ where $f(n, k)$ is the running time to verify that a colorful subset of size $k$ satisfying the desired properties exists for a given coloring.  If $f$ is polynomial in $n$, then the de-randomized algorithm runs in fpt-time.

Random separation~\cite{cai} works by randomly partitioning the vertices of a graph $G$ into red and green vertices and testing whether a solution $S \subseteq V(G)$ exists such that $S$ is entirely contained in the green partition and $N(S)$ is entirely contained in the red partition.  The likelihood that a solution $S$ is colored green and $N(S)$ is colored red is $2^{-(k+|N(S)|)}$.  To derandomize the algorithm we use $(n,t)$-universal sets where $t = k+|N(S)|$.  Naor et al.~\cite{naor} give a construction of $(n,t)$-universal sets that runs in time $\Oh(2^tt^{\Oh(\log(t))}\log(n))$.  Thus, the running time for the de-randomized algorithm is $\Oh(2^tt^{\Oh(\log(t))}\log(n)f(n,k))$, where $f(n,k)$ is the time it takes to verify that a solution exists in the green partition and its neighbors are in the red partition.  If $f$ is polynomial in $n$ and $t$ is bounded by a function of $k$ then the algorithm runs in fpt-time.

We will denote an instance of p-OLSE or p-OLISE by the tuple $(G, H, \prec_G, \prec_H, L, k)$. We shall call an injective map $\phi$ satisfying conditions (1)-(3) in the definition of p-OLSE and p-OLISE (given in Section~\ref{sec:intro}) for some subgraph $S$ of $G$, a {\em valid list embedding}, or simply a {\em valid embedding}. Constraint (3) will be referred to as the {\em embedding constraint} (note that constraint (3) is different in the two problems). We define the {\em width} of $L$, denoted $\Delta_L$, as $\max\{|L(u)| : u \in G\}$. It is often more convenient to view/represent the map $L$ as a set of edges joining every vertex $u \in G$ to the vertices of $H$ that are in $L(u)$.

\section{Classical Complexity Results}\label{sec:complexity}
In this section we consider p-OLSE and p-OLISE, and their optimization versions, with respect to classical complexity. Consider the restrictions of the opt-OLSE and opt-OLISE problems to instances in which $\Delta_G = \Delta_H=0$ (for p-OLSE we can even assume that $\Delta_H = \infty$ as the edges in $H$ do not play any role when $\Delta_G=0$). This version of the problem can be easily shown to be solvable in polynomial time by dynamic programming. The algorithm is very similar to that of the {\sc Longest Common Subsequence} problem; so we only give the recursive definition of the solution necessary for the dynamic programming approach.

\begin{proposition}\label{prop:noedges}
The opt-OLSE problem restricted to instances in which $\Delta_G=0$ and  $\Delta_H=\infty$, and the opt-OLISE problem restricted to instances in which $\Delta_G= \Delta_H=0$ are solvable in $O(|V(G)| \cdot |V(H)|)$ time (and hence are in $\Pol$).
\end{proposition}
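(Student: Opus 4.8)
The plan is to reduce both restricted problems to a single order-constrained matching problem and then solve that by a longest-common-subsequence–style dynamic program. First observe that when $\Delta_G=0$ the graph $G$ has no edges, so constraint (3) is vacuous for p-OLSE; likewise, when $\Delta_G=\Delta_H=0$ in the p-OLISE case both sides of the ``if and only if'' in constraint (3) are always false. Hence in both restrictions a valid embedding of a subgraph $S$ of $G$ is nothing more than an injective, order-preserving partial map $\phi$ from $V(G)$ to $V(H)$ that respects the lists, i.e.\ $\phi(u)\in L(u)$ for every $u$ in its domain $S$, and $\phi(u)\prec_H\phi(u')$ whenever $u\prec_G u'$ and $u,u'\in S$. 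So opt-OLSE/opt-OLISE asks for such a $\phi$ whose domain $S$ is as large as possible.

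Next I would record the structural observation that drives the dynamic program. Write $v_1\prec_G\cdots\prec_G v_n$ for the vertices of $G$ in order and $w_1\prec_H\cdots\prec_H w_m$ for those of $H$. If $S=\{v_{i_1}\prec_G\cdots\prec_G v_{i_k}\}$, then order-preservation forces $\phi(v_{i_1})\prec_H\cdots\prec_H\phi(v_{i_k})$, which already implies injectivity; so, writing $\phi(v_{i_\ell})=w_{j_\ell}$, we must have $i_1<\cdots<i_k$ and $j_1<\cdots<j_k$ with $w_{j_\ell}\in L(v_{i_\ell})$ for each $\ell$, and conversely any such pair of increasing index sequences yields a valid embedding. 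Thus we want the longest pair of increasing sequences that is ``compatible'' under the relation $v_i\sim w_j \iff w_j\in L(v_i)$.

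Now define $D[i][j]$, for $0\le i\le n$ and $0\le j\le m$, to be the maximum size of the domain of a valid partial embedding that uses only $\{v_1,\dots,v_i\}$ as preimages and only $\{w_1,\dots,w_j\}$ as images. Set $D[0][j]=D[i][0]=0$, and for $i,j\ge 1$ use the recurrence
$$D[i][j]=\max\Big\{\,D[i-1][j],\ D[i][j-1],\ \big[\,w_j\in L(v_i)\,\big]\cdot\big(1+D[i-1][j-1]\big)\,\Big\},$$
where $[\cdot]$ denotes the Iverson bracket. The justification is the standard LCS argument applied to the compatibility relation: in an optimal solution for $(i,j)$, either $v_i$ is unused (then $D[i-1][j]$ suffices), or $w_j$ is unused as an image (then $D[i][j-1]$ suffices), or $v_i$ is mapped to $w_j$ — which is possible only if $w_j\in L(v_i)$, and then by the increasing-index structure all other preimages lie in $\{v_1,\dots,v_{i-1}\}$ and all other images in $\{w_1,\dots,w_{j-1}\}$, giving $1+D[i-1][j-1]$. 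These three cases are exhaustive and each is achievable, so the maximum is correct; the optimum of opt-OLSE (resp.\ opt-OLISE) on the restricted instance is $D[n][m]$, and a maximum subgraph together with its embedding is recovered by the usual traceback.

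For the running time, a single $O(|V(G)|\cdot|V(H)|)$ preprocessing pass marks, for each pair $(v_i,w_j)$, whether $w_j\in L(v_i)$ (the total size of all lists is at most $|V(G)|\cdot|V(H)|$), after which each of the $(n{+}1)(m{+}1)$ cells is computed in $O(1)$ time, for $O(|V(G)|\cdot|V(H)|)$ overall. The only step that needs genuine care is the correctness of the recurrence — in particular verifying that order-preservation makes injectivity automatic and collapses the task to a two-index subsequence problem, so that indexing the table by the two prefix lengths $i$ and $j$ alone, with no further memory of the partial solution, is sound; I expect this to be the main (though still routine) obstacle.
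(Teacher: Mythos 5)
Your proposal is correct and follows essentially the same route as the paper: reduce to an order-preserving list matching (noting that constraint (3) is vacuous in both restrictions) and solve it with an LCS-style dynamic program over prefixes of the two vertex orders, in $O(|V(G)|\cdot|V(H)|)$ time. The only cosmetic difference is that the paper's recurrence commits to matching $u_i$ with $v_j$ whenever $v_j\in L(u_i)$ (justified by the standard exchange argument), whereas you take the maximum over all three cases, which sidesteps that argument; both are valid.
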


\begin{proof}
Assume that the vertices of $G$ are ordered as $u_1, \ldots, u_n$ with respect to $\prec_G$, and that the vertices of $H$ are ordered as $v_1, \ldots v_N$ with respect to $\prec_H$.
We maintain a two-dimensional table $T$, in which $T[i, j]$ is the maximum cardinality of a subset of $\{u_1, \ldots, u_i\}$ from which there exists a valid list embedding into $\{v_1, \ldots, v_j\}$. The entry $T[i, j]$ can be computed recursively as follows: $T[i, j] = 1+ T[i-1, j-1]$ if $v_j \in L(u_i)$, and $T[i, j] = \max\{T[i, j-1], T[i-1, j]\}$ otherwise. We can now use a standard dynamic programming approach to compute $T$ and to construct the subgraph $S$ and the embedding $\phi$. The running time of the algorithm is $O(|V(G)| \cdot |V(H)|)$.
\end{proof}

Proposition~\ref{prop:noedges} will be useful for Proposition~\ref{prop:apx} and Theorem~\ref{thm:fptvslose}.

As we show next, if $\Delta_G > 0$, the p-OLSE and p-OLISE problems become $\NP$-complete, even in the simplest case when $\Delta_G=1$, $\Delta_H=0$ and $\Delta_L =1$. The same proof shows the \NP-completeness
of p-OLISE when $\Delta_H=1$, $\Delta_G=0$ and $\Delta_L =1$ by symmetry.

\begin{theorem}
The p-OLSE and p-OLISE problems restricted to instances in which $\Delta_H =0$, $\Delta_G = 1$, and $\Delta_L = 1$ are $\NP$-complete.
\end{theorem}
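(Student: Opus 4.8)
The plan is to establish membership in $\NP$, which is immediate --- a valid list embedding $(S,\phi)$ is a certificate of polynomial size and conditions (1)--(3) are checkable in polynomial time --- and then to prove $\NP$-hardness by a polynomial-time reduction from $k$-{\sc Multi-Colored Independent Set} ($k$-MCIS): given a graph $G'$ whose vertices are partitioned into color classes $V_1,\dots,V_k$, decide whether $G'$ has an independent set containing exactly one vertex from each class.

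First I would record what the three degree bounds force. Since $\Delta_H=0$ means $E(H)=\varnothing$, condition (3) --- in the p-OLSE form \emph{and} in the ``iff'' p-OLISE form --- can hold for a subgraph $S$ only if $S$ is an independent set of $G$; and since $\Delta_G=1$, $G$ is a matching, so an independent set of $G$ just uses at most one endpoint of each edge. Since $\Delta_L=1$, the image $\phi(u)$ of each $u\in S$ is forced (the unique element of $L(u)$), so a valid embedding is completely determined by $S$. Hence in this regime both problems ask exactly the same question: \emph{choose $k$ vertices of the matching $G$, at most one per edge, whose forced images in $H$ are pairwise distinct and occur in $\prec_H$-order agreeing with their $\prec_G$-order.}

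Given $(G',V_1,\dots,V_k)$, the reduction builds such an instance. Write $t_i=|V_i|$, $V_i=\{v_{i,1},\dots,v_{i,t_i}\}$, and $D=\max_{w}\deg_{G'}(w)$. For each $v\in V(G')$ I create $D$ \emph{copies} in $G$: one \emph{real} copy $a_v^{(e)}$ for each edge $e$ of $G'$ incident to $v$, plus $D-\deg_{G'}(v)$ \emph{dummy} copies with no incident edge; each copy receives its own private vertex of $H$ as its singleton list, and $H$ is given no edges, so $\Delta_H=0$ and $\Delta_L=1$. For each edge $e=\{u,v\}$ of $G'$ I add to $G$ the single matching edge $a_u^{(e)}a_v^{(e)}$; since each copy handles at most one edge of $G'$, $\Delta_G=1$. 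The heart of the construction is the two linear orders. In $\prec_G$ the copies are listed class by class, and within the block of class $i$ the copies of $v_{i,1}$ come first, then those of $v_{i,2}$, and so on (each vertex's copies consecutive, in a fixed internal order). In $\prec_H$ the targets are listed in the same class-by-class order, but \emph{within} class $i$ the vertex order is \emph{reversed} (targets of $v_{i,t_i}$ first, $\dots$, targets of $v_{i,1}$ last), the internal order of a single vertex's targets being kept parallel to that of its copies. This yields: (i) all $D$ copies of one vertex are mutually order-consistent; (ii) a copy of $v_{i,a}$ and a copy of $v_{i,b}$ with $a\neq b$ can never both occur in a valid embedding (the copy of the $\prec_G$-earlier vertex has the $\prec_H$-later target); and (iii) copies in different classes never conflict. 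Setting $k'=kD$, I would prove that $G'$ has a multi-colored independent set iff the instance admits a valid embedding of size $kD$. For ($\Rightarrow$): select all $D$ copies of each of the $k$ chosen vertices; order-consistency holds by (i) and (iii), the targets are distinct, and no matching edge lies inside the selection because the chosen set is independent in $G'$. For ($\Leftarrow$): by (ii) a valid embedding of size $kD$ must, within each class, be exactly the $D$ copies of a single vertex $v_i^\ast\in V_i$, and the absence of a matching edge inside forces $\{v_1^\ast,\dots,v_k^\ast\}$ to be independent in $G'$. The identical instance works for p-OLISE (with $E(H)=\varnothing$, condition (3) there again amounts to ``$S$ independent in $G$''), and the symmetric case of p-OLISE noted before the theorem follows by interchanging the roles of $G$ and $H$.

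The step I expect to be the main obstacle is the design and verification of the pair $(\prec_G,\prec_H)$: one must secure properties (i)--(iii) simultaneously and, above all, exclude \emph{unintended} valid selections of size $kD$ --- e.g.\ one mixing copies of several vertices within a class, or exploiting some cross-class crossing of the orders. This is the delicate point because an arbitrary conflict pattern is not realizable by a single pair of linear orders (those realize only permutation graphs); the construction circumvents this by letting the orders carry only the simple within-class and between-class bookkeeping, while the matching of $G$ carries the genuine adjacencies of $G'$. Verifying rigorously that these two mechanisms do not interact in unforeseen ways is the bulk of the argument.
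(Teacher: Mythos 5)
Your reduction is correct and is essentially the paper's own argument: both reduce from $k$-MCIS by duplicating each vertex into a block of degree-$\le 1$ copies (the paper uses $kN$ copies per vertex, you use $D$ copies padded with dummies), encode the edges of the source graph as a matching between corresponding copies, and reverse the within-class order between $\prec_G$ and $\prec_H$ so that only one vertex's copies per class can be simultaneously embedded. The verification of properties (i)--(iii) that you flag as the delicate step goes through exactly as you outline, matching the paper's analysis.
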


\begin{proof}
We first present the proof for p-OLSE.

It is easy to see that p-\textsc{OLSE} $\in \NP$, so it suffices to show that
p-\textsc{OLSE} is $\NP$-hard.  We do so by providing a polynomial time
reduction from the \textsc{$k$-Multi-Colored Independent Set ($k$-MCIS)} problem:
decide whether for a given graph $M = (V(M),E(M))$ and a proper $k$-coloring of the
vertices $f : V(M) \longrightarrow C$, where $C = \{1,2,..., k\}$ and each color class has the same cardinality, there exists an independent
set $I \subseteq V(M)$ of size $k$ such that, $\forall u, v \in I$, $f(u) \ne f(v)$~\cite{Hartung2013}.

Let $(M, f)$ be an instance of \textsc{$k$-MCIS}.
We assume that the set $C_i$, $i=1, \ldots, k$, of vertices that are mapped to color $i \in C$
is of size $N$, and we label the vertices of $M$ $u_1, \dots, u_{kN}$ such that
$C_i = \{u_{(i-1)N + 1}, \dots, u_{iN}\}$ (see Figure~1). We
describe how we construct the corresponding instance $(G, H, \prec_G,
\prec_H, L, k)$ of p-\textsc{OLSE}.

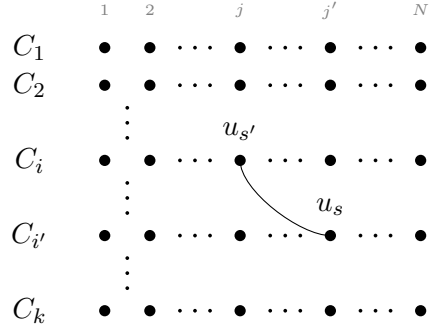
\begin{figure}
\label{fi:kMCIS}
\begin{center}
\begin{tikzpicture}
[vertex/.style={fill,circle,inner sep = 1.5pt},
dot/.style={fill,circle,inner sep = 0.5pt}]

\node at (0,4) [color=gray] {\tiny$1$};
\node at (0.6,4) [color=gray] {\tiny$2$};
\node at (1.8,4) [color=gray] {\tiny$j$};
\node at (3,4) [color=gray] {\tiny$j'$};
\node at (4.2,4) [color=gray] {\tiny$N$};

\foreach \y in {0,1,2,3,3.5}
{
  \foreach \x in {0,0.6,1.8, 3,4.2}
  {
    \node at (\x, \y) [vertex] {};
  }
  \draw (1,\y) node[dot] {};
  \draw (1.2,\y) node[dot] {};
  \draw (1.4,\y) node[dot] {};
  \draw (2.2,\y) node[dot] {};
  \draw (2.4,\y) node[dot] {};
  \draw (2.6,\y) node[dot] {};
  \draw (3.4,\y) node[dot] {};
  \draw (3.6,\y) node[dot] {};
  \draw (3.8,\y) node[dot] {};
}
\node at (-1,3.5) {$C_1$};
\node at (-1,3) {$C_2$};
  \draw (0.3,2.7) node[dot] {};
  \draw (0.3,2.5) node[dot] {};
  \draw (0.3,2.3) node[dot] {};
\node at (-1,2) {$C_i$};
  \draw (0.3,1.7) node[dot] {};
  \draw (0.3,1.5) node[dot] {};
  \draw (0.3,1.3) node[dot] {};
\node at (-1,1) {$C_{i'}$};
  \draw (0.3,.7) node[dot] {};
  \draw (0.3,.5) node[dot] {};
  \draw (0.3,.3) node[dot] {};
\node at (-1,0) {$C_k$};

\draw (1.8,2) node[label=90:{$u_{s'}$}] {};
\draw (3,1) node[label=90:{$u_s$}] {};
\draw (1.8,2) .. controls +(-90:0.4cm) and +(180:0.4cm) .. (3,1);
\end{tikzpicture}
\end{center}
\caption{An instance of \textsc{$k$-MCIS} consists of $kN$ vertices partitioned
into $k$ color classes $C_1, \dots, C_k$ each of size $N$. All edges are
between vertices in different color classes. The edge $(u_{s'}, u_s)$, for
example, connects vertices $u_{s'}$ and $u_s$ where $s' = (i-1)N+j$ and
$s=(i'-1)N+j'$.}
\end{figure}

To construct $G = (V(G),E(G))$, we associate with every color class $C_i$ a
block $B_i$ of vertex sequences $b_i^1,\dots,b_i^N$ such that each $b_i^j$, $j=1, \ldots, N$,
is a sequence of $kN$ vertices $u_{i,1}^j,\dots,u_{i,kN}^j$. $V(G)$ is then
the set of all vertices $u_{i,s}^j$ for $i = 1,\dots,k$, $j=1,\dots,N$, and
$s = 1,\dots,kN$. The ordering $\prec_G$ of vertices in $G$ is defined as
follows: for all $u_{i,s}^j, u_{i',s'}^{j'} \in G$
$$
u_{i,s}^j \prec_G u_{i',s'}^{j'} \iff
i < i' \mbox{ or } (i = i' \mbox{ and } j < j') \mbox{, or }
(i=i' \mbox{ and } j=j' \mbox{ and } s < s').
$$
(See Figure~2-(a).) Lastly we define the set $E(G)$ as follows: $(u_{i,s}^j, u_{i',s'}^{j'})$
is an edge in $E(G)$ if and only if $i < i'$, $s = (i'-1)N+j'$, $s'=(i-1)N+j$,
and $(u_s, u_{s'}) \in E(M)$. (See Figure~2-(b).)

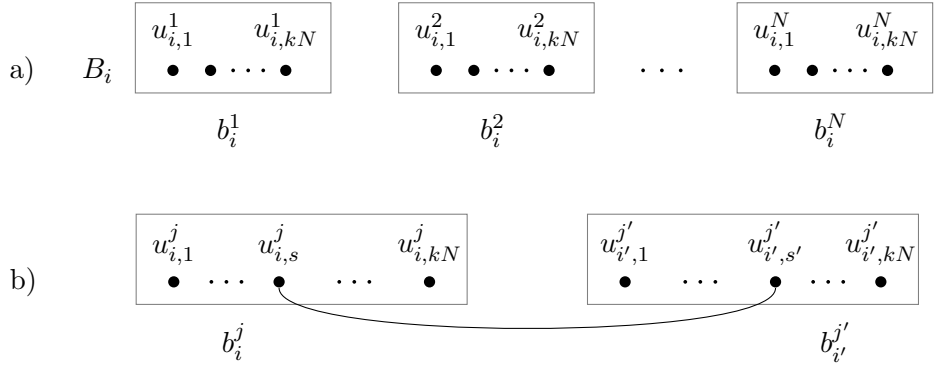
\begin{figure}
\label{fig100}
\begin{tikzpicture}
[vertex/.style={fill,circle,inner sep = 1.5pt},
dot/.style={fill,circle,inner sep = 0.5pt}]

\node at (-2, 0) {a)};
\node at (-1, 0) {$B_i$};

\foreach \x in {0,3.5,8}
{
  \draw [color=gray] (\x-0.5,-0.3) rectangle (\x+2.1,0.9);
  \foreach \e in {0, 0.5, 1.5}
  {
    \node at (\x+\e, 0) [vertex] {};
  }
  \draw (\x+0.8,0) node[dot] {};
  \draw (\x+1,0) node[dot] {};
  \draw (\x+1.2,0) node[dot] {};
}

\draw (6.25,0) node[dot] {};
\draw (6.5,0) node[dot] {};
\draw (6.75,0) node[dot] {};

\node at (0,0) [label=90:{$u^1_{i,1}$}] {};
\node at (1.5,0) [label=90:{$u^1_{i,kN}$}] {};

\node at (3.5,0) [label=90:{$u^2_{i,1}$}] {};
\node at (5,0) [label=90:{$u^2_{i,kN}$}] {};

\node at (8,0) [label=90:{$u^N_{i,1}$}] {};
\node at (9.5,0) [label=90:{$u^N_{i,kN}$}] {};

\node at (0.75,-0.8) {$b_i^1$};
\node at (4.25,-0.8) {$b_i^2$};
\node at (8.75,-0.8) {$b_i^N$};
\end{tikzpicture}

\vspace{0.7cm}
\begin{tikzpicture}
[node/.style={fill,circle,inner sep = 1.5pt},
dot/.style={fill,circle,inner sep = 0.5pt}]

\node at (-4,2) {b)};

\draw (-2,2) node[node,label=90:{$u^j_{i,1}$}] {};
\draw (-1.5,2) node[dot] {};
\draw (-1.3,2) node[dot] {};
\draw (-1.1,2) node[dot] {};
\draw (-0.6,2) node[node,label=90:{$u^j_{i,s}$}] (u) {};
\draw (0.2,2) node[dot] {};
\draw (0.4,2) node[dot] {};
\draw (0.6,2) node[dot] {};
\draw (1.4,2) node[node,label=90:{$u^j_{i,kN}$}] {};

\draw [color=gray] (-2.5,1.7) rectangle (1.9,2.9) {};
\node at (-1.2,1.2) {$b_i^j$};

\draw (4,2) node[node,label=90:{$u^{j'}_{i',1}$}] {};
\draw (4.8,2) node[dot] {};
\draw (5,2) node[dot] {};
\draw (5.2,2) node[dot] {};
\draw (6,2) node[node,label=90:{$u^{j'}_{i',s'}$}] (up) {};
\draw (6.5,2) node[dot] {};
\draw (6.7,2) node[dot] {};
\draw (6.9,2) node[dot] {};
\draw (7.4,2) node[node,label=90:{$u^{j'}_{i',kN}$}] {};

\draw [color=gray] (3.5,1.7) rectangle (7.9,2.9) {};
\node at (6.8,1.2) {$b_{i'}^{j'}$};

\draw (u) .. controls +(-90:0.8cm) and +(-90:0.8cm) .. (up);

\end{tikzpicture}
\caption{a) Associated with every color class $C_i$ is a block $B_i$ of vertex
sequences $b_i^1,\dots,b_i^N$ such that each $b_i^j$ is a sequence of $kN$
vertices $u_{i,1}^j,\dots,u_{i,kN}^j$. The vertex sequences $b_i^1,\dots,b_i^N$
correspond to the $N$ vertices of $C_i$. b) $(u_{i,s}^j, u_{i',s'}^{j'})$
is an edge in $E(G)$ if and only if $s = (i'-1)N+j'$, $s'=(i-1)N+j$,
and $(u_s, u_{s'}) \in E(M)$.}
\end{figure}

To construct $H = (V(H), E(H))$, we again associate with every color class
$C_i$ a block $B'_i$ of vertex sequences $b_i^{'1},\dots,b_i^{'N}$ such that each
$b_i^{'j}$ is a sequence of $kN$ vertices $v_{i,1}^{j},\dots,v_{i,kN}^{j}$.
$V(H)$ is the set of all vertices $v_{i,r}^{j}$ for $i = 1,\dots,k$,
$j=1,\dots,N$, and $r = 1,\ldots,kN$. The ordering $\prec_H$ of vertices
in $H$ is defined just as it was in $G$.
The edge set $E(H)$ is simply $\emptyset$: in this way, for any valid list embedding
$\phi : V(S) \to V(H)$, where $S$ is a subgraph of $G$, $V(S)$ must be an independent
set of $G$.

We complete the construction of the instance of p-\textsc{OLSE} by defining
the mapping $L: G \to 2^{V(H)}$. $L$ is a bijection that maps the vertices in each block
$B_i \in G$ to the block $B'_i \in H$ as follows:
$$
\forall u_{i,s}^j \in b_i^{j} \mbox{, } L(u_{i,s}^j) = \{v_{i,s}^{(N-j+1)}\}.
$$
This mapping is illustrated in Figure~3.

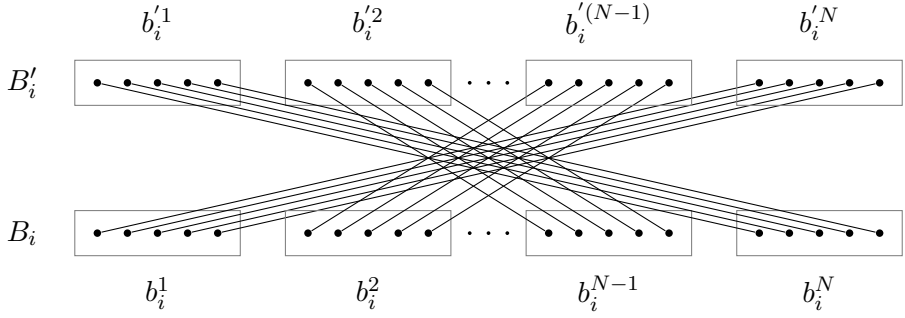
\begin{figure}
\label{fi:map}
\begin{tikzpicture}
[vertex/.style={fill,circle,inner sep = 1pt},
dot/.style={fill,circle,inner sep = 0.5pt}]

\node at (-1, 0) {$B_i$};
\node at (-1, 2) {$B_i'$};

\foreach \x in {0,2.8, 6,8.8}
{
  \draw [color=gray] (\x-0.3,-0.3) rectangle (\x+1.9,0.3);
  \draw [color=gray] (\x-0.3,1.7) rectangle (\x+1.9,2.3);
  \foreach \e in {0, 0.4, 0.8, 1.2, 1.6}
  {
    \node at (\x+\e, 0) [vertex] (h) {};
    \node at (8.8-\x+\e, 2) [vertex] (l) {};
    \draw (h) -- (l);
  }
}

\draw (4.95,2) node[dot] {};
\draw (5.2,2) node[dot] {};
\draw (5.45,2) node[dot] {};
\draw (4.95,0) node[dot] {};
\draw (5.2,0) node[dot] {};
\draw (5.45,0) node[dot] {};

\node at (0.8,-0.8) {$b_i^1$};
\node at (0.8,2.8) {$b_i^{'1}$};

\node at (3.6,-0.8) {$b_i^2$};
\node at (3.6,2.8) {$b_i^{'2}$};

\node at (6.8,-0.8) {$b_i^{N-1}$};
\node at (6.8,2.8) {$b_i^{'(N-1)}$};

\node at (9.6,-0.8) {$b_i^N$};
\node at (9.6,2.8) {$b_i^{'N}$};
\end{tikzpicture}
\caption{The mapping $L$ maps vertices of $B_i$ to vertices of $B_i'$. More
precisely, it maps vertices of $b_i^j$ to vertices of $b_i^{'N-j+1}$ in a way
that forbids an embedding that would map a vertex in $b_i^j$ and a vertex in
$b_i^{j'}$ simultaneously, for any $i=1,\dots,k$, $j,j'=1,\dots,N$, $j \not= j'$.}
\end{figure}

This completes the construction of the instance of p-OLSE. Observe that in the constructed instance we have $\Delta_H=1$, $\Delta_G=1$, and $\Delta_L=1$. We show next that $M$ has an independent set of size $k$ if and only if the constructed instance of p-OLSE has a subgraph $S$ of $G$ of size $k^2N$, and a valid embedding
$\phi: V(S) \to V(H)$.

Let $I$ be solution to the instance $(M, f)$ of \textsc{$k$-MCIS}. Each vertex $u_s \in I$ corresponds to a
sequence $b_i^j \in B_i$ where $s = (i-1)N+j$.
For any two sequences $b_i^j \in B_i, b_{i'}^{j'} \in B_{i'}$, there is an edge
between a vertex of $b_i^j$ and a vertex of $b_{i'}^{j'}$ if and only if
$(u_s, u_{s'}) \in E(M)$ for $s' = (i-1)N+j$ and $s = (i'-1)N+j'$ ($i < i'$). Since for
every pair $u_s, u_{s'} \in I, (u_s, u_{s'}) \notin E(M)$ and since
$f(u_s) \ne f(u_{s'})$, there are $k$ sequences $b_i^j$ with the property
that for any two vertices $u_{i,s}^j \in b_i^j$ and
$u_{i',s'}^{j'} \in b_{i'}^{j'}$, $(u_{i,s}^j,u_{i',s'}^{j'}) \notin E(G)$.
By construction of $G$, the vertices of any two sequences $b_i^j, b_{i'}^{j'}$
($i \ne i'$) that do not have edges between them are simultaneously embeddable in $H$. Therefore,
we can define a valid embedding $\phi : V(S) \to V(H)$, where $|S| = k \cdot (kN) =k^2N$, and $S$ consists of all the vertices in each $b_i^j$
that corresponds to a vertex in $I$.

Conversely, let $\phi: V(S) \to V(H)$, $|S| = k^2N$, be a valid embedding
from a subgraph $S$ of $G$ into $H$.  By construction of $L$, $\phi$ can embed at most $kN$
vertices from any block $B_i$ into $H$ and, furthermore, all vertices
in $B_i$ embedded by $\phi$ must belong to the same sequence $b_i^j$ of
$B_i$. Therefore, exactly $k$ sequences, one from each of the $k$ blocks, are fully embedded by $\phi$. Moreover, because $\Delta_H = 0$, $\phi$ embeds all $kN$ vertices of
the $k$ sequences $b_i^j$ in $S$ if and only if for any two vertices
$u_{i,s}^j, u_{i',s'}^{j'} \in S, (u_{i,s}^j, u_{i',s'}^{j'}) \notin E(G)$.
By construction of $G$, each sequence $b_i^j$ corresponds to the $j^{th}$
vertex of the color class $C_i$. Therefore, we can find a subset
$I \subseteq M, |I| = k$ such that for any two vertices $u_s, u_{s'} \in I,
f(u_s) \ne f(u_{s'})$ and $(u_s, u_{s'}) \notin E(M)$; that is, the set
$I$ is a solution to the instance $(M, f)$ of \textsc{$k$-MCIS}.

 For p-OLISE, since $\Delta_H =0$, the same reduction given above can be used to show its $\NP$-hardness when $\Delta_H=0$, $\Delta_G=1$ and $\Delta_L=1$.  To show that p-OLISE is \NP-hard when $\Delta_H=1$, $\Delta_G=0$ and $\Delta_L=1$ we need only tweak the construction of the p-OLISE instance as follows.  We construct $H$ as we constructed $G$ in the previous case (with edges), and we construct $G$ as we constructed $H$ in the previous case (without edges). The \NP-hardness when $\Delta_H=0$, $\Delta_G=1$ and $\Delta_L=1$ follows by symmetry.
\end{proof}

We contrast this hardness result with a result for a version of the problem where the injective map need not preserve the ordering of the vertices.  First, we define the parameterized {\sc List Subgraph Embedding} problem (p-LSE):

\paramproblem{} {Two graphs $G$ and $H$; a function $L : V(G) \longrightarrow 2^{V(H)}$; and $k \in \nat$}{$k$}{Is there a subgraph $S$ of $G$ of $k$ vertices and an injective map $\phi: V(S) \longrightarrow V(H)$ such that: (1) $\phi(u) \in L(u)$ for every $u \in S$ and (2) for every $u, u' \in S$, if $uu' \in E(G)$ then $\phi(u)\phi(u') \in E(H)$} \\

\begin{proposition}\label{prop:pLSE-in-P}
The p-LSE problem restricted to instances where $\Delta_G=1$, $\Delta_H=0$, and $\Delta_L=1$ is in $\Pol$.
\end{proposition}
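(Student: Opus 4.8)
The plan is to show that, under these restrictions, p-LSE collapses to bipartite maximum matching. First I would record the structural consequences of the bounds. Since $\Delta_H=0$ we have $E(H)=\emptyset$, so condition~(2) of p-LSE forces $V(S)$ to induce no edge of $G$: if $u,u'\in V(S)$ and $uu'\in E(G)$ then we would need $\phi(u)\phi(u')\in E(H)=\emptyset$, a contradiction. Hence $V(S)$ must be an independent set of $G$, and since $\Delta_G=1$ this is equivalent to saying $|V(S)\cap e|\le 1$ for every edge $e$ of $G$. Next, since $\Delta_L=1$, for every $u\in V(S)$ the value $\phi(u)$ is forced to be the unique element of $L(u)$ (in particular $L(u)\neq\emptyset$), so the only freedom is the choice of $V(S)$, and injectivity of $\phi$ becomes the requirement that no two vertices of $V(S)$ have the same singleton list. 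Finally, both requirements ("independent in $G$'' and "pairwise distinct lists'') are closed under taking subsets, so a valid $S$ on exactly $k$ vertices exists if and only if there is a vertex set satisfying both requirements of size at least $k$; it therefore suffices to compute the maximum size of such a set.

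Second, I would model this maximum set as a bipartite matching. Discard every vertex of $G$ with an empty list. Build a bipartite graph $B$: one side has one node per edge of $G$ and one node per remaining isolated vertex of $G$ (equivalently, one node per block of the partition of $V(G)$ into the edges of $G$ and the singletons, each block of size at most $2$); the other side has one node per vertex of $H$. For each remaining $u\in V(G)$ with $L(u)=\{v\}$, add to $B$ an edge between the block containing $u$ and the node $v$. A matching of size $t$ in $B$ is exactly a choice of $t$ vertices of $G$ lying in pairwise distinct blocks (hence forming an independent set of $G$) and having pairwise distinct list values (hence yielding an injective forced $\phi$); conversely any set of $t$ vertices with these two properties gives a matching of size $t$ in $B$. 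So the maximum matching number of $B$ equals the maximum size of a valid vertex set, and the algorithm outputs \YES\ iff this number is at least $k$. Since $B$ has at most $|V(G)|+|V(H)|$ nodes and at most $|V(G)|$ edges, and bipartite maximum matching is computable in polynomial time, this places the restricted p-LSE in $\Pol$.

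The only real content is the first paragraph's observation that the two constraints are both partition-type: one partition of $V(G)$ comes from the matching $G$, the other from the fibres of the list map $L$; thus the feasible vertex sets are precisely the common independent sets of two partition matroids, i.e., the matchings of the bipartite graph $B$. (Equivalently one could phrase the reduction as matroid intersection of two partition matroids.) I do not expect a genuine obstacle. The mild points to be careful about are bookkeeping for vertices with empty lists and for edges of $G$ both of whose endpoints have empty lists (these correspond to isolated nodes of $B$ and are simply ignored), allowing $B$ to be a multigraph or deduplicating edges when two endpoints of an edge of $G$ share the same singleton list, and justifying the passage from "at least $k$'' to "exactly $k$'' via monotonicity of both feasibility constraints under vertex deletion.
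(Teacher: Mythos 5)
Your proof is correct, but it takes a genuinely different route from the paper's. You observe that under $\Delta_H=0$, $\Delta_G=1$, $\Delta_L=1$ the feasible vertex sets are exactly the common independent sets of two partition matroids (one induced by the edges/singletons of $G$, one by the fibres $L^{-1}(v)$), and hence correspond to matchings of an auxiliary bipartite graph; maximum bipartite matching then decides the question, with monotonicity handling the passage from ``maximum'' to ``exactly $k$.'' The paper instead works directly on the optimization version with a bespoke greedy argument: it forms the sets $L^{-1}(v)$, defines an adjacency relation and a notion of cycles among them, and applies a sequence of reduction rules (isolated or internally-adjacent vertices, cycles of subsets, then leaf-subsets of the remaining tree-like structure), each justified by an exchange/safeness argument. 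Your reduction is shorter and sidesteps the case analysis and the safeness claims entirely, at the cost of invoking a matching algorithm rather than running in essentially linear time; it also makes transparent why the result holds (two partition-type constraints) and would extend verbatim to $G$ a disjoint union of cliques. It is worth noting that your approach is very much in the spirit of the paper's own proof of the analogous unordered result for p-LISE (Proposition~\ref{prop:unordered_is_in_P}), which also reduces to (weighted) bipartite matching, so your argument arguably unifies the two. The bookkeeping points you flag (empty lists, deduplication of parallel edges) are handled correctly and are the only places where care is needed.
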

\begin{proof}
It suffices to prove that the optimization version opt-LSE of p-LSE is in $\Pol$.

For each vertex $v \in H$ define $L^{-1}(v) = \{u \in G : L(u) = v\}$. Note that for each $v \in H$, $L^{-1}(v)$ defines a subset of vertices in $V(G)$ from which only one vertex can be included in the solution.  Let $\mathcal{S} = \{S_i : i = 1,\cdots,|V(H)|\}$ be the set of subsets of $V(G)$ where $S_i = L^{-1}(v_i), v_i \in V(H)$. Since $\Delta_L=1$, the elements of $\mathcal{S}$ are pairwise disjoint. We say that two subsets $S_i$ and $S_j$, $i \neq j$, are {\em adjacent} if there exists an edge $uu' \in E(G)$ where $u \in S_i$ and $u' \in S_j$.  We define a {\em cycle} of subsets to be a sequence of distinct subsets $C = \langle S_1, \ldots, S_l \rangle$, $\l > 2$, such that $S_i$ and $S_{i+1}$ are adjacent, for $i=1, \ldots, l-1$, and $S_l$ and $S_1$ are adjacent; or a sequence of two subsets $C = \langle S_i, S_j \rangle$ such that there are two distinct vertices $u_i, u'_i$ in $S_i$ and two distinct vertices $u_j, u'_j$ in $S_j$ such that $u_i$ is adjacent to $u_j$ and $u'_i$ is adjacent to $u'_j$.  Using these constructs, we can define the following rules to simplify the problem:

\begin{enumerate}[(i)]
  \item For each subset $S_i \in \mathcal{S}$ with a vertex $u$ such that $|N_{G}(u)| = 0$, or $u$ has a neighbor in $S_i$, we add $u$ to the solution $S$ and we define $\phi(u) = v_i$. We remove the vertices in $S_i$ from $G$ and vertex $v_i$ from $H$.

  \item For each cycle of subsets $C = \langle S_1, \ldots, S_l \rangle$, if $l > 2$, choose two vertices $u_i, u_i'$ in each $S_i$, $i=1, \ldots, l$ such that $u_i'$ is adjacent to $u_{i+1}$, for $i=1, \ldots, l-1$, and $u'_l$ is adjacent to $u_1$. Add $u_i$ to the solution $S$, define $\phi(u_i)=v_i$, and remove the vertices in $S_i$ from $G$ and $v_i$ from $H$, for $i=1, \ldots, l$. If $l=2$, let $u_1, u'_1 \in S_1$ and $u_2, u'_2 \in S_2$ such that $u_1$ is adjacent to $u_2$ and $u'_1$ is adjacent to $u'_2$. Add $u_1, u'_2$ to the solution $S$, define $\phi(u_1)=v_1$, $\phi(u'_2)=v_2$, and remove the vertices in $S_1 \cup S_2$ from $G$ and $v_1, v_2$ from $H$.
\end{enumerate}

Clearly, each of the above rules is safe in the sense that it is possible to extend the solution obtained after each application of the rule to an optimal solution. After the above rules are no longer applicable, we are left with a graph $G$ (we call the resulting graph $G$ for simplicity) where the remaining subsets in $\mathcal{S}$ (considered as nodes), together with their adjacency relation, form a tree-like structure.  We say that a subset $S_i \in \mathcal{S}$ is a {\em leaf-subset} (in $\mathcal{S}$) if it is adjacent to exactly one other subset in $\mathcal{S}$; observe that every subset $S_i$ must be adjacent to at least one other subset in $\mathcal{S}$ by rule (i). Since each leaf-subset is adjacent to one subset in $\mathcal{S}$, each leaf-subset must consist of exactly one vertex for the following reasons. First, by rule (i), each vertex in a leaf-subset $S_i$ must have a neighbor in the (single) subset $S_j$ that is adjacent to $S_i$. By rule (ii), $S_i$ must consist of exactly one vertex; otherwise, rule (ii) would apply to $S_i$ and $S_j$. To complete the construction of our solution, we iteratively apply a new rule, rule (iii), that includes the vertex $u_i \in S_i$ in our solution $S$, for each leaf-subset $S_i$, defines $\phi(u_i) = v_i$, and removes the vertices in $S_i$ from $G$ and $v_i$ from $H$. Rule (iii) is safe because any optimal solution must contain exactly one of $u_i$ and its only neighbor $w$ in the subset adjacent to $S_i$, and in the case when the solution contains $w$, $w$ can be exchanged with $u_i$, and the map $\phi$ can be updated by defining $\phi(u_i)=v_i$. When rule (iii) is no longer applicable, the graph $G$ is empty, and the algorithm has computed an optimal solution.  Clearly, the running time is polynomial because each application of a rule removes at least one vertex from $G$ (and from $H$). \end{proof}

\section{Approximation Results}\label{subsec:approximation}
In this section we consider opt-OLSE and opt-OLISE with respect to the framework of approximation theory. We begin with the following proposition:

\begin{proposition}\label{prop:apx}
The opt-OLSE problem restricted to instances in which $\Delta_G=O(1)$ has an approximation algorithm of ratio $(\Delta_G+1)$, and the opt-OLISE problem restricted to instances in which $\Delta_G=O(1)$ and $\Delta_H=O(1)$ has an approximation algorithm of ratio $(\Delta_H+1) \cdot (\Delta_G+1)$.
\end{proposition}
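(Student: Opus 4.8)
The plan is to approximate opt-OLSE by first solving the edge-free relaxation exactly and then throwing away a bounded fraction of the vertices to repair the embedding constraint. Concretely, given an instance $(G, H, \prec_G, \prec_H, L, k)$ of opt-OLSE with $\Delta_G = O(1)$, I would build the instance $(G', H, \prec_G, \prec_H, L, k)$ where $G'$ is $G$ with all its edges deleted (so $\Delta_{G'} = 0$), and apply the dynamic-programming algorithm of Proposition~\ref{prop:noedges} to obtain, in polynomial time, a maximum-size subgraph $S_0$ of $G'$ together with a valid list embedding $\phi_0$ that respects the lists and the linear orders but ignores edges. Since $G'$ has no edges, $\phi_0$ is automatically a valid embedding for $S_0$ viewed as a subgraph of $G'$; the only thing that can go wrong when we view $S_0$ inside $G$ is that some pair $u, u' \in S_0$ with $uu' \in E(G)$ has $\phi_0(u)\phi_0(u') \notin E(H)$.

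The key step is a cleanup argument: let $S_0$ be the vertex set returned, and consider the graph $G[S_0]$, which has maximum degree at most $\Delta_G$. I would greedily pick a maximum independent set — or more cheaply, any maximal independent set — $I$ in $G[S_0]$; since $\Delta(G[S_0]) \le \Delta_G$, a standard greedy argument gives $|I| \ge |S_0| / (\Delta_G + 1)$. The restriction $\phi_0|_I$ is then a valid list embedding of $G[I]$ into $H$: conditions (1) and (2) are inherited from $\phi_0$, and condition (3) holds vacuously because $I$ is independent in $G$, hence in $G[S_0]$. Finally I would compare against the optimum: any optimal solution $S^*$ of opt-OLSE on the original instance is in particular a feasible solution of the edge-free instance $(G', \ldots)$, so $|S_0| \ge |S^*| = opt(x)$, whence $|I| \ge opt(x) / (\Delta_G + 1)$, giving the claimed ratio $\Delta_G + 1$. (If $|S^*| = k$ one may also truncate $I$ to size $k$, but for the optimization version this is not needed.)

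For opt-OLISE with $\Delta_G = O(1)$ and $\Delta_H = O(1)$ the same skeleton works, but the cleanup must also handle the ``only if'' direction of the induced-embedding constraint: after restricting to an independent set $I$ of $G[S_0]$, a pair $u, u' \in I$ is now guaranteed to be a non-edge of $G$, but $\phi_0(u)\phi_0(u')$ might be an edge of $H$, violating condition (3) for p-OLISE. To fix this I would build an auxiliary conflict graph $F$ on vertex set $I$ whose edges join pairs $u, u'$ with $\phi_0(u)\phi_0(u') \in E(H)$; because $\Delta_H = O(1)$ and $\phi_0$ is injective, each vertex of $I$ has at most $\Delta_H$ conflict-neighbors, so $\Delta(F) \le \Delta_H$, and again a maximal independent set $I'$ in $F$ has $|I'| \ge |I| / (\Delta_H + 1)$. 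Then $\phi_0|_{I'}$ embeds $G[I']$ as an induced subgraph into $H$: non-edges of $G$ map to non-edges of $H$ by the choice of $I'$, and $G[I']$ has no edges so the edge direction is vacuous. Chaining the two losses, $|I'| \ge opt(x) / ((\Delta_H+1)(\Delta_G+1))$, and since every opt-OLISE solution is also a feasible edge-free solution we get $|S_0| \ge opt(x)$, giving the stated ratio. The main obstacle is simply to verify carefully that the restricted map $\phi_0|_{I'}$ satisfies the induced-embedding condition in both directions and that the two independent-set extractions compose multiplicatively; everything else is the greedy $\Delta+1$ bound and invocation of Proposition~\ref{prop:noedges}.
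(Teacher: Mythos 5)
Your proposal is correct and follows essentially the same route as the paper: solve the edge-free relaxation exactly via Proposition~\ref{prop:noedges}, use its value as the upper bound on the optimum, and then extract an independent set greedily to lose only a factor of $\Delta_G+1$ (and, for opt-OLISE, a second factor of $\Delta_H+1$). The only cosmetic difference is that you perform the second pruning on a conflict graph over $I$ while the paper prunes directly in the image $\phi(I)\subseteq V(H)$; these are identical under the bijection $\phi_I$.
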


\begin{proof}
Let $(G, H, \prec_G, \prec_H, L)$ be an instance of opt-OLSE, and consider the following algorithm. Apply the dynamic programming algorithm in Proposition~\ref{prop:noedges} to $(G, H, \prec_G, \prec_H, L)$ after removing the edges of $G$ and the edges of $H$, and let $S$ and $\phi$ be the subgraph and map obtained, respectively. Apply the following trivial approximation algorithm to compute an independent set $I$ of $S$: pick a vertex $u$ in $S$, include $u$ in $I$, remove $u$ and $N(u)$ from $S$, and repeat until $S$ is empty.  Return the subgraph $G[I]=I$, and the restriction of $\phi$ to $I$, $\phi_I$. Clearly, we have $|I| \geq |S|/(\Delta_G + 1)$.

Since $\phi$ is a valid list embedding of $S$ after the edges of $G$ and $H$ have been removed, and since $I$ is an independent set of $G[S]$, it is clear that $\phi_I$ is a valid list embedding of $G[I]$ into $H$. Therefore, the algorithm is an approximation algorithm. Now let $S_{opt}$ be an optimal solution of the instance. $|S|$ is an upper bound on the size of an optimal solution $S_{opt}$ since $S_{opt}$ is a solution to opt-OLSE that respects the embedding constraint. Therefore, $|S_{opt}|/|I| \leq |S|/|I| \leq (\Delta_G + 1)$.

For opt-OLISE, we apply the algorithm above to obtain $I$, however, there may be vertices $u,v \in I$ where $\phi_I(u)\phi_I(v) \in E(H)$, violating the embedding constraint of opt-OLISE.  We use a similar method to that used to obtain the approximation algorithm described above to overcome this hurdle. This time we pick a vertex $v$ in the image $\phi(I)$ of the independent set $I$, include $v$ in the independent set $I' \subseteq H$, remove $v$ and $N(v)$ from $\phi(I)$, and repeat until $\phi(I)$ is empty. As was the case for $I$, it follows that $|I'| \geq |\phi(I)|/(\Delta_H + 1)$. Let $I'' \subseteq G$ be the inverse image $\phi_I^{-1}(I')$. Since $|I| \geq |S|/(\Delta_G + 1)$ and $|I'| \geq |\phi(I)|/(\Delta_H + 1)$, it follows that $|I''| \geq |S|/((\Delta_G + 1)(\Delta_H + 1))$. $|S|$ remains an upper bound on the size of an optimal solution $S_{opt}$. Therefore, $|S_{opt}|/|I''| \leq |S|/|I''| \leq (\Delta_G + 1)(\Delta_H + 1)$. \end{proof}

The inapproximability results outlined in Table~1 follow by simple reductions from the {\sc Maximum Independent Set} problem.  We introduce the following reduction modeled after a reduction in \cite{xiuzhen} for a variation of opt-OLSE.

\begin{lemma}
\label{lem:reduce-IS}
Let $G$ be a graph.  Construct an instance $I_G = (G, H, \prec_G, \prec_H, L)$ of opt-OLSE from $G$ as follows. Let $\prec_G$ be an arbitrary ordering on the vertices of $G$, ${u_1, ..., u_n}$ where $u_1 \prec_G u_2 \dots \prec_G u_n$.  Let $H$ be the graph consisting of the isolated vertices ${v_1, ..., v_n}$, where $v_1 \prec_H v_2 \dots \prec_H v_n$.  Define the mapping $L : V(G) \to 2^{V(H)}$ such that for each vertex $u_i \in V(G)$, $L(u_i) = \{v_i\}$. We have:
\begin{enumerate}[(i)]
  \item $G$ has an independent set of size $k$ if and only if $I_G$ has a solution of size $k$.
  \item There is an fpt-reduction from \textsc{Independent Set} to p-OLSE where $\Delta_G=\infty, \Delta_H=0$ and $\Delta_L=1$, and to p-OLISE where $\Delta_G=\infty$, $\Delta_H=0$ (resp. $\Delta_G=0$, $\Delta_H=\infty$ by symmetry) and $\Delta_L=1$.
  \item There is an L-reduction from \textsc{Maximum Independent Set} to opt-OLSE where $\Delta_G=\infty, \Delta_H=0$ and $\Delta_L=1$, and to opt-OLISE where $\Delta_G=\infty$, $\Delta_H=0$ (resp. $\Delta_G=0$, $\Delta_H=\infty$ by symmetry) and $\Delta_L=1$.
\end{enumerate}
\end{lemma}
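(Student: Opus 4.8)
The plan is to notice that the construction $G \mapsto I_G$ is nothing more than the identity map on $G$ repackaged as an ordered list-embedding instance, and to read off all three items from a single structural observation: because $L$ is a bijection onto the edgeless graph $H$, every valid embedding is forced, and its domain must be an independent set of $G$.

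For part~(i), I would first argue that a valid list embedding $\phi$ of a subgraph $S$ of $G$ has no freedom: since $L(u_i) = \{v_i\}$, necessarily $\phi(u_i) = v_i$ for every $u_i \in V(S)$, this is injective, and it is order-preserving because $u_i \prec_G u_j \iff i<j \iff v_i \prec_H v_j$ by the way $\prec_G$ and $\prec_H$ were defined. As $E(H) = \emptyset$, the embedding constraint for opt-OLSE (``$u_iu_j \in E(G) \Rightarrow \phi(u_i)\phi(u_j)\in E(H)$'') can hold for $S$ only if $V(S)$ spans no edge of $G$, i.e.\ $V(S)$ is independent in $G$; conversely any independent set $I$ of $G$ yields the valid embedding $u_i \mapsto v_i$ restricted to $I$. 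Hence the largest embeddable subgraph has exactly the independence number of $G$ vertices, which is~(i). For opt-OLISE the embedding constraint is the biconditional, but with $E(H)=\emptyset$ it again reduces to ``$V(S)$ is independent in $G$'', so the same proof applies; for the symmetric p-OLISE variant ($\Delta_G=0$, $\Delta_H=\infty$) I would instead move the edges of $G$ onto the $H$-side and keep the $G$-side edgeless, so that the biconditional forces the \emph{image} $\phi(V(S))$ to be an independent set.

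For part~(ii), the map $h:G\mapsto I_G$ runs in linear time and keeps the parameter $k$ unchanged, and by~(i) $(G,k)$ is a yes-instance of \textsc{Independent Set} iff the corresponding p-OLSE (resp.\ p-OLISE) instance $(I_G,k)$ is a yes-instance. The instance $I_G$ has $\Delta_H=0$, $\Delta_L=1$ and $\Delta_G=\Delta(G)$ unbounded, matching the claimed parameter ranges; this is the required fpt-reduction, with the symmetric construction handling the $\Delta_G=0,\Delta_H=\infty$ case of p-OLISE. For part~(iii), I would reuse $h$ and let $h'$ send a feasible solution $(S,\phi)$ of opt-OLSE on $I_G$ to the vertex set $V(S)$, which by~(i) is an independent set of $G$ of the same size. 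Then $opt(I_G)$ equals the independence number of $G$, giving condition~(i) of the L-reduction with $\alpha=1$, and $|f(V(S))-opt(G)| = |f(S,\phi)-opt(I_G)|$ gives condition~(ii) with $\beta=1$; the opt-OLISE statements follow identically.

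I expect the only delicate point to be bookkeeping the difference between the implication in the embedding constraint of p-OLSE and the biconditional in p-OLISE, and in particular deciding which of $G,H$ should carry the edges in the symmetric p-OLISE reductions so that the forced embedding transports the hardness of \textsc{Independent Set} in the right direction. Everything else follows mechanically from $L$ being a bijection and $H$ (or $G$, in the symmetric case) being edgeless.
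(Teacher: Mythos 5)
Your proposal is correct and follows essentially the same route as the paper: part (i) is proved by observing that $\Delta_H=0$ (together with $L$ being the forced bijection $u_i\mapsto v_i$) makes valid embeddings correspond exactly to independent sets of $G$, and parts (ii) and (iii) then follow with the identity parameter map and constants $\alpha=\beta=1$. The paper likewise handles the symmetric p-OLISE case by swapping which of $G$, $H$ carries the edges, so no substantive difference remains.
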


\begin{proof}

(i) Let $I \subseteq V(G)$ be an independent set in $G$ where $|I| = k$.  Since $\Delta_H = 0$, for any subgraph $S$ of $G$ and a valid embedding $\phi : V(S) \to V(H)$, no two vertices $u, u' \in S$ are adjacent. Given that $I$ is an independent set and that for each vertex $u_i \in V(G), L(u_i) = \{v_i\}$, letting $S = I$, we can define the embedding $\phi : V(S) \to V(H)$, where $\phi(u_i) = v_i$ for $u_i \in S$, satisfying that for any two vertices $u_i, u_j \in S$ if $u_i \prec_G u_j$ then $\phi(u_i) \prec_H \phi(u_j)$.  Thus, there is a valid embedding from $S$ to $V(H)$ where $|S| = |I| = k$.

Conversely, let $\phi : V(S) \to V(H)$ be an embedding where $S$ is a subgraph of $G$ satisfying $|S| = k$.  As observed earlier, $S$ must be an independent set since $\Delta_H = 0$. Thus, $G$ has an independent set $I$ where $|I| = |S| = k$.

(ii) Let $(G,k)$ be an instance of \textsc{Independent Set}. We construct the instance \newline $I_G=(G, H, \prec_G, \prec_H, L, k)$ by following the same procedure outlined in the statement of the lemma.  By part (i), $G$ contains an independent set $I$ where $|I| = k$ if and only if there exists a valid embedding $\phi : V(S) \to V(H)$ where $S = I$. Since the same parameter is shared between the instance of \textsc{Independent Set} and the instance $I_G$ of p-OLSE, and since the construction of $I_G$ can be clearly done in polynomial time, we have an fpt-reduction from \textsc{Independent Set} to p-OLSE. The reduction extends immediately to the case of p-OLISE where $\Delta_G = \infty$, $\Delta_H = 0$  and $\Delta_L = 1$ because $\Delta_H = 0$.  By symmetry, it also can be used for the case of p-OLISE where $\Delta_G = 0$, $\Delta_H = \infty$ and $\Delta_L = 1$.

(iii) The $L$-reduction is given by the function $h$ that maps the instance $G$ of \textsc{Maximum Independent Set} into the instance $I_G$ of opt-OLSE, as described in the statement of the lemma, and the function $h'$ that, for each solution $S, \phi$ of $I_G$ corresponds the solution $I=V(S)$ of $G$, and the constants $\alpha = \beta =1$. By part (i), any solution $I$ of $G$ corresponds to a solution $S, \phi$ of $I_G$ such that $|I| = |S|$, and vice versa. Therefore, $opt(G) = opt(I_G)$, and $|I| - opt(G) = |S| - opt(I_G)$. This shows that \textsc{Maximum Independent Set} $L$-reduces to opt-OLSE.

The $L$-reduction extends immediately to the case of opt-OLISE where $\Delta_G=\infty, \Delta_H=0$ and $\Delta_L=1$, and by symmetry it extends to the case of opt-OLISE where $\Delta_G=0, \Delta_H=\infty$ and $\Delta_L=1$. \end{proof}

We can now demonstrate that opt-OLSE is APX-complete when $\Delta_G=O(1)$ and opt-OLISE is APX-complete when $\Delta_G=O(1)$ and $\Delta_H=O(1)$.

\begin{proposition}\label{prop:apx-complete}
The opt-OLSE problem restricted to instances in which $\Delta_G=O(1)$ and the opt-OLISE problem restricted to instances in which $\Delta_G=O(1)$ and $\Delta_H=O(1)$ are APX-complete.
\end{proposition}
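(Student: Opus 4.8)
The plan is to prove $\APX$-completeness in the usual two steps: membership in $\APX$ and $\APX$-hardness. Membership is essentially free: by Proposition~\ref{prop:apx}, opt-OLSE restricted to $\Delta_G=O(1)$ admits an approximation algorithm of ratio $(\Delta_G+1)$, which is a constant under this restriction, and opt-OLISE restricted to $\Delta_G=O(1)$ and $\Delta_H=O(1)$ admits one of ratio $(\Delta_H+1)(\Delta_G+1)$, again a constant. Hence both restricted problems lie in $\APX$.

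For $\APX$-hardness I would reduce from {\sc Maximum Independent Set} restricted to graphs of bounded maximum degree (say, degree at most $3$), which is a classical $\APX$-hard problem~\cite{py91}. The point is that the reduction of Lemma~\ref{lem:reduce-IS} leaves the graph $G$ untouched: in the constructed instance $I_G=(G,H,\prec_G,\prec_H,L)$ the graph whose maximum degree equals $\Delta_G$ is precisely the input graph $G$, while $H$ is edgeless (so $\Delta_H=0$) and $\Delta_L=1$. Consequently, starting from an instance of {\sc Maximum Independent Set} of maximum degree at most $3$, the output instance satisfies $\Delta_G\le 3=O(1)$, $\Delta_H=0=O(1)$, and $\Delta_L=1$. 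Lemma~\ref{lem:reduce-IS}(iii) already establishes that the pair $(h,h')$ with $\alpha=\beta=1$ is an $L$-reduction, so this yields an $L$-reduction from bounded-degree {\sc Maximum Independent Set} to opt-OLSE with $\Delta_G=O(1)$, and—because $H$ has no edges—simultaneously to opt-OLISE with $\Delta_G=O(1)$ and $\Delta_H=O(1)$. Since $L$-reductions (between problems whose optima are polynomially bounded, as here) preserve $\APX$-hardness, both restricted optimization problems are $\APX$-hard; together with membership in $\APX$ this gives $\APX$-completeness.

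The only point requiring care is the choice of source problem rather than anything in the reduction itself, which is inherited wholesale from Lemma~\ref{lem:reduce-IS}: one must invoke $\APX$-hardness of {\sc Maximum Independent Set} on \emph{bounded-degree} graphs, since the reduction copies $G$ verbatim and an unbounded-degree source would produce an unbounded $\Delta_G$, placing the instance outside the restricted problem. I would also remark (for completeness) that an $L$-reduction from an $\APX$-hard problem to a problem already known to be in $\APX$ witnesses $\APX$-completeness, and that $L$-reductions compose; both are standard facts about the class $\APX$ and the $L$-reducibility relation introduced earlier.
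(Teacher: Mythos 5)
Your proposal is correct and follows essentially the same route as the paper: membership via the constant-ratio algorithms of Proposition~\ref{prop:apx}, and hardness via the $L$-reduction of Lemma~\ref{lem:reduce-IS}(iii) applied to {\sc Maximum Independent Set} on bounded-degree graphs, noting that the reduction copies $G$ verbatim so the degree bound is preserved. Your explicit remark about why the source problem must be the bounded-degree variant is exactly the (tersely stated) key point in the paper's proof.
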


\begin{proof}
Proposition~\ref{prop:apx} showed that opt-OLSE is in APX when $\Delta_G=O(1)$. We demonstrate that opt-OLSE is APX-complete when $\Delta_G=O(1), \Delta_H=0$ and $\Delta_L=1$ to yield the result. Since the case of opt-OLSE where $\Delta_G=O(1)$, $\Delta_H=0$ and $\Delta_L=1$ is a subset of opt-OLSE where $\Delta_G=\infty$, $\Delta_H=0$ and $\Delta_L=1$ we can apply Lemma~\ref{lem:reduce-IS} part (iii) to $L$-reduce {\sc Maximum Independent Set} to the cases of opt-OLSE where $G$ has bounded degree.  Since {\sc Maximum Independent Set} on bounded degree graphs is APX-complete, it follows that opt-OLSE is APX-complete when $\Delta_G=O(1)$~\cite{py91}.  Similar arguments can be applied to opt-OLISE to show that it is APX-complete when restricted to instances in which $\Delta_G=O(1)$ and $\Delta_H=O(1)$.
\end{proof}

The inapproximability results for opt-OLSE and opt-OLISE when $\Delta_G = \infty$, $\Delta_L=1$ and $\Delta_H = 0$ (and the symmetric case for opt-OLISE when  $\Delta_G = 0$, $\Delta_L=1$ and $\Delta_H =\infty$) also follow from Lemma~\ref{lem:reduce-IS}.

\begin{proposition}\label{prop:no-apx}
The opt-OLSE problem restricted to instances in which $\Delta_G=\infty, \Delta_H=0$ and $\Delta_L=1$ and opt-OLISE restricted to instances in which $\Delta_G=\infty, \Delta_H=0$ and $\Delta_L=1$ or $\Delta_G=0, \Delta_H=\infty$ and $\Delta_L=1$ cannot be approximated to within a factor of $n^{\frac{1}{2}-\epsilon}$ unless $\Pol = \NP$, where $n$ is the number of vertices in $G$.
\end{proposition}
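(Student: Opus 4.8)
The plan is to read this off from Lemma~\ref{lem:reduce-IS}, exploiting the fact that its reduction does more than furnish an $L$-reduction: part~(i) sets up an \emph{exact} correspondence between independent sets of $G$ and feasible solutions of $I_G$, so the reduction preserves approximation ratios verbatim. Concretely, given a graph $G$ on $n$ vertices, build $I_G = (G, H, \prec_G, \prec_H, L)$ exactly as in the statement of Lemma~\ref{lem:reduce-IS}; then $|V(G)| = |V(H)| = n$, the construction is polynomial time, the instance has $\Delta_G = \infty$, $\Delta_H = 0$, $\Delta_L = 1$, and by part~(i) each independent set of $G$ of size $k$ yields a feasible solution $(S,\phi)$ of $I_G$ with $|S| = k$, while conversely every feasible $(S,\phi)$ has $V(S)$ an independent set of $G$ of the same size (because $\Delta_H = 0$ forces $S$ to be edgeless). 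In particular $opt(G) = opt(I_G)$, and from any feasible $(S,\phi)$ one recovers in polynomial time an independent set $V(S) \subseteq V(G)$ of cardinality $|S|$.

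Next I would argue by contradiction. Suppose that, for some fixed $\epsilon > 0$, opt-OLSE restricted to $\Delta_G = \infty$, $\Delta_H = 0$, $\Delta_L = 1$ admitted a polynomial-time approximation algorithm $A$ of ratio $n^{\frac{1}{2}-\epsilon}$ (with $n$ the number of vertices of the instance's graph $G$, which the reduction leaves unchanged). Running $A$ on $I_G$ and returning $V(S)$ for the solution $(S,\phi)$ it produces gives, by the previous paragraph, an independent set of $G$ of size at least $opt(G)\big/ n^{\frac{1}{2}-\epsilon}$, in polynomial time --- i.e.\ an $n^{\frac{1}{2}-\epsilon}$-approximation for \textsc{Maximum Independent Set} on $n$-vertex graphs. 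Since \textsc{Maximum Independent Set} is \NP-hard to approximate within $n^{1-\epsilon'}$ for every $\epsilon' > 0$ (H{\aa}stad; Zuckerman), and a fortiori within $n^{\frac{1}{2}-\epsilon}$, this forces $\Pol = \NP$. (Alternatively, one may invoke directly the $n^{\frac{1}{2}-\epsilon}$ inapproximability of the opt-OLSE variant of~\cite{xiuzhen}, which our size-preserving reduction also inherits; but reducing from \textsc{Maximum Independent Set} keeps the argument self-contained and already gives the claimed bound.)

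Finally, the two opt-OLISE cases need no extra work. For $\Delta_G = \infty$, $\Delta_H = 0$, $\Delta_L = 1$, Lemma~\ref{lem:reduce-IS}(ii)--(iii) and its proof already observe that the same construction is a valid reduction for opt-OLISE, since when $\Delta_H = 0$ the ``if and only if'' embedding constraint of p-OLISE imposes nothing on a valid embedding beyond what p-OLSE requires; hence the size correspondence of part~(i) is unchanged and the argument above transfers verbatim. The symmetric case $\Delta_G = 0$, $\Delta_H = \infty$, $\Delta_L = 1$ for opt-OLISE follows by exchanging the roles of $G$ and $H$ (exactly as in Lemma~\ref{lem:reduce-IS} and in the earlier \NP-completeness proof), reducing instead from \textsc{Maximum Independent Set} on the ``image'' side $H$.

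I expect the only point requiring care --- rather than a genuine obstacle --- is to make explicit that Lemma~\ref{lem:reduce-IS} supplies a gap-preserving (ratio-preserving) reduction and not merely an $L$-reduction: $L$-reductions preserve membership in \APX\ and the existence of a PTAS, but are too coarse to transport a super-constant inapproximability threshold. The equality $opt(G) = opt(I_G)$ together with the solution map $h'$ of part~(iii) (for which $\alpha = \beta = 1$) is precisely what we need, and one should additionally confirm that the ``$n$'' in the target ratio is $|V(G)|$, which the reduction does not change, so the exponent $\tfrac{1}{2}-\epsilon$ is not diluted in passing between the two problems.
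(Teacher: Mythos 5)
Your proposal is correct and follows essentially the same route as the paper: both invoke the construction of Lemma~\ref{lem:reduce-IS} together with H{\aa}stad's inapproximability of \textsc{Maximum Independent Set}. Your closing remark --- that a generic $L$-reduction does not transport a polynomial inapproximability factor and that one must use the exact value correspondence $opt(G) = opt(I_G)$ with $\alpha = \beta = 1$ --- is a point the paper's one-line proof glosses over, and your version is the more careful one.
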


\begin{proof}
H\r{a}stad demonstrated in~\cite{hastad97} that {\sc Maximum Independent Set} is not approximable to within a factor of $n^{\frac{1}{2}-\epsilon}$, for any $\epsilon > 0$, unless $\Pol = \NP$.  The proposition follows immediately from the $L$-reduction in part (iii) of Lemma~\ref{lem:reduce-IS} for the given cases of opt-OLSE and opt-OLISE.
\end{proof}

\section{Parameterized Complexity Results}\label{subsec:parameterized}

In this section we consider p-OLSE and p-OLISE with respect to the framework of parameterized complexity. Applying part (ii) of Lemma~\ref{lem:reduce-IS} to p-OLSE and p-OLISE yields the following result:

\begin{proposition}
\label{prop:whardness1}
The p-OLSE problem restricted to instances in which $\Delta_G = \infty$, $\Delta_H = 0$ and $\Delta_L = 1$ is $W[1]$-complete, and the p-OLISE problem restricted to instances in which $\Delta_G = \infty$, $\Delta_H = 0$ (resp. $\Delta_G = 0$ and $\Delta_H =\infty$ by symmetry) and $\Delta_L = 1$ is $W[1]$-complete.
\end{proposition}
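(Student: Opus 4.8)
The plan is to prove the statement by separating $W[1]$-hardness from $W[1]$-membership, and to obtain each half through an fpt-reduction to or from \textsc{Independent Set} parameterized by solution size, which is $W[1]$-complete.

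For hardness there is essentially nothing new to do: part (ii) of Lemma~\ref{lem:reduce-IS} already provides an fpt-reduction from \textsc{Independent Set} to p-OLSE restricted to $\Delta_G=\infty$, $\Delta_H=0$, $\Delta_L=1$, and likewise to p-OLISE restricted to $\Delta_G=\infty$, $\Delta_H=0$, $\Delta_L=1$ and (since the embedding constraint of p-OLISE is an ``if and only if'', so the roles of $G$ and $H$ can be swapped together with inverting $L$) to the symmetric restriction $\Delta_G=0$, $\Delta_H=\infty$, $\Delta_L=1$. As \textsc{Independent Set} is $W[1]$-hard, so are all of these restrictions.

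For membership I would exhibit a polynomial-time, parameter-preserving reduction from each restricted problem \emph{back} to \textsc{Independent Set}. The key observation is that when $\Delta_L=1$ each vertex $u\in V(G)$ with $L(u)\neq\emptyset$ has a unique candidate image, so a valid embedding is completely determined by which subgraph $S$ is selected. I would discard every vertex of $G$ with an empty list and, on the surviving vertex set $V'$, build a conflict graph $G'$ in which $u$ and $u'$ are adjacent exactly when selecting both is illegal, i.e. when: (a) including both violates the embedding constraint --- for p-OLSE and for p-OLISE with $\Delta_H=0$ this means $uu'\in E(G)$ (since $E(H)=\emptyset$, $\phi(u)\phi(u')$ can never be an edge), and for p-OLISE with $\Delta_G=0$ it means $L(u)L(u')\in E(H)$ (the image must be independent in $H$); or (b) $L(u)=L(u')$ (injectivity would fail); or (c) the $\prec_G$-order of $u,u'$ disagrees with the $\prec_H$-order of $L(u),L(u')$ (order-preservation would fail). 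Then a set $I\subseteq V'$ with $|I|=k$ is an independent set of $G'$ if and only if $S=G[I]$ together with $\phi(u)=L(u)$ is a valid list embedding of a $k$-vertex subgraph. Since \textsc{Independent Set} lies in $W[1]$ and $W[1]$ is closed under fpt-reductions, each restriction lies in $W[1]$, completing the proof.

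The only point that needs genuine care is verifying that the three local conflict conditions are jointly equivalent to global validity of the embedding --- in particular that checking order-preservation pair-by-pair is sufficient, which holds precisely because $\prec_G$ and $\prec_H$ are \emph{linear} orders (so pairwise consistency upgrades to a global monotone map), and that pairwise distinctness of images yields a genuinely injective $\phi$. This is also the only place the symmetric p-OLISE case differs from the others, and only cosmetically: one uses $\Delta_G=0$ to force the image to be independent in $H$ rather than using $\Delta_H=0$ to force $S$ independent in $G$. No step of the argument is genuinely difficult.
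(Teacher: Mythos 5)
Your proof is correct and the hardness half is exactly the paper's argument: invoke part~(ii) of Lemma~\ref{lem:reduce-IS} together with the $W[1]$-hardness of \textsc{Independent Set}. The paper leaves $W[1]$-membership implicit (its proof consists only of the hardness reduction), whereas you supply an explicit parameter-preserving conflict-graph reduction back to \textsc{Independent Set} --- valid because with $\Delta_L=1$ the image of each vertex is forced, and every constraint of a valid embedding (list membership, injectivity, order preservation, and the embedding constraint under the stated degree restrictions) is a per-vertex or per-pair condition, so pairwise feasibility coincides with global feasibility; this is a correct and welcome completion of the routine half.
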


\begin{proof}
Downey and Fellows showed in~\cite{df2} that \textsc{Independent Set} is $W[1]$-complete.  By part (ii) of Lemma~\ref{lem:reduce-IS} there is an fpt-reduction from \textsc{Independent Set} to p-OLSE restricted to instances in which $\Delta_G = \infty$, $\Delta_H = 0$ and $\Delta_L = 1$, and to p-OLISE restricted to instances in which $\Delta_G = \infty$, $\Delta_H = 0$ (resp. $\Delta_H =\infty$ and $\Delta_G = 0$ by symmetry) and $\Delta_L = 1$. The statement of the proposition follows.
\end{proof}

Next, we consider the case of p-OLSE and p-OLISE when $\Delta_G=1$, $\Delta_H = 1$ and $\Delta_L = \infty$. Evans~\cite{evans} reduces the $W[1]$-complete problem {\sc Clique}~\cite{fptbook} to the {\sc Longest Arc-Preserving Common Subsequence} (LAPCS) to show that LAPCS is $W[1]$-hard when the length of the common subsequence, $l$, is the parameter, and when no two arcs in the instance share an endpoint. The reduction encodes a graph $H$ as an arc-annotated sequence $(S_1, P_1)$ where each vertex in $V(H)$ is a substring of $S_1$ of the form $(ba^nb)$ and where the edges in $H$ are encoded as arcs between the substrings of $S_1$ with endpoints corresponding to the endpoints of the edge in $H$.  A clique of size $k$ is encoded as an arc-annotated sequence $(S_2, P_2)$ where each of the $k$ vertices in the clique is encoded as a substring of $S_2$ of the form $(ba^kb)$ with similarly constructed arcs between substrings.  A clique of size $k$ exists in $H$ if $(S_2, P_2)$ is an arc-preserving subsequence of $(S_1, P_1)$ with length $l = k(k+2)$.  Since the length of $S_2, l$, is a function of $k$, {\sc Clique} is fpt-reducible to LAPCS.

We show that LAPCS is a special case of p-OLISE where $\Delta_G=1, \Delta_H=1$ and $\Delta_L=\infty$.  Taking each character in $S_1$ and $S_2$ to be a vertex in $V(H)$ and $V(G)$, respectively, and preserving the arcs in $P_1$ and $P_2$ as edges in $E(H)$ and $E(G)$, respectively, $(S_1, P_1)$ and $(S_2, P_2)$ are reduced to the graphs $H$ and $G$, respectively, and $|V(G)| = k(k+2)$. Each vertex $u \in V(G)$ has a list in $V(H)$ that consists of the vertices $v \in V(H)$ such that the character for $u \in S_2$ is the same as the character for $v \in S_1$.  The size of the parameter $l$ remains the same, completing the reduction of LAPCS to p-OLISE.  Note that a clique of size $k$ is encoded by the property that the embedding from $G$ to $H$ preserves the arcs in $G$, which is a property of solutions for both p-OLSE and p-OLISE.  Thus, by the same arguments, LAPCS reduces to p-OLSE as well.  Finally, note that this reduction results in instances of p-OLSE and p-OLISE where the question is whether we can embed the whole graph $G$ into $H$, yielding the following proposition:

\begin{proposition}\label{prop:whardness}
The p-OLSE and p-OLISE problems restricted to instances in which $\Delta_H = 1$ and $\Delta_G = 1$ are $W[1]$-complete.  Moreover, the problems remain $W[1]$-complete when the parameter $k$ is the number of vertices in $V(G)$.
\end{proposition}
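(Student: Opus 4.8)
The plan is to turn the preceding paragraph into a formal argument by exhibiting an explicit fpt-reduction from {\sc Clique} to p-OLISE (and to p-OLSE) that produces instances in which $\Delta_H = 1$, $\Delta_G = 1$, and the parameter equals $|V(G)|$. I would start from the reduction of Evans~\cite{evans} from {\sc Clique} to LAPCS with non-sharing arcs: given a graph $M$ and an integer $k$, that reduction builds arc-annotated sequences $(S_1,P_1)$ encoding $M$ (each vertex of $M$ is a block $ba^nb$, each edge an arc between the appropriate $b$'s) and $(S_2,P_2)$ encoding a putative $k$-clique (each chosen vertex a block $ba^kb$, each required edge an arc), such that $M$ has a $k$-clique iff $(S_2,P_2)$ is an arc-preserving subsequence of $(S_1,P_1)$ of length $\ell = k(k+2)$, and crucially the arcs in each of $P_1,P_2$ form a matching (no two arcs share an endpoint).

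Next I would translate this LAPCS instance into a p-OLISE instance $(G,H,\prec_G,\prec_H,L,k')$: let $V(H)$ be the characters of $S_1$ with $\prec_H$ their left-to-right order, $E(H) = P_1$; let $V(G)$ be the characters of $S_2$ with $\prec_G$ their order, $E(G) = P_2$; for $u \in V(G)$ set $L(u) = \{\,v \in V(H) : \text{the character of } v \text{ in } S_1 \text{ equals the character of } u \text{ in } S_2\,\}$; and set $k' = |V(G)| = \ell = k(k+2)$. Since $P_1$ and $P_2$ are matchings, $\Delta_H = 1$ and $\Delta_G = 1$; and since $\ell$ is a function of $k$ and the whole construction is polynomial, this is an fpt-reduction. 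I would then verify the equivalence in both directions: a valid embedding of all of $G$ into $H$ is exactly an order-preserving, character-respecting, arc-preserving injection of $(S_2,P_2)$ into $(S_1,P_1)$, i.e.\ an arc-preserving common subsequence realizing $S_2$ as a subsequence of $S_1$; conversely any such subsequence yields a valid embedding. One subtle point worth checking is that in Evans's encoding the ``$a$''-runs have different lengths in $S_1$ ($a^n$) and $S_2$ ($a^k$), so the list map $L$ is still well-defined (all $a$'s map to all $a$'s, all $b$'s to all $b$'s) and this does not disturb the correctness argument. For p-OLSE I would observe that condition~(3) in p-OLISE differs from that in p-OLSE only by an ``only if'' clause, but since here we embed all of $G$ and the reduction's correctness only ever uses the ``if'' direction (arcs of $G$ must map to arcs of $H$), the identical reduction and the identical equivalence proof work verbatim for p-OLSE; this also re-proves the $W[1]$-hardness of p-OLSE borrowed from~\cite{evans}. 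Membership in $W[1]$ for both problems follows from the standard fact that subgraph-embedding-type problems with the solution size as parameter lie in $W[1]$ (or can be cited from the earlier discussion), which combined with the reduction gives $W[1]$-completeness; the ``moreover'' clause is immediate because $k' = |V(G)|$ by construction.

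The only real obstacle is bookkeeping: making sure that the $W[1]$-hardness transfers cleanly to the \emph{embed-the-whole-graph} version, which hinges on the fact that in Evans's construction $|S_2| = \ell$ is itself a function of $k$, so asking for a common subsequence of length $\ell$ is the same as asking that \emph{all} of $S_2$ embed; I would spell this out carefully. The matching property of the arc sets (which gives $\Delta_G=\Delta_H=1$) is exactly the ``non-sharing arcs'' hypothesis already present in Evans's theorem, so nothing new is needed there. Everything else is routine.
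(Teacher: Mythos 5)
Your proposal is correct and follows essentially the same route as the paper: both start from Evans's reduction from {\sc Clique} to LAPCS with non-sharing arcs, translate characters to vertices and arcs to edges (so the matching property of the arc sets gives $\Delta_G=\Delta_H=1$), define $L$ by character identity, and observe that since $|V(G)|=k(k+2)$ is a function of $k$ the whole-graph embedding version is what results, with the same argument covering p-OLSE because arc preservation in the forward direction is all the clique encoding needs. Your extra care about the ``if'' versus ``iff'' clause and the well-definedness of $L$ only makes explicit what the paper leaves implicit.
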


\begin{proposition}\label{prop:whardness2}
The p-OLISE problem restricted to instances where $\Delta_H = 1$, $\Delta_G = 1$, and $\Delta_L = 1$ is $W[1]$-complete under Turing fpt-reductions.
\end{proposition}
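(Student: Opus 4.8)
Membership of this restriction of p-OLISE in $W[1]$ is inherited from the general problem exactly as in Propositions~\ref{prop:whardness1} and~\ref{prop:whardness}, so the substance of the statement is the $W[1]$-hardness, and this is where the plan needs Turing rather than many-one fpt-reductions. I would reduce from the p-OLISE instances produced by the reduction behind Proposition~\ref{prop:whardness}: those already have $\Delta_G=\Delta_H=1$ and ask whether \emph{all} of $G_0$ embeds into $H_0$, with parameter $k=|V(G_0)|$, but $\Delta_L$ is unbounded; the only thing left to do is to squeeze $\Delta_L$ down to $1$ at the cost of fpt-many oracle calls. For one call: a valid embedding $\phi$ is an injective, order-preserving map of the $k$-vertex graph $G_0$ into $H_0$, so it touches at most $k$ vertices and, since $\Delta_{H_0}=1$, at most $k/2$ pairwise-distinct edges of $H_0$. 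Using a family of $k$-perfect hash functions of size $2^{O(k)}\log n$ I would colour $V(H_0)$ (and separately $E(H_0)$) with $O(k)$ colours, and additionally enumerate a bijection $\beta$ between $V(G_0)$ (respectively the $O(k)$ edges of $G_0$) and the colour classes; for some such guess the image of $\phi$ is colourful and $\phi$ respects $\beta$. For a fixed guess $(c,\beta)$ I would build a $\Delta_L=1$ instance on $H_0$ (unchanged) by ``unrolling'' the lists: for every $u\in V(G_0)$ and every $v\in L_0(u)$ with $c(v)=\beta(u)$ create a vertex $(u,v)$ with singleton list $\{v\}$; order the new vertices so that the block of $u$ precedes the block of $u'$ exactly when $u\prec_{G_0}u'$, and \emph{within} a block order them by the reverse of $\prec_{H_0}$ on their targets; and join $(u,v)$ to $(u',v')$ exactly when $uu'\in E(G_0)$ and $vv'\in E(H_0)$. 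Since $\Delta_{G_0}=\Delta_{H_0}=1$, each $(u,v)$ has at most one neighbour, so the new graph has maximum degree $1$; the new host is $H_0$ with $\Delta_H=1$; and the new width is $1$. The target parameter is $k'=|V(G_0)|=k$, there are $2^{O(k)}\log n$ guesses, and we declare the original a yes-instance iff some constructed instance is, so this is an fpt-time Turing reduction.

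Two directions would be routine. First, a colourful valid embedding $\phi$ of $G_0$ respecting $\beta$ yields the size-$k$ set $\{(u,\phi(u)):u\in V(G_0)\}$, which is a valid embedding of the new instance under $\psi(u,v)=v$: order-preservation and the biconditional edge condition drop straight out of the block order, the definition of the new edge set, and validity of $\phi$ (using $uu'\in E(G_0)\iff\phi(u)\phi(u')\in E(H_0)$). Second, a valid size-$k$ embedding $S'$ of the new instance contains, by the reverse intra-block order, at most one vertex of each of the $k$ blocks, hence exactly one, say $(u,g(u))$; then $\psi$-injectivity makes $g$ injective, the inter-block order makes $g$ order-preserving, $g(u)\in L_0(u)$ by construction, and the new edge set together with the biconditional edge condition forces $g(u)g(u')\in E(H_0)\Rightarrow uu'\in E(G_0)$.

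The implication that is \emph{not} delivered automatically — and the real obstacle — is the reverse one, $uu'\in E(G_0)\Rightarrow g(u)g(u')\in E(H_0)$, which p-OLISE's purely local edge constraints do not by themselves enforce: a valid $S'$ may select, on the two sides of a $G_0$-edge component, targets that fail to lie on a common edge of $H_0$, and that $g$ is not a valid embedding of $G_0$. To close this I would replace, in each constructed instance, the two blocks of an adjacent pair $u,u'$ by a single ``edge block'' of matched pairs $x_e,y_e$ joined by an edge, one pair for each colour-correct edge $e=vv'$ of $H_0$ with $v\in L_0(u)$ and $v'\in L_0(u')$ (the colour on $e$ being the one $\beta$ committed for $uu'$), laid out so that — exploiting the $b$-delimited vertex-substring layout of $H_0$, which makes the portion of $L_0(u)$ and of $L_0(u')$ that can be selected jointly while respecting $\prec_{H_0}$ precisely the two endpoints of a single $H_0$-edge — any valid embedding of the new instance must take $x_e$ and $y_e$ from the \emph{same} $e$. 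Getting this last layout right, i.e.\ ruling out ``mismatched'' selections across an edge component without pushing the degree or the list width above $1$, is the step I expect to cost the most effort; the rest (the count of oracle calls, the colour/bijection enumeration, and the two directions above) is mechanical. Putting the calls together, the source instance is a yes-instance iff at least one constructed $\Delta_H=\Delta_G=\Delta_L=1$ instance is, which gives the $W[1]$-hardness under Turing fpt-reductions.
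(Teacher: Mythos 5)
Your construction has a genuine gap, and it is exactly the one you flag yourself: soundness of the ``unrolled'' instance. After replacing each $u$ by a block of copies $(u,v)$ with singleton lists, the induced-embedding condition on the new instance only checks, for a selected pair $(u,g(u))$, $(u',g(u'))$, that $(u,g(u))(u',g(u'))\in E(G_{\mathrm{new}})\iff g(u)g(u')\in E(H_0)$; when $uu'\in E(G_0)$ but $g(u)g(u')\notin E(H_0)$, both sides are false and the check passes, so the new instance can be a yes-instance while $g$ violates the constraint $uu'\in E(G_0)\Rightarrow g(u)g(u')\in E(H_0)$ and the original is a no-instance. The ``edge block'' repair you sketch is not carried out, and it leans on the internal $ba^nb$ substring structure of the LAPCS-derived instances, which you would have to re-expose and verify; moreover it is unclear that one can simultaneously rule out mismatched endpoint selections across a $G_0$-edge and keep all three of $\Delta_G$, $\Delta_H$, $\Delta_L$ equal to $1$. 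As written, the hardness direction is therefore not established.

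The paper closes this with a trick you are missing: it never touches $\Delta_L$ directly. Since the p-OLISE edge condition is an ``if and only if,'' an instance $I=(G,H,\prec_G,\prec_H,L,k)$ is equivalent to its reversal $I^{-1}=(H,G,\prec_H,\prec_G,L^{-1},k)$, so it suffices to produce instances in which every vertex of $H$ lies in exactly one list, i.e.\ $\Delta_{L^{-1}}=1$. Color-coding $V(H)$ with the $k$ labels of $V(G)$ (plus the bijection guess you also make) and setting $L^c(u)$ to be the color class assigned to $u$ achieves this automatically, because the color classes partition $V(H)$; crucially, $G$, $H$ and the two orders are unchanged, so a valid embedding of $I^c$ \emph{is} a valid embedding of the original instance and there is no soundness issue to repair. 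Reversing $I^c$ then gives $\Delta_L=1$ while keeping $\Delta_G=\Delta_H=1$. Your oracle-call bookkeeping and the overall Turing-reduction framing are fine; the missing ingredient is this symmetry argument, which replaces your gadget entirely.
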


\begin{proof}
Let ${\cal I}$ be the set of instances of p-OLISE where $\Delta_H = 1$, $\Delta_G = 1$, and $\Delta_L = 1$.  For an instance $I = (G, H, \prec_G, \prec_H, L, k)\in {\cal I}$ define $I^{-1} = (H, G, \prec_H, \prec_G, L^{-1}, k)$ where, for each $v \in V(H), L^{-1}(v) = \{u : L(u) = \{v\}\}$. As a consequence of $\Delta_L = 1$, for every instance $I^{-1}$, and for every vertex $u \in G$, we have $|\{v \in V(H): L^{-1}(v) = \{u\}\}| = 1$. Let ${\cal I}^{-1}$ be the set of all $I^{-1}$.  It is easy to see that an instance of p-OLISE $I$ is a yes-instance if and only if $I^{-1}$ is a yes-instance.  Therefore, to prove the proposition we can equivalently prove that p-OLISE restricted to instances in ${\cal I}^{-1}$ is $W[1]$-complete.  Denote by p-OLISE-simple p-OLISE restricted to instances in which $\Delta_H = 1$, $\Delta_G = 1$, $\Delta_L = \infty$, and $|V(G)| = k$.  Proposition~\ref{prop:whardness} showed that p-OLISE-simple is $W[1]$-complete.  We give a Turing fpt-reduction from p-OLISE-simple to ${\cal I}^{-1}$ to yield the result.

Let $I' = (G, H, \prec_G, \prec_H, L)$ be an instance of p-OLISE-simple where $V(G) = {u_1, ..., u_k}$.   Using color-coding, there is a family ${\cal F}$ of fpt-many $k$-colorings for the vertices in $V(H)$ using the colors ${u_1, \ldots , u_k}$ --- corresponding to the labels of the vertices of $G$ --- such that if $I'$ is a yes-instance, then there is a coloring $c \in {\cal F}$, and a valid embedding  $\phi_c : V(G) \to V(H)$, such that $c(v_i) = u_i$ if and only if $\phi_c(u_i) = v_i$.  To reduce $I'$, for every coloring $c \in {\cal F}$ we create the instance $I^c$ of p-OLISE where $I^c = (G, H, \prec_G, \prec_H, L^{c}, k)$, where for each $u \in V(G)$, $L^{c}(u) = \{v \in H: c(v) = u\}$.  Observe that by the definition of $L^c$, for each vertex $v \in V(H)$, $|\{u : L^c(u) = \{v\}\}| = 1$.  Thus, $I^c \in {\cal I}^{-1}$.

Hence there is a Turing fpt-reduction from p-OLISE-simple to ${\cal I}^{-1}$, and if there is an algorithm that runs in fpt-time that solves ${\cal I}^{-1}$, then we can find a solution for p-OLISE-simple in fpt-time by enumerating the colorings in ${\cal F}$ and applying the algorithm for ${\cal I}^{-1}$. Given that p-OLISE-simple is $W[1]$-complete, it follows that ${\cal I}^{-1}$ is $W[1]$-complete under Turing fpt-reductions.  It follows that ${\cal I}$ is $W[1]$-complete as well under Turing fpt-reductions.
\end{proof}

Again, relaxing the ordering constraint significantly reduces the difficulty of the problem.  For example, when we relax the ordering constraint in p-OLISE we have the parameterized {\sc List Induced Subgraph Embedding} problem, shortly (p-LISE):

\paramproblem{} {Two graphs $G$ and $H$; a function $L : V(G) \longrightarrow 2^{V(H)}$; and $k \in \nat$}{$k$}{Is there a subgraph $S$ of $G$ of $k$ vertices and an injective map $\phi: V(S) \longrightarrow V(H)$ such that: (1) $\phi(u) \in L(u)$ for every $u \in S$ and (2) for every $u, u' \in S$, $uu' \in E(G)$ if and only if $\phi(u)\phi(u') \in E(H)$} \\

\begin{proposition}\label{prop:unordered_is_in_P}
The p-LISE problem restricted to instances in which $\Delta_G=1$ and $\Delta_H=1$ is in $\Pol$.
\end{proposition}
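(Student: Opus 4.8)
The plan is to solve the optimization version of p-LISE (maximize $|S|$) in polynomial time, which clearly suffices, just as for Proposition~\ref{prop:pLSE-in-P}. The starting point is that when $\Delta_G=1$ and $\Delta_H=1$, both $G$ and $H$ are disjoint unions of edges (matchings) and isolated vertices; call the edges and isolated vertices of $G$ (resp. $H$) the \emph{$G$-objects} (resp. \emph{$H$-objects}). Let $(S,\phi)$ be a valid list embedding with $W = V(S)$. The key local analysis is: since the edge constraint is an ``if and only if'' and $H$ has maximum degree $1$, if both endpoints $a,b$ of a $G$-edge lie in $W$ then $\phi(a)\phi(b)\in E(H)$, hence $\{\phi(a),\phi(b)\}$ is exactly one $H$-edge (a \emph{full move}); and if only one endpoint $a$ of a $G$-edge lies in $W$ and $\phi(a)$ is an endpoint of an $H$-edge $\{\phi(a),y\}$, then $y\notin\phi(W)$, for otherwise the preimage of $y$ would have to be the $G$-neighbour of $a$, which is not in $W$ (a \emph{half move}). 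Symmetrically on the $H$ side. Consequently each $G$-edge contributes $0$, $1$, or $2$ to $|W|$; each $G$-isolated vertex contributes $0$ or $1$; a half move or single-vertex move ``consumes'' an entire $H$-object (an isolated vertex, or an $H$-edge whose other endpoint gets blocked); and a full move consumes an entire $H$-edge.

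Building on this, I would reduce the optimization problem to \textsc{Maximum-Weight Bipartite Matching}. Form the bipartite graph $\Gamma$ with the $G$-objects on one side and the $H$-objects on the other, and edges: between a $G$-edge $P$ and an $H$-edge $Q$, weight $2$ if some orientation maps the endpoints of $P$ onto the endpoints of $Q$ respecting the lists, else weight $1$ if some endpoint of $P$ maps to some endpoint of $Q$ respecting the lists; between a $G$-edge $P$ and an $H$-isolated vertex $w$ (resp. a $G$-isolated vertex $z$ and an $H$-edge $Q$), weight $1$ if a list-respecting single assignment exists; between a $G$-isolated vertex $z$ and an $H$-isolated vertex $w$, weight $1$ if $w\in L(z)$. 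The claim is that the optimum of p-LISE equals the maximum weight of a matching in $\Gamma$.

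For ``$\le$'', from a valid embedding $(S,\phi)$ I read off a matching in $\Gamma$: match each fully-used $G$-edge to the $H$-edge carrying its image (a weight-$2$ edge, by the analysis) and each half-used $G$-edge or used $G$-isolated vertex to the $H$-object containing its image (a weight-$\ge 1$ edge). The structural analysis shows distinct $G$-objects land in distinct $H$-objects, so this is a matching, of total weight exactly $2f+h+i=|W|$ where $f,h,i$ count full moves, half moves, and used isolated vertices. For ``$\ge$'', from a matching $\mathcal{M}$ I build an embedding by realizing each weight-$2$ edge as a full move (possible by definition of $\Gamma$) and each weight-$1$ edge as the corresponding half/single move; since $\mathcal{M}$ is a matching, $\phi$ is injective, and one checks that two vertices of $W$ coming from matched pairs with \emph{different} $H$-objects have non-adjacent images (so the ``only if'' direction of the edge constraint holds), while the only adjacent image pairs come from full moves, whose preimages are $G$-adjacent (so the ``if'' direction holds too). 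The embedding has size equal to the weight of $\mathcal{M}$. Since $\Gamma$ has $O(|V(G)|+|V(H)|)$ vertices, $O(|V(G)|\cdot|V(H)|)$ edges, and weights in $\{1,2\}$, a maximum-weight matching — hence the optimum of p-LISE and the answer to ``is the optimum $\ge k$?'' — is computable in polynomial time.

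I expect the main obstacle to be the correctness verification of the reduction: establishing the ``a half move consumes an entire $H$-object'' phenomenon (which is what lets half moves be represented as matching edges incident to whole $H$-objects rather than to individual endpoints), and, in the backward direction, confirming that no forbidden edge is ever created among the images, i.e.\ that the ``if and only if'' edge constraint is respected in both directions. Both points rely on repeatedly using $\Delta_G=\Delta_H=1$, so that two images can be adjacent in $H$ only if they are the two endpoints of a single $H$-edge, whose preimages are then forced to form a single $G$-edge.
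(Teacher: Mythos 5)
Your proposal is correct and follows essentially the same route as the paper: both reduce the optimization version to a maximum-weight bipartite matching between the edges-and-isolated-vertices of $G$ and those of $H$, with weight-$2$ edges for full embeddings of a $G$-edge onto an $H$-edge and weight-$1$ edges for the single-vertex assignments, justified by the same observation that partially using an $H$-object blocks the remainder of it. The only difference is that you additionally allow a weight-$1$ edge between a $G$-edge and an $H$-edge when only one endpoint is list-mappable, a half-move case the paper's construction does not list explicitly but which your analysis covers.
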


\begin{proof}
It suffices to show that the optimization (maximization) version opt-LISE of p-LISE is in $\Pol$. We proceed by reducing opt-LISE in the case where $\Delta_G=1$ and $\Delta_H=1$ to the well-known polynomial-time computable problem \textsc{Weighted Maximum Matching on Bipartite Graphs} (WMMBG):  Given a graph $B(V, E)$, a bipartition $V(B) = (X, Y)$, and weight function $w : E(B) \to \mathbb{R}$, find a matching $M \subseteq E(B)$ with maximum weight, where the weight of a matching $w(M) = \sum_{e \in M} w(e)$.

Let $(G, H, L, k)$ be an instance of opt-LISE.  To create $B$, we first create the vertices of $X$ from the edges and vertices of $G$.  For each edge $e \in E(G)$ where $e = uu'$, we add a vertex to $X$.  For each vertex $u \in G$ where $|N_G(u)| = 0$, we add a vertex to $X$.  We follow the same procedure to create the vertices of $Y$ from $H$.  To complete $B$ we add edges as follows:

\begin{enumerate}[(i)]
 \item For any two edges $e_1=uu' \in E(G)$ and $e_2=vv' \in E(H)$ such that $v \in L(u)$ and $v' \in L(u')$, we add an edge of weight 2 between the vertex in $X$ corresponding to $e_1$ and the vertex in $Y$ corresponding $e_2$.
 \item For each vertex $u \in G$ such that $|N_G(u)| = 0$ and for each vertex $v \in L(u)$ such that $|N_H(v)| = 0$, we add an edge of weight 1 between the vertex corresponding to $u$ in $X$ and the vertex corresponding to $v$ in $Y$.

 \item For each vertex $u \in G$ such that $|N_G(u)| = 0$, and for each vertex $v \in L(u)$ with neighbor $v' \in H$, we add an edge of weight 1 between the vertex corresponding to $u$ in $X$ and the vertex corresponding to the edge $vv'$ in $Y$. If both $v$ and $v'$ appear in $L(u)$ we add {\em only} one edge between the vertex corresponding to $u$ in $X$ and that corresponding to $vv'$ in $Y$ so that no multi-edges are created in $B$.

 \item Finally, for each vertex $u \in G$ with neighbor $u'$ in $G$, and for each vertex $v \in L(u)$ where $|N_H(v)| = 0$, we add an edge of weight 1 between the vertex in $X$ corresponding to $uu'$ and the vertex in $Y$ corresponding to $v$. If $v$ appears in both $L(u)$ and $L(u')$, we add {\em only} one edge between the vertex corresponding to $uu'$ in $X$ and that corresponding to $v$ in $Y$ so that no multi-edges are created in $B$.
\end{enumerate}

This completes the construction of $B$.  We show next that $B$ has a matching $M$ where $w(M) = |S|$ if and only if there is a subset $S \subseteq V(G)$ where $|S| = w(M)$ and there is a valid embedding $\phi : V(S) \to V(H)$.  We begin by showing that for every matching $M \subseteq E(B)$ we can construct a subgraph $S$ of $G$ with a valid embedding $\phi: V(S) \to V(H)$ such $|S| = w(M)$.  Given a matching $M$, for each edge $e \in M$ we add vertices to $S$ and define $\phi(u)$ for each vertex $u$ added using the following rules:

\begin{enumerate}[(i)]
  \item Let $e=xy \in M$ be an edge of weight 2, where $x \in X$ corresponds to an edge $uu' \in G$ and $y\in Y$ corresponds to an edge $vv' \in H$. By the construction of $B$, one of the two vertices $v, v'$, say $v$, must be in $L(u)$ and the other $v' \in L(u')$.  Add $u$ and $u'$ to $S$, define $\phi(u) = v$, and define $\phi(u') = v'$.
  \item Let $e=xy \in M$ be an edge of weight 1 where $x \in X$ corresponds to a vertex $u \in V(G)$, and $y \in Y$ corresponds to a vertex $v \in V(H)$.  By the construction of $B$, $v \in L(u)$, and $|N_G(u)| = |N_H(v)| =0$.  Add $u$ to $S$ and define $\phi(u) = v$.
  \item Let $e=xy \in M$ be an edge of weight 1, where $x \in X$ corresponds to a vertex $u \in V(G)$ and $y \in Y$ corresponds to an edge $vv' \in H$. By the construction of $B$, $v \in L(u)$ (without loss of generality).  Add $u$ to $S$ and define $\phi(u) = v$.
  \item Let $e=xy \in M$ be an edge of weight 1, where $x \in X$ corresponds to an edge $uu' \in G$, and $y \in Y$ corresponds to a vertex $v \in V(H)$.  By the construction of $B$, $v \in L(u)$ (without loss of generality).  Add $u$ to $S$ and define $\phi(u) = v$.
\end{enumerate}

The resulting subgraph $S$ satisfies $|S| = w(M)$ because for each edge $e \in M$ where $w(e) = 2$, we add two vertices to $S$, and for each edge $e \in M$ where $w(e) = 1$ we add one vertex to $S$. Moreover, the resulting map $\phi: V(S) \to V(H)$ is a valid embedding. In particular, according to the above rules, for each pair of vertices $u,u' \in V(G)$ added to $S$, $uu' \in E(G)$ if and only if $\phi(u)\phi(u') \in E(H)$.

It remains to show that for a subgraph $S$ of $G$ with a valid embedding $\phi : V(S) \to V(H)$ there is a matching $M \in E(B)$ such that $w(M) = |S|$.  Similarly to how a subgraph $S$ of $G$ and embedding $\phi : V(S) \to V(H)$ were defined from a matching $M \subseteq E(B)$, there are a few rules by which we can construct a matching $M \subseteq E(B)$ from a subgraph $S$ of $G$ and a valid embedding $\phi : V(S) \to V(H)$:

\begin{enumerate}[(i)]
  \item For an edge $e= uu' \in S$, $\phi(u)\phi(u')$ must be an edge in $H$. By the construction of $B$, there is a corresponding vertex $x \in X$ for $uu'$, a corresponding vertex $y \in Y$ for the edge $\phi(u)\phi(u') \in H$, and an edge $e=xy \in E(B)$ of weight 2.  Add $e$ to $M$.

  \item For a vertex $u \in S$ where $|N_G(u)| = 0$ and $|N_H(\phi(u))| = 0$, by the construction of $B$, there is a corresponding vertex $x \in X$ for $u$, a corresponding vertex $y \in Y$ for $\phi(u)$, and an edge $e=xy \in E(B)$ of weight 1.  Add $e$ to $M$.

  \item For an edge $uu' \in G$ where $u \in S$, $u' \notin S$, and $|N_H(\phi(u))| = 0$, by the construction of $B$, there is a corresponding vertex $x \in X$ for $uu'$, a corresponding vertex $y \in Y$ for $\phi(u)$, and an edge $e=xy \in E(B)$ of weight 1.  Add $e$ to $M$.

  \item For a vertex $u \in S$ where $|N_G(u)| = 0$ and an edge $vv' \in E(H)$ where $v = \phi(u)$, by the construction of $B$, there is a vertex $x \in X$ that corresponds to $u$, a vertex $y \in Y$ that corresponds to $vv'$, and an edge $e=xy \in E(B)$ of weight 1.  Add $e$ to $M$.
\end{enumerate}

Since $\phi$ is a valid embedding that embeds $S$ into $H$, it is not difficult to verify that the set of constructed edges $M$ is a matching, and that $w(M) = |S|$. This completes the proof. \end{proof}

We turn our attention next to discussing the fixed-parameter tractability results for p-OLSE when both $\Delta_G$ and $\Delta_L$ are $O(1)$ ($\Delta_H$ may be unbounded). Let $(G, H, \prec_G, \prec_H, L, k)$ be an instance of p-OLSE in which both $\Delta_G$ and $\Delta_L$ are upper bounded by a fixed constant. Consider the graph ${\cal G}$ whose vertex-set is $V(G) \cup V(H)$ and whose edge-set is $E(G) \cup E(H) \cup E_L$, where $E_L=\{uv \mid u \in G, v \in H, v \in L(u)\}$; that is, ${\cal G}$ is the union of $G$ and $H$ plus the edges that represent the mapping $L$. We perform the following {\em splitting} operation on the vertices of ${\cal G}$ (see Figure~\ref{fig:splitting} for illustration):

\begin{definition}\label{def:splitting}
Let $u$ be a vertex in ${\cal G}$ and assume that $u \in G$ (the operation is similar when $u \in H$). Suppose that the vertices of $G$ are ordered as $\langle u_1, \ldots, u_n\rangle$ with respect to $\prec_G$, and suppose that $u=u_i$, for some $i \in \{1, \ldots, n\}$. Let $e_1=uv_{i_1}, \ldots, e_r=uv_{i_r}$ be the edges incident to $u$ in $E_L$, and assume that $v_{i_1} \prec_H v_{i_2} \prec_H \ldots \prec_H v_{i_r}$. By {\em splitting} vertex $u$ we mean: (1) replacing $u$ in ${\cal G}$ with vertices $u_{i}^{1}, \ldots, u_{i}^{r}$ such that the resulting ordering of the vertices in $G$ with respect to $\prec_G$ is $\langle u_1, \ldots, u_{i-1}, u_{i}^{1}, \ldots, u_{i}^{r}, u_{i+1}, \ldots, u_n\rangle$; (2) removing all the edges $e_1, \ldots, e_r$ from ${\cal G}$ and replacing them with the edges $u_{i}^{1}v_{i_r}, u_{i}^{2}v_{i_{r-1}}, \ldots, u_{i}^{r}v_{i_1}$; and (3) replacing every edge $uu_j$ in $G$ with the edges $u_{i}^{s}u_j$, for $s=1, \ldots, r$.
\end{definition}

\begin{figure}
\begin{center}
\begin{minipage}[t]{0.45\linewidth}
\vspace{0pt}
\begin{tikzpicture}
[node/.style={fill,circle,inner sep = 1.5pt},
dot/.style={fill,circle,inner sep = 0.5pt}]

\draw (-3.2,2) node[dot] {};
\draw (-3,2) node[dot] {};
\draw (-2.8,2) node[dot] {};
\draw (-2.4,2) node[node,label=90:{$v_{i_1}$}] (vi1) {};
\draw (-2,2) node[dot] {};
\draw (-1.8,2) node[dot] {};
\draw (-1.6,2) node[dot] {};
\draw (-1.2,2) node[node,label=90:{$v_{i_2}$}] (vi2) {};
\draw (-.8,2) node[dot] {};
\draw (-.6,2) node[dot] {};
\draw (-.4,2) node[dot] {};
\draw (0,2) node[node,label=90:{$v_{i_3}$}] (vi3) {};
\draw (.4,2) node[dot] {};
\draw (.6,2) node[dot] {};
\draw (.8,2) node[dot] {};
\draw (1.2,2) node[node,label=90:{$v_{i_4}$}] (vi4) {};
\draw (1.6,2) node[dot] {};
\draw (1.8,2) node[dot] {};
\draw (2,2) node[dot] {};

\draw (vi1) .. controls +(45:0.4cm) and +(135:0.4cm) .. (vi2);
\draw (vi2) .. controls +(45:0.4cm) and +(135:0.4cm) .. (vi3);

\draw (-2,0) node[dot] {};
\draw (-1.8,0) node[dot] {};
\draw (-1.6,0) node[dot] {};
\draw (-1.2,0) node[node,label=-120:{$u_{j_{1}}$}] (uj1) {};
\draw (-0.8,0) node[dot] {};
\draw (-0.6,0) node[dot] {};
\draw (-0.4,0) node[dot] {};
\draw (0,0) node[node,label=-60:{$u_{j_{2}}$}] (uj2) {};
\draw (0.4,0) node[dot] {};
\draw (0.6,0) node[dot] {};
\draw (0.8,0) node[dot] {};

\draw (uj1) .. controls +(-45:0.4cm) and +(-135:0.4cm) .. (uj2);

\draw (vi1) -- (uj1);
\draw (vi2) -- (uj1);
\draw (vi3) -- (uj1);
\draw (vi2) -- (uj2);
\draw (vi4) -- (uj2);
\end{tikzpicture}
\end{minipage}
\begin{minipage}[t]{0.45\linewidth}
\vspace{0pt}
\begin{tikzpicture}
[node/.style={fill,circle,inner sep = 1.5pt},
dot/.style={fill,circle,inner sep = 0.5pt}]

\draw (-3.2,2) node[dot] {};
\draw (-3,2) node[dot] {};
\draw (-2.8,2) node[dot] {};
\draw (-2.4,2) node[node,label=90:{$v_{i_1}$}] (vi1) {};
\draw (-2,2) node[dot] {};
\draw (-1.8,2) node[dot] {};
\draw (-1.6,2) node[dot] {};
\draw (-1.2,2) node[node,label=90:{$v_{i_2}$}] (vi2) {};
\draw (-.8,2) node[dot] {};
\draw (-.6,2) node[dot] {};
\draw (-.4,2) node[dot] {};
\draw (0,2) node[node,label=90:{$v_{i_3}$}] (vi3) {};
\draw (.4,2) node[dot] {};
\draw (.6,2) node[dot] {};
\draw (.8,2) node[dot] {};
\draw (1.2,2) node[node,label=90:{$v_{i_4}$}] (vi4) {};
\draw (1.6,2) node[dot] {};
\draw (1.8,2) node[dot] {};
\draw (2,2) node[dot] {};

\draw (vi1) .. controls +(45:0.4cm) and +(135:0.4cm) .. (vi2);
\draw (vi2) .. controls +(45:0.4cm) and +(135:0.4cm) .. (vi3);

\draw (-3.2,0) node[dot] {};
\draw (-3,0) node[dot] {};
\draw (-2.8,0) node[dot] {};
\draw (-2.4,0) node[node,label=-90:{$u^1_{j_1}$}] (uj11) {};
\draw (-2,0) node[node] (uj12) {};
\draw (-2.25,0.55) node[] {$u^2_{j_1}$};
\draw (-1.6,0) node[node,label=5:{$u^3_{j_1}$}] (uj13) {};
\draw (-0.6,0) node[dot] {};
\draw (-0.4,0) node[dot] {};
\draw (-0.2,0) node[dot] {};
\draw (0.8,0) node[node,label=175:{$u^1_{j_2}$}] (uj21) {};
\draw (1.2,0) node[node,label=-75:{$u^2_{j_2}$}] (uj22) {};
\draw (1.6,0) node[dot] {};
\draw (1.8,0) node[dot] {};
\draw (2,0) node[dot] {};

\draw (uj11) .. controls +(-45:1.4cm) and +(-135:0.6cm) .. (uj21);
\draw (uj12) .. controls +(-45:1cm) and +(-135:0.6cm) .. (uj21);
\draw (uj13) .. controls +(-45:0.6cm) and +(-135:0.6cm) .. (uj21);
\draw (uj11) .. controls +(-45:2.2cm) and +(-135:0.8cm) .. (uj22);
\draw (uj12) .. controls +(-45:1.8cm) and +(-135:0.8cm) .. (uj22);
\draw (uj13) .. controls +(-45:1.4cm) and +(-135:0.8cm) .. (uj22);

\draw (vi1) -- (uj13);
\draw (vi2) -- (uj12);
\draw (vi3) -- (uj11);
\draw (vi2) -- (uj22);
\draw (vi4) -- (uj21);
\end{tikzpicture}
\end{minipage}
\end{center} \vspace*{-1cm}
\caption{Illustration of the splitting operation when applied to vertices $u_{j_1}$ and $u_{j_2}$.}\vspace*{-0.5cm}
\label{fig:splitting}
\end{figure}
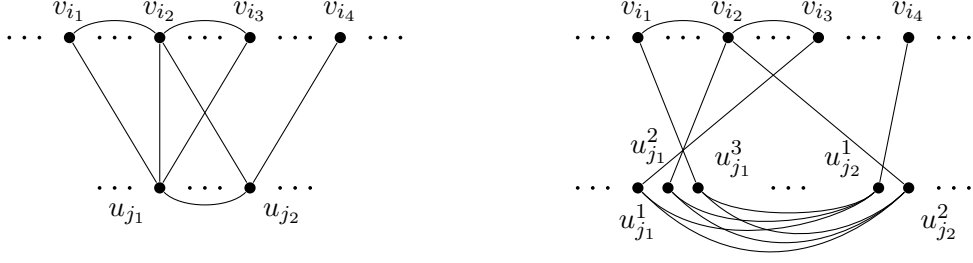

Let ${\cal G}_{split}$ be the graph resulting from ${\cal G}$ by splitting every vertex in $G$ and every vertex in $H$ (in an arbitrary order), where $G_{split}$ is the graph resulting from splitting the vertices in $G$ and $H_{split}$ that resulting from splitting the vertices of $H$. Let $E_{split}$ be the set of edges having one endpoint in $G_{split}$ and the other in $H_{split}$, $L_{split}: G_{split} \longrightarrow 2^{V(H_{split})}$ defined by $L_{split}(u) = \{v \mid uv \in E_{split}\}$ for $u \in V(G_{split})$, and let $\prec_{G_{split}}$ and $\prec_{H_{split}}$ be the orders on $G_{split}$ and $H_{split}$, respectively, resulting from $\prec_G, \prec_H$ after the splitting operation.

\begin{lemma}
\label{lem:splitting}
The graph ${\cal G}_{split}$ satisfies the properties: (i) for every $u \in V(G_{split})$ we have $deg_{G_{split}}(u) \leq \Delta_L \cdot \Delta_G$;\footnote{The degree of a vertex in $H_{split}$ may be unbounded.} (ii) in the graph $(V(G_{split}) \cup V(H_{split}), E_{split})$ every vertex has degree exactly 1 (in particular $|L_{split}(u)| =1$ for every $u \in V(G_{split})$), and (iii) the instance $(G, H, \prec_G, \prec_H, L, k)$ is a yes-instance of p-OLSE if and only if $(G_{split}, H_{split}, \prec_{G_{split}}, \prec_{H_{split}}, L_{split}, k)$ is.
\end{lemma}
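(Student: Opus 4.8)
The plan is to establish the three assertions roughly in order: parts (i) and (ii) are bookkeeping about the splitting operation, while part (iii) --- the equivalence of the original and split instances --- carries the real content. Throughout I will regard a solution of a p-OLSE instance as a pair $(W,\phi)$ with $W\subseteq V(G)$, $|W|=k$, and $\phi\colon W\to V(H)$ a valid embedding (the choice of edge set of $S$ is irrelevant, since condition (3) refers to $E(G)$). I will also assume that we split all vertices of $G$ first and then all vertices of $H$, which is harmless because the structural facts used below are order-independent, and that $|L(u)|\ge 1$ for every $u$, since a vertex with empty list lies in no valid embedding and splitting it merely deletes it.

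For (i): a vertex $u_i$ of $G$ is split into exactly $r=|L(u_i)|\le\Delta_L$ copies $u_i^1,\dots,u_i^r$, and by the last clause of Definition~\ref{def:splitting} each copy $u_i^s$ acquires, as $G_{split}$-neighbours, precisely the copies of the $G$-neighbours of $u_i$. As $u_i$ has at most $\Delta_G$ neighbours in $G$, each split into at most $\Delta_L$ copies (and no copy of $u_i$ is among them, $G$ being loopless), $deg_{G_{split}}(u_i^s)\le\Delta_G\cdot\Delta_L$. For (ii): the second clause of Definition~\ref{def:splitting} gives each copy $u_i^s$ exactly one $E_L$-edge, to the appropriate copy of a single vertex of $L(u_i)$; splitting the $H$-side afterwards does not change this, and symmetrically each copy of an $H$-vertex ends up with exactly one such edge. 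Hence every vertex of $(V(G_{split})\cup V(H_{split}),E_{split})$ has degree $1$, so $|L_{split}(u)|=1$ for all $u\in V(G_{split})$.

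Before proving (iii) I record two facts. First, if $u\prec_G u'$ then every copy of $u$ precedes every copy of $u'$ in $\prec_{G_{split}}$, and likewise on the $H$-side --- immediate from the first clause of Definition~\ref{def:splitting}, which inserts all copies of a vertex consecutively in its place. Second, for distinct original vertices $u,u'$, a copy of $u$ is adjacent to a copy of $u'$ in $G_{split}$ if and only if $uu'\in E(G)$ (the third clause copies an edge to all pairs of copies of its endpoints and creates no other edges), and the analogue holds for $H_{split}$ and $E(H)$. For the forward direction of (iii), given a solution $(W,\phi)$ of the original instance, let $\tilde u$ be the unique copy of $u$ whose $E_{split}$-partner is a copy of $\phi(u)$, set $W'=\{\tilde u:u\in W\}$, and let $\phi'(\tilde u)$ be that partner (forced, since $|L_{split}(\tilde u)|=1$). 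Then $|W'|=k$; $\phi'$ is injective because $\phi$ is and distinct originals have disjoint copy-sets; condition (1) holds by construction; and conditions (2) and (3) for $\phi'$ follow from those for $\phi$ using the two facts above. So $(W',\phi')$ is a solution of the split instance.

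The reverse direction of (iii) is the step I expect to be the main obstacle, and it is precisely where the order-reversal in the second clause of Definition~\ref{def:splitting} is essential. Let $(W',\phi')$ be a solution of the split instance. The key claim is that $W'$ contains at most one copy of each vertex of $G$: if $u_i^s,u_i^t\in W'$ with $s<t$, then $u_i^s\prec_{G_{split}}u_i^t$, while by the reversal the $E_{split}$-partner of $u_i^s$ is a copy of a vertex of $L(u_i)$ that is strictly larger in $\prec_H$ than the vertex of $L(u_i)$ whose copy is the partner of $u_i^t$; hence $\phi'(u_i^s)\succ_{H_{split}}\phi'(u_i^t)$, contradicting order-preservation. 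A symmetric argument, using that $\phi'(\tilde u)$ is the unique $E_{split}$-partner of $\tilde u$, shows $\phi'(W')$ contains at most one copy of each vertex of $H$. Consequently, sending each $\tilde u\in W'$ to its original $u$ produces a set $W\subseteq V(G)$ with $|W|=k$, and $\phi(u):=$ the original of $\phi'(\tilde u)$ is a well-defined injection $W\to V(H)$; condition (1) holds since $L_{split}(\tilde u)$ is a copy of a vertex of $L(u)$, and conditions (2), (3) for $\phi$ follow from those for $\phi'$ by reading the two facts above in reverse. This establishes the equivalence in (iii) and completes the proof.
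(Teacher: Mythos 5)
Your proof is correct and follows essentially the same route as the paper's: direct verification of (i) and (ii) from the definition of splitting, and for (iii) the natural correspondence that sends each solution vertex to the unique copy whose $E_{split}$-partner is a copy of its image, and conversely collapses copies back to originals. In fact you are more careful than the paper at the one delicate point of the reverse direction --- the paper merely asserts that a valid split solution contains at most one copy of each original vertex (and hits at most one copy of each $H$-vertex), whereas you derive this explicitly from the order-reversal in clause (2) of Definition~\ref{def:splitting}, which is exactly the reason that clause is there.
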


\begin{proof}
(i) Consider a vertex $u^r_i$ that is split from vertex $u$. $u^r_i$ is adjacent in $G_{split}$ to every vertex that was split from a neighbor of $u$.  Since $u$ has at most $\Delta_G$ many neighbors in $G_{split}$, and since every neighbor of $u$ was split into at most $\Delta_L$ vertices, it follows that $u^r_i$ has at most $\Delta_L \cdot \Delta_G$ many neighbors in $G_{split}$.

(ii) After splitting a vertex $u \in G$, all resulting vertices are of degree 1 in $(V(G_{split}) \cup V(H_{split}), E_L)$.  Moreover,  splitting the neighbors of $u$ does not change the degree of $u$ in $(V(G_{split}) \cup V(H_{split}), E_L)$.  Therefore, all vertices have degree 1 in $(V(G_{split}) \cup V(H_{split}), E_L)$ after splitting.

(iii) First, we show that if the instance $(G, H, \prec_G, \prec_H, L, k)$ is a yes-instance of p-OLSE then $(G_{split}, H_{split}, \prec_{G_{split}}, \prec_{H_{split}}, L_{split}, k)$ is also a yes-instance.  Let $S$ be a subgraph of $G$ where $|S| = k$ and let $\phi : V(S) \to V(H)$ be a valid embedding.  Let $u_i \in S$ and $v_j \in H$ where $\phi(u_i) = v_j$, without loss of generality, suppose that $u_i$ was split before $v_j$.  After splitting $u_i$ and before splitting $v_j$ there is exactly one vertex $u_i^p$ resulting from splitting $u_i$ such that there is an edge between $u_i^p$ and $v_j$.  After splitting $v_j$ there exists exactly one vertex $v_j^q$ resulting from splitting $v_j$ such that there is an edge $u_i^pv_j^q \in E_{L_{split}}$.  Add $u_j^p$ to the constructed solution $S_{split}$ in  $G_{split}$ and define (the constructed embedding) $\phi_{split}(u_i^p) = v_i^q$.  Given that for any two vertices $u_i, u_j \in S$ and vertices $u_i^r, u_j^s \in S_{split}$ split from $u_i$ and $u_j$ (and similarly for vertices in $H$ and $H_{split}$), $u_i \prec_G u_j$ if and only if $u_i^r \prec_{G_{split}} u_j^s$, and $u_iu_j \in E(G)$ if and only if $u_i^ru_j^s \in E(G_{split})$, it follows that $\phi_{split}$ is a valid embedding from $S_{split}$ into $V(H_{split})$.

Conversely, we show that if $(G_{split}, H_{split}, \prec_{G_{split}}, \prec_{H_{split}}, L_{split}, k)$ is a yes-instance of p-OLSE then so is $(G, H, \prec_G, \prec_H, L, k)$.  Let $S_{split}$ be a subgraph of $G_{split}$ where $|S_{split}| = k$ and let $\phi_{split} : V(S_{split}) \to V(H_{split})$ be a valid embedding.  For each pair of vertices $u_i^p \in S_{split}$ and $v_j^q \in H_{split}$ where $\phi_{split}(u_i^p) = v_j^q$ there are vertices $u_i \in G$ and $v_j \in H$ such that $u_i^p$ and $v_j^q$ were split from $u_i$ and $v_j$, respectively, and $u_iv_j \in E_L$.  Furthermore, for each pair of vertices $u_i^p, u_{i'}^q \in S_{split}$, $u_i^p$ and $u_{i'}^q$ are split from vertices $u_i, u_{i'}$, respectively, where $i \ne i'$.  Thus we can define a subgraph $S$ of $G$ where $u_i \in S$ if any vertex $u_i^p$ split from $u_i$ is in $S_{split}$ and a map $\phi : V(S) \to V(H)$ where $\phi(u_i) = v_j$ if $\phi_{split}(u_i^p) = v_j^q$.  Given that for any two vertices $u_i, u_j \in S$ and vertices $u_i^r, u_j^s \in S_{split}$ split from $u_i$ and $u_j$ (and similarly for vertices in $H$ and $H_{split}$), $u_i \prec_G u_j$ if and only if $u_i^r \prec_{G_{split}} u_j^s$, and $u_iu_j \in E(G)$ if and only if $u_i^ru_j^s \in E(G_{split})$, it follows that $\phi$ is a valid embedding that embeds $S$ into $H$.
\end{proof}

Next, we perform the following operation, denoted {\bf Simplify}, to ${\cal G}_{split}$. Observe that every vertex in $(V(G_{split}) \cup V(H_{split}), E_{split})$ has degree 1 by part (ii) of Lemma~\ref{lem:splitting}. Let $u, u' \in G_{split}$ and let $\{v\}=L_{split}(u)$ and $\{v'\}=L_{split}(u')$. If either (1) $uu' \notin E(G_{split})$ but $vv' \in E(H_{split})$, or (2) both $uu' \in E(G_{split})$ and $vv' \in E(H_{split})$, then we can remove edge $vv'$ from $E(H_{split})$ in case (1) and we can remove both edges $uu'$ and $vv'$ in case (2) without affecting any embedding constraint. Without loss of generality, we will still denote by $(G_{split}, H_{split}, \prec_{G_{split}}, \prec_{H_{split}}, L_{split}, k)$ the resulting instance after the removal of the edges satisfying cases (1) and (2) above. Note that $E(H_{split})= \emptyset$ at this point, and hence if $uu' \in E(G_{split})$ then no valid list embedding can be defined on a subgraph that includes both $u$ and $u'$. Note also that ${\cal G}_{split} - E(G_{split})$ is a realization of a permutation graph $P$ in which the vertices of $G_{split}$ can be arranged on one line according to the order induced by $\prec_{G_{split}}$, and the vertices of $H_{split}$ can be arranged on a parallel line according to the order induced by $\prec_{H_{split}}$. The vertex-set of $P$ corresponds to the edges in $E_{split}$, and two vertices in $P$ are adjacent if and only if their two corresponding edges cross. Note that two vertices in $P$ correspond to two edges of the form $e=uv$ and $e'=u'v'$, where $u, u' \in G_{split}$ and $v, v' \in H_{split}$.
Let ${\cal I}$ be the graph whose vertex-set is $V(P)$ and whose edge set is $E(P) \cup E_c$, where $E_c= \{ee' \mid e, e' \in V(P), e=uv, e'=u'v', uu' \in E(G_{split})\}$ is the set of {\em conflict edges}; that is, ${\cal I}$ consists of the permutation graph $P$ plus the set of conflict edges $E_c$, where each edge in $E_c$ joins two vertices in $P$ whose corresponding endpoints in $G_{split}$ cannot both be part of a valid solution.

\begin{lemma}
\label{lem:boundedconflictdegree}
For every vertex $e \in {\cal I}$, the number of conflict edges incident to $e$ in ${\cal I}$, denoted $deg_c(e)$, is at most $\Delta_L \cdot \Delta_G$.
\end{lemma}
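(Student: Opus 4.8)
The plan is to derive the bound directly from the two structural facts about ${\cal G}_{split}$ established in Lemma~\ref{lem:splitting}. Recall that a vertex of ${\cal I}$ is an edge $e = uv \in E_{split}$ with $u \in V(G_{split})$ and $v \in V(H_{split})$, and that, by the definition of the conflict edge set $E_c$, two vertices $e = uv$ and $e' = u'v'$ of ${\cal I}$ are joined by a conflict edge exactly when $uu' \in E(G_{split})$. So I would fix a vertex $e = uv$ of ${\cal I}$ and show that the conflict edges incident to $e$ are in one-to-one correspondence with the neighbors of $u$ in $G_{split}$; the degree bound then follows from part (i) of Lemma~\ref{lem:splitting}.

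First I would argue that the assignment $e' \mapsto u'$, sending a conflict neighbor $e' = u'v'$ of $e$ to its $G_{split}$-endpoint, is a bijection onto $N_{G_{split}}(u)$. It is injective because, by part (ii) of Lemma~\ref{lem:splitting}, every vertex of $G_{split}$ is incident to exactly one edge of $E_{split}$; hence each $u' \in V(G_{split})$ is the $G_{split}$-endpoint of a unique vertex of $P$. It is surjective, and lands in $N_{G_{split}}(u)$ without hitting $u$ itself since $G_{split}$ has no loops, directly from the definition of $E_c$: every $u' \in N_{G_{split}}(u)$ yields the conflict neighbor $e' = u'v'$ of $e$, where $v'$ is the unique $E_{split}$-partner of $u'$. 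Consequently $deg_c(e) = |N_{G_{split}}(u)| = deg_{G_{split}}(u)$, and part (i) of Lemma~\ref{lem:splitting} gives $deg_{G_{split}}(u) \le \Delta_L \cdot \Delta_G$, which is the claim.

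I do not anticipate a genuine obstacle: the only point requiring care is the bijection step, namely making sure that distinct conflict neighbors of $e$ really do have distinct $G_{split}$-endpoints, so that counting conflict edges at $e$ reduces to counting neighbors of $u$ in $G_{split}$ rather than overcounting. One should also note that the {\bf Simplify} operation only ever deletes edges of $G_{split}$ (and of $H_{split}$), so it can only decrease $deg_{G_{split}}(u)$ and therefore cannot invalidate the bound.
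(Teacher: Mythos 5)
Your proposal is correct and follows essentially the same route as the paper's proof: charge each conflict edge at $e=uv$ to a distinct neighbor of $u$ in $G_{split}$ and invoke part~(i) of Lemma~\ref{lem:splitting}. You simply make explicit (via part~(ii), the uniqueness of each vertex's $E_{split}$-partner) the injectivity that the paper leaves implicit, and correctly note that {\bf Simplify} only removes edges and so cannot hurt the bound.
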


\begin{proof}
By part (i) of Lemma~\ref{lem:splitting}, the degree of a vertex $u \in G_{split}$ is at most $\Delta_L \cdot \Delta_G$.  Note that every conflict edge incident to $e \in {\cal I}$ corresponds to an edge in ${\cal G}$ incident to some $u \in G_{split}$.  Thus for each vertex $e \in {\cal I}$ the vertex $u \in G$ incident to $e$ has at most $\Delta_L \cdot \Delta_G$ neighbors.  The result follows.
\end{proof}

\begin{lemma}
\label{lem:equivalence}
The instance $(G_{split}, H_{split}, \prec_{G_{split}}, \prec_{H_{split}}, L_{split}, k)$, and hence \\ $(G, H, \prec_G, \prec_H, L, k)$, before {\bf Simplify} is applied is a yes-instance of p-OLSE if and only if ${\cal I}$ has an independent set of size $k$.
\end{lemma}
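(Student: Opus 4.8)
The plan is to establish the ``if and only if'' in two stages: first show that the {\bf Simplify} operation does not change whether the instance is a yes-instance of p-OLSE, and then show that the \emph{simplified} instance is a yes-instance if and only if ${\cal I}$ has an independent set of size $k$.

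For the first stage, the crucial fact is that, by part (ii) of Lemma~\ref{lem:splitting}, $E_{split}$ is a perfect matching between $V(G_{split})$ and $V(H_{split})$; hence $L_{split}(u)$ is a singleton for every $u \in V(G_{split})$, each $v \in V(H_{split})$ has a unique matched partner, and for any vertex subset $S \subseteq V(G_{split})$ the embedding $\phi$ is completely forced ($\phi(u)$ must be the partner of $u$) and automatically injective. Now whenever {\bf Simplify} deletes an edge $vv'$ from $E(H_{split})$ in case (1) or (2), the only pair of $G_{split}$-vertices whose forced images are $\{v,v'\}$ is the matched pair $\{u,u'\}$: in case (1) $uu' \notin E(G_{split})$, so constraint (3) never refers to $vv'$, and in case (2) the edge $uu'$ is deleted from $E(G_{split})$ at the same time, so again constraint (3) imposes nothing on the pair after the deletion. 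Therefore a subset $S$ together with its forced $\phi$ is a valid solution before {\bf Simplify} exactly when it is one afterwards, so the two instances have the same answer.

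For the second stage, work with the simplified instance, where $E(H_{split}) = \emptyset$. I will use the natural bijection between $k$-vertex subgraphs $S$ of $G_{split}$ (recall that for p-OLSE such an $S$ is determined by $V(S)$, since constraint (3) refers only to $E(G_{split})$) and $k$-subsets of $V(P) = E_{split}$ that sends $u \in V(S)$ to the matching edge $u\phi(u) \in E_{split}$, and verify that the forced embedding $\phi$ of $S$ is valid precisely when the corresponding subset of $V(P)$ is independent in ${\cal I}$. Constraint (1) and injectivity are automatic. Constraint (2) holds for a pair $u, u'$ if and only if the segments $u\phi(u)$ and $u'\phi(u')$ do not cross in the permutation-graph realization of $P$ (this is exactly the definition of $P$ together with the definitions of $\prec_{G_{split}}$ and $\prec_{H_{split}}$), i.e.\ if and only if the two corresponding vertices of $P$ are non-adjacent in $P$; so constraint (2) holds for all pairs if and only if the subset is independent with respect to $E(P)$. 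Since $E(H_{split}) = \emptyset$, constraint (3) says exactly that $V(S)$ is an independent set of $G_{split}$, which by the definition of the conflict edges $E_c$ means precisely that the subset is independent with respect to $E_c$. Hence $\phi$ is valid if and only if the subset is independent in ${\cal I} = (V(P), E(P)\cup E_c)$; specializing to $|S| = k$ gives the second-stage equivalence, and chaining the two stages — and invoking part (iii) of Lemma~\ref{lem:splitting} to pass to $(G, H, \prec_G, \prec_H, L, k)$ — yields the lemma.

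The individual steps are routine; the only point needing care (rather than a genuine obstacle) is the bookkeeping: keeping straight that $P$ and $E_c$ are built from the \emph{simplified} $G_{split}, H_{split}$, so that the ``before {\bf Simplify}'' phrasing in the statement has to be bridged by the first stage, and recognizing that it is the perfect-matching property of $E_{split}$ that forces (and makes injective) the embedding, which is exactly what collapses the p-OLSE instance to an {\sc Independent Set} instance on ${\cal I}$.
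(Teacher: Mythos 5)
Your proof is correct and follows essentially the same route as the paper's: both rest on the observations that the perfect-matching property of $E_{split}$ forces the embedding, that crossings in the permutation realization correspond exactly to violations of constraint (2), and that the conflict edges $E_c$ capture exactly the pairs forbidden by constraint (3) once $E(H_{split})=\emptyset$. The only difference is organizational --- you factor out the claim that \textbf{Simplify} preserves the answer as a separate first stage, whereas the paper handles the removed edges inline within the two directions of the equivalence --- and this does not change the substance of the argument.
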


\begin{proof}
A size-$k$ independent set $I$ in ${\cal I}$ corresponds to a set of $k$ edges $u_{i_1}v_{j_1}, \ldots, u_{i_k}v_{j_k}$ in ${\cal G}_{split}$ such that $u_{i_1} \prec_{G_{split}} \ldots \prec_{G_{split}} u_{i_k}$,
$v_{j_1} \prec_{H_{split}} \ldots \prec_{H_{split}} v_{j_k}$, and $S=G_{split}[\{u_{i_1}, \ldots, u_{i_k}\}]$ is a subgraph in $G_{split}$ whose vertices form an independent set. Clearly, the embedding $\phi(u_{i_s}) = \{v_{j_s}\}$, $s=1, \ldots, k$, is a valid embedding that embeds $S$ into $H_{split}$ because it respects both $\prec_{G_{split}}, \prec_{H_{split}}$, and because it respects the embedding constraints. To see why the latter statement is true, note that, for any two vertices $u_{i_s}$ and $u_{i_r}$ ($r \neq s$) in $S$, either there was no edge between $u_{i_s}$ and $u_{i_r}$ before the application of the operation {\bf Simplify}, or there was an edge and got removed by {\bf Simplify}, and in this case there must be also an edge between $v_{j_s}$ and $v_{j_r}$ in $H_{split}$; in either case, $\phi$ respects the embedding constraints.

Conversely, let $\phi$ be a valid embedding that embeds a subgraph $S$ of size $k$ where $V(S)=\{u_{i_1}, \ldots, u_{i_k}\}$, and $\phi(u_{i_s}) = v_{j_s}$, for $s=1, \ldots, k$. We claim that the set of vertices $I=\{e_1=u_{i_1}v_{j_1}, \ldots, e_k=u_{i_k}v_{j_k}\}$ is an independent set in ${\cal I}$. Since $\phi$ is a valid list embedding, no edge in $P$ exists between any two vertices in $I$. Let $e_r=u_{i_r}v_{j_r}$ and $e_s=u_{i_s}v_{j_s}$ be two vertices in $I$, where $r\neq s$. If there is no edge between $u_{i_r}$ and $u_{i_s}$ in $E(G_{split})$, then no edge exists between $e_s$ and $e_r$ in ${\cal I}$. On the other hand, if there is an edge between $u_{i_r}$ and $u_{i_s}$ in $E(G_{split})$, then because $\phi$ is a valid embedding, there must be an edge as well between $v_{i_r}$ and $v_{i_s}$. After applying {\bf Simplify}, the edge between $u_{i_r}$ and $u_{i_s}$ will be removed, and hence no edge exists between $e_s$ and $e_r$ in ${\cal I}$. It follows that $I$ is a size-$k$ independent set in ${\cal I}$.\end{proof}

\begin{lemma}\label{lem:separation}
Let ${\cal C}$ be a hereditary class\footnote{The class ${\cal C}$ is closed under taking subgraphs; that is, every subgraph of a graph in ${\cal C}$ is also in ${\cal C}$.} of graphs on which the {\sc Independent Set} problem is solvable in polynomial time, and let $\Delta \geq 0$ be a fixed integer constant. Let ${\cal C'} =  \{ {\cal I} = (V(P), E(P) \cup E_c) \mid P \in {\cal C} , E_c \subseteq V(P) \times V(P) \}$, where at most $\Delta$ edges in  $E_c$ are incident to any vertex in ${\cal I}$.  Assuming that a graph in ${\cal C'}$ is given as $(V(P), E(P) \cup E_c)$ ($E_c$ is given), the {\sc Independent Set} problem can be solved in $\Ohstar(2^{(\Delta + 1)k}(\Delta (k + 1))^{\Oh(\log(\Delta + 1)k)})$ time on graphs in the class ${\cal C'}$.
\end{lemma}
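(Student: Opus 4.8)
The plan is to apply the random separation technique of Cai et al.~\cite{cai} to the graph $\mathcal{I}=(V(P),E(P)\cup E_c)$. The guiding observation is that $\mathcal{I}$ differs from the ``easy'' graph $P\in{\cal C}$ only in the conflict edges $E_c$, each vertex being incident to at most $\Delta$ of them; so if we can separate a hypothetical solution from the (few) endpoints it reaches through conflict edges, we are left with an \textsc{Independent Set} instance on an \emph{induced} subgraph of $P$, which lies in ${\cal C}$ by heredity and is therefore solvable in polynomial time. Note that we deliberately do \emph{not} try to separate the solution from its neighbourhood in $P$, which could be arbitrarily large.

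Concretely, first colour every vertex of $\mathcal{I}$ independently red or green. Suppose $S$ is an independent set of size $k$ in $\mathcal{I}$, and let $N_c(S)$ be the set of vertices outside $S$ joined to $S$ by an edge of $E_c$; by the degree hypothesis of the lemma, $deg_c(\cdot)\le\Delta$, so $|N_c(S)|\le\Delta k$. The probability that every vertex of $S$ is green and every vertex of $N_c(S)$ is red is $2^{-(k+|N_c(S)|)}\ge 2^{-(\Delta+1)k}$. Given a colouring for which this good event occurs, let $V_g^{\circ}$ be the set of green vertices having no green conflict-neighbour. Then $S\subseteq V_g^{\circ}$ (each vertex of $S$ is green and all of its conflict-neighbours lie in $N_c(S)$, hence are red), and by the definition of $V_g^{\circ}$ the graph $\mathcal{I}[V_g^{\circ}]$ contains no conflict edge whatsoever, so $\mathcal{I}[V_g^{\circ}]=P[V_g^{\circ}]\in{\cal C}$. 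Running the assumed polynomial-time \textsc{Independent Set} algorithm on $P[V_g^{\circ}]$ therefore returns an independent set of size $\ge k$ whenever the good event occurs; conversely, any independent set returned this way is also an independent set of $\mathcal{I}$, so the procedure is correct once it ranges over enough colourings.

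The final step is derandomisation. We need a family of red/green colourings of $V(\mathcal{I})$ such that for every pair of disjoint vertex sets $A,B$ with $|A|\le k$ and $|B|\le\Delta k$ there is a colouring in the family making all of $A$ green and all of $B$ red; this is precisely an $(n,t)$-universal set with $t=(\Delta+1)k$, where $n=|V(\mathcal{I})|$. The construction of Naor et al.~\cite{naor} produces such a family of size $2^{t}t^{\Oh(\log t)}\log n$ in time $\Oh(2^{t}t^{\Oh(\log t)}\log n)$. Iterating over all colourings in this family the polynomial-time procedure of the previous paragraph (which computes $V_g^{\circ}$ and then solves \textsc{Independent Set} on $P[V_g^{\circ}]$) yields a deterministic algorithm running in time $\Ohstar(2^{(\Delta+1)k}((\Delta+1)k)^{\Oh(\log((\Delta+1)k))})$, which is of the claimed form, the polynomial overhead per colouring being absorbed by $\Ohstar$.

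I expect the only real subtlety to be the structural reduction in the second step: verifying that it suffices to separate $S$ from its \emph{conflict}-neighbourhood alone (not from its $P$-neighbourhood), and that passing from the green vertices to $V_g^{\circ}$ destroys \emph{every} conflict edge while retaining all of $S$, so that the residual instance genuinely falls inside the hereditary class ${\cal C}$. Everything else is a routine instantiation of random separation, together with the bound $t\le(\Delta+1)k$ that comes from $|N_c(S)|\le\Delta k$.
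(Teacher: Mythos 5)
Your proposal is correct and follows essentially the same route as the paper: random separation applied only to the conflict edges $E_c$, with success probability at least $2^{-(\Delta+1)k}$, followed by restricting to the green vertices with no green conflict-neighbour (the paper's $V_0$ is exactly your $V_g^{\circ}$), observing that the residual graph is a subgraph of $P$ and hence in the hereditary class ${\cal C}$, and derandomising via the universal sets of Naor et al. The one structural point you flag as a potential subtlety --- that one need only separate $S$ from its conflict-neighbourhood, not its $P$-neighbourhood --- is precisely the observation the paper's proof rests on, and you have handled it correctly.
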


\begin{proof}
Let $({\cal I} = (V(P), E(P) \cup E_c), k)$ be an instance of {\sc Independent Set}, where ${\cal I} \in {\cal C'}$.
We use the random separation method introduced by Cai et al.~\cite{cai}; this method can be de-randomized in fpt-time using the notion of universal sets and perfect hash functions~\cite{alon95,naor,schmidt}.

We apply the random separation method to the subgraph $(V(P), E_C)$, and color the vertices in $V(P)$ with two colors, ``green" or ``red", randomly and independently. If $I$ is an independent set of size $k$ in ${\cal I}$, since there are at most $\Delta$ edges of $E_c$ that are incident to any vertex in ${\cal I}$, the probability that all vertices in $I$ are colored green and all their neighbors along the edges in $E_c$ are colored red, is at least $2^{-k + \Delta k} = 2^{-(\Delta+1)k}$. If a size-$k$ independent set exists, using universal sets and perfect hash functions, we can find a 2-coloring that will result in the independent set vertices being colored green, and all their neighbors along edges in $E_c$ being colored red in time $\Ohstar(2^{(\Delta + 1)k}(\Delta k + k)^{\Oh(\log(\Delta + 1)k)})$. Therefore, it suffices to determine, given a 2-colored graph ${\cal I}$, whether there is an independent set of size $k$ consisting of green vertices whose neighbors along the edges in $E_c$ are red vertices. We explain how to do so next.

Suppose that the vertices in ${\cal I}$ are colored green or red, and let ${\cal I}_g$ be the subgraph of ${\cal I}$ induced by the green vertices, and ${\cal I}_r$ that induced by the red vertices. Notice that if there is an independent set $I$ consisting of $k$ green vertices whose neighbors along the edges in $E_c$ are red, then for each vertex $u$ in $I$, $u$ is an isolated vertex in the graph $(V({\cal I}_g), E_c)$. Moreover, since $I$ is an independent set, then no edge in $E(P)$ exists between any two vertices in $I$. Therefore, if we form the subgraph $G_0 = (V_0, E_0)$, where $V_0$ is the set of vertices in ${\cal I}_g$ that are isolated with respect to the set of edges $E_c$, and $E_0$ is the set of edges in $E(P)$ whose both endpoints are in $V_0$, then $I$ is an independent set in $G_0$. On the other hand, any independent set of $G_0$ is also an independent set of ${\cal I}$.  Since $G_0$ is a subgraph of $P \in {\cal C}$ and ${\cal C}$ is hereditary, it follows that $G_0 \in {\cal C}$ and we can compute a maximum independent set $I_{max}$ in $G_0$ in polynomial time. If $|I_{max}| \geq k$, then we accept the instance; otherwise, we try the next 2-coloring. If no 2-coloring results in an independent set of size at least $k$, we reject. From the above, given a 2-coloring, clearly we can decide if there is an independent set of size $k$ consisting of green vertices whose neighbors along the edges in $E_c$ are red vertices. The result follows.\end{proof}

\begin{theorem}
\label{thm:mainboundeddegree}
The p-OLSE problem restricted to instances in which $\Delta_G=O(1)$ and $\Delta_L=O(1)$ is $\FPT$.
\end{theorem}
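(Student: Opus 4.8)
The plan is to assemble the machinery built up in Lemmas~\ref{lem:splitting}--\ref{lem:separation} into a single fpt-algorithm. Given an instance $(G, H, \prec_G, \prec_H, L, k)$ of p-OLSE with $\Delta_G \le c_1$ and $\Delta_L \le c_2$ for fixed constants $c_1, c_2$, I would first build the graph ${\cal G}$ and perform the splitting operation of Definition~\ref{def:splitting} on every vertex of $G$ and of $H$, obtaining ${\cal G}_{split}$ together with the instance $(G_{split}, H_{split}, \prec_{G_{split}}, \prec_{H_{split}}, L_{split}, k)$. By Lemma~\ref{lem:splitting}, this new instance is equivalent to the original one, every vertex of $G_{split}$ satisfies $|L_{split}(u)| = 1$, and $\deg_{G_{split}}(u) \le \Delta_L \cdot \Delta_G \le c_1 c_2$. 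Splitting replaces each vertex by at most $\max\{\Delta_L, \Delta_H\}$ copies, so ${\cal G}_{split}$ has polynomial size and is computable in polynomial time.

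Next I would apply the {\bf Simplify} operation, deleting from $G_{split}$ and $H_{split}$ the edges whose embedding constraints are automatically satisfied (cases (1) and (2) preceding Lemma~\ref{lem:boundedconflictdegree}); this leaves $E(H_{split}) = \emptyset$ without changing the set of valid embeddings. Then I would construct the permutation graph $P$ whose vertex-set corresponds to the edges of $E_{split}$ in the two-line realization induced by $\prec_{G_{split}}$ and $\prec_{H_{split}}$ (two vertices adjacent iff the corresponding segments cross), and the graph ${\cal I} = (V(P), E(P) \cup E_c)$, explicitly recording the conflict-edge set $E_c$. By Lemma~\ref{lem:boundedconflictdegree}, at most $\Delta_L \cdot \Delta_G \le c_1 c_2$ conflict edges are incident to any vertex of ${\cal I}$, and by Lemma~\ref{lem:equivalence} the original instance is a yes-instance of p-OLSE if and only if ${\cal I}$ has an independent set of size $k$.

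Finally, I would invoke Lemma~\ref{lem:separation} with ${\cal C}$ the class of permutation graphs and $\Delta = c_1 c_2$. Permutation graphs form a hereditary class on which {\sc Independent Set} is solvable in polynomial time (a maximum independent set corresponds to a longest increasing subsequence in the underlying permutation), and the conflict set $E_c$ has been produced explicitly during the construction, so the hypotheses of Lemma~\ref{lem:separation} are met. Since $\Delta = c_1 c_2$ is a constant, the running time $\Ohstar(2^{(\Delta+1)k}(\Delta(k+1))^{\Oh(\log(\Delta+1)k)})$ furnished by the lemma is of the form $f(k) \cdot n^{\Oh(1)}$; adding the polynomial cost of splitting, {\bf Simplify}, and the construction of $P$ and ${\cal I}$ keeps the entire procedure within fpt-time, which proves the theorem.

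I expect no genuinely hard step here; the content is essentially bookkeeping across the preceding lemmas. The point that most deserves care is confirming that the parameter handed to Lemma~\ref{lem:separation} is the \emph{constant} $\Delta_L \cdot \Delta_G$ and not a quantity growing with $k$, so that the factor $(\Delta(k+1))^{\Oh(\log(\Delta+1)k)}$ is indeed bounded by a function of $k$ alone; and ensuring that the conflict set $E_c$ is emitted as part of the construction, since Lemma~\ref{lem:separation} assumes $E_c$ is given explicitly with the input.
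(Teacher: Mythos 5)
Your proposal is correct and follows exactly the route the paper takes: split via Definition~\ref{def:splitting}, apply \textbf{Simplify}, build ${\cal I} = (V(P), E(P) \cup E_c)$ with $P$ a permutation graph, and invoke Lemma~\ref{lem:equivalence} together with Lemma~\ref{lem:boundedconflictdegree} and Lemma~\ref{lem:separation}. The two points you flag for care (that $\Delta = \Delta_L \cdot \Delta_G$ is a constant and that $E_c$ is explicitly available) are indeed the only hypotheses that need checking, and both hold as you argue.
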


\begin{proof}
Let $(G, H, \prec_G, \prec_H, L, k)$ be an instance of p-OLSE in which both $\Delta_G$ and $\Delta_L$ are upper bounded by a fixed constant. We form the graph ${\cal G}$ and perform the splitting operation described in Definition~\ref{def:splitting} to obtain the instance $(G_{split}, H_{split}, \prec_{G_{split}}, \prec_{H_{split}}, L_{split}, k)$. By Lemma~\ref{lem:splitting}, $(G, H, \prec_G, \prec_H, L, k)$ is a yes-instance of p-OLSE if and only if $(G_{split}, H_{split}, \prec_{G_{split}}, \prec_{H_{split}}, L_{split}, k)$ is. We apply the operation {\bf Simplify} to the instance and construct the graph ${\cal I} = (V(P), E(P) \cup E_c)$ as described above, where $P$ is a permutation graphs. Note that the set of edges $E_c$ is known to us. By Lemma~\ref{lem:equivalence}, $(G, H, \prec_G, \prec_H, L, k)$ is a yes-instance of p-OLSE if and only if ${\cal I}$ has an independent set of size $k$. Observe that we can perform the splitting and the {\bf Simplify} operations, and construct ${\cal I}$ in polynomial time. Since the {\sc Independent Set} problem is solvable in polynomial time on the class of permutation graphs (\eg, see~\cite{kim}), the class of permutation graphs is hereditary, and every vertex in ${\cal I}$ has at most $\Delta_L \cdot \Delta_G$ edges in $E_c$ incident to it by Lemma~\ref{lem:boundedconflictdegree}, it follows from Lemma~\ref{lem:separation} that we can decide if ${\cal I}$ has an independent set of size $k$ in fpt-time, and hence decide the instance $(G, H, \prec_G, \prec_H, L, k)$ in fpt-time. \end{proof}

Unfortunately, the above result does not hold true for p-OLISE, even when $\Delta_G = O(1)$, $\Delta_H=O(1)$, and $\Delta_L = O(1)$, because the number of vertices in $G$ that have the same vertex $v \in H$ in their list may be unbounded.

From Proposition~\ref{prop:whardness}, we know that in the case when $\Delta_L$ is unbounded, and both $\Delta_G = 1$ and $\Delta_H = 1$, p-OLSE is $W[1]$-hard. The following proposition says that the condition $\Delta_H = 1$ is essential for this $W$-hardness result:

\begin{proposition}
\label{prop:randomsimple}
The p-OLSE and p-OLISE problems restricted to instances in which $\Delta_H=0$, $\Delta_G=O(1)$ (resp. $\Delta_G=0$ and $\Delta_H=O(1)$ for p-OLISE by symmetry) and $\Delta_L=\infty$ are $\FPT$.
\end{proposition}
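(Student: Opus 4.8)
The plan is to use the random separation technique of Cai et al.\ (already invoked for Theorem~\ref{thm:mainboundeddegree}), but in a far more direct way, together with the dynamic program of Proposition~\ref{prop:noedges}. First I would observe that when $\Delta_H=0$ the edge set of $H$ is empty, so the embedding constraint of both p-OLSE and p-OLISE forces any subgraph $S$ admitting a valid list embedding to be an \emph{independent set} of $G$ (for p-OLSE because $uu'\in E(G)$ would require $\phi(u)\phi(u')\in E(H)=\emptyset$; for p-OLISE the same implication still applies). Hence the problem becomes: find an independent set $S$ of $G$ with $|S|=k$ together with an order-preserving list embedding of $S$ into the linearly ordered vertex set of $H$. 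In the symmetric case of p-OLISE, where $\Delta_G=0$, the roles of $G$ and $H$ are interchanged: $G$ is edgeless, so the image $\phi(S)$ must be an independent set of $H$.

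Next I would apply random separation to $G$ (resp.\ to $H$ in the symmetric case). Color every vertex of $G$ green or red independently and uniformly. If a solution $S$ exists then, because $\Delta_G=O(1)$, we have $|N_G(S)|\le \Delta_G k$, so with probability at least $2^{-(k+|N_G(S)|)}\ge 2^{-(\Delta_G+1)k}$ all of $S$ is colored green and all of $N_G(S)$ is colored red. Condition on such a coloring. Let $V_0$ be the set of green vertices all of whose $G$-neighbours are red, equivalently the green vertices that are isolated in the subgraph of $G$ induced by the green vertices. Then $S\subseteq V_0$, and $V_0$ is an independent set of $G$: two adjacent green vertices would each be a green neighbour of the other. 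Since $G[V_0]$ has no edges, the embedding constraint is vacuous on $V_0$, so it only remains to decide whether some size-$k$ subset of $V_0$ admits an order-preserving list embedding into $H$ --- but this is exactly an instance of opt-OLSE (equivalently opt-OLISE) with $\Delta_G=\Delta_H=0$ on $(G[V_0],H)$ with the restricted lists, which by Proposition~\ref{prop:noedges} is solvable in $O(|V(G)|\cdot|V(H)|)$ time. We accept iff the returned optimum is at least $k$. The symmetric case of p-OLISE is handled by running the same argument with $G$ and $H$ swapped: random-separate $H$, restrict to the green independent set $W_0\subseteq V(H)$, restrict each list $L(u)$ to $W_0$, and invoke Proposition~\ref{prop:noedges} on $(G,H[W_0])$.

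Finally I would derandomize. A successful coloring only needs to two-color the at most $(\Delta_G+1)k$ relevant vertices correctly, so it suffices to enumerate an $(n,t)$-universal set with $t=(\Delta_G+1)k$; by the construction of Naor et al.\ this takes $\Oh(2^t t^{\Oh(\log t)}\log n)$ time, and for each coloring we run the polynomial-time procedure above. Since $\Delta_G=O(1)$ we have $t=O(k)$, so the total running time is $f(k)\cdot n^{O(1)}$ and both problems are $\FPT$.

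I do not expect a genuine obstacle here; the only points that need care are (i) checking that $V_0$ is indeed an independent set and that restricting attention to $V_0$ neither discards the hypothetical solution nor introduces a spurious one, and (ii) confirming that restricting the lists to $V_0$ (resp.\ $W_0$) keeps the reduced instance exactly within the scope of Proposition~\ref{prop:noedges}. This is the ``simpler argument'' alluded to just before the statement, in contrast with Theorem~\ref{thm:mainboundeddegree}, where the analogous clean-up left a permutation graph plus bounded-degree conflict edges rather than an edgeless graph, necessitating Lemma~\ref{lem:separation}.
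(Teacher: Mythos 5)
Your proposal is correct and follows essentially the same route as the paper's proof: random separation on $G$ (resp.\ $H$ in the symmetric p-OLISE case), discarding green vertices with green neighbours so that what remains is an independent set, and then invoking the dynamic program of Proposition~\ref{prop:noedges}, derandomized via universal sets. Your write-up merely spells out the probability bound, the derandomization parameters, and the symmetric case in more detail than the paper does.
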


\begin{proof}
We prove the result for p-OLSE. The proof is exactly the same for p-OLISE. The proof uses the random separation method, but is simpler than the proof of Lemma~\ref{lem:separation}. Let $(G, H, \prec_G, \prec_H, L, k)$ be an instance of p-OLSE. Observe that if $S$ is the solution that we are looking for then $V(S)$ must be an independent set since $\Delta_H=0$. Use the random separation method to color $G$ with green or red. Since $\Delta_G = O(1)$, if a solution $S$ exists, then in fpt-time (deterministic) we can find a 2-coloring in which all vertices in $S$ are green and their neighbors in $G$ are red. So we can work under this assumption. Let $G_g$ be the subgraph of $G$ induced by the green vertices, and $G_r$ that induced by the red vertices. Observe that any green vertex in $G_g$ that is not isolated in $G_g$ can be discarded by our assumption (since all neighbors of a vertex in $S$ must be in $G_r$). Therefore, we can assume that $G_g$ is an independent set. We can now compute a maximum cardinality subgraph of $G_g$ that can be (validly) embedded into $H$ using the dynamic programming algorithm in Proposition~\ref{prop:noedges}; if the subgraph has size at least $k$ we accept; otherwise, we try another 2-coloring of $G$. If no 2-coloring of $G$ results in a solution of size at least $k$, we reject.\end{proof}

\section{Parameterization by the Vertex Cover Number}
\label{sec:vcnumber}
In this section we study the parameterized complexity of p-OLSE parameterized by the size of a vertex cover $\nu$ in the graph $G$, shortly (p-VC-OLSE), defined formally as follows:

\paramproblem{} {Two graphs $G$ and $H$ with linear orders $\prec_G$ and $\prec_{H}$ defined on the vertices of $G$ and $H$; a function $L : V(G) \longrightarrow 2^{V(H)}$; and $k \in \nat$}{$\nu$ where $\nu$ is the size of a minimum Vertex Cover}{Is there a subgraph $S$ of $G$ of $k$ vertices and an injective map $\phi: V(S) \longrightarrow V(H)$ such that: (1) $\phi(u) \in L(u)$ for every $u \in S$; (2) for every $u, u' \in S$, if $u \prec_G u'$ then $\phi(u) \prec_H \phi(u')$; and (3) for every $u, u' \in S$, $uu' \in E(G)$ if and only if $\phi(u)\phi(u') \in E(H)$} \\

As noted earlier, the reduction used in Proposition~\ref{prop:whardness} to prove the $W[1]$-hardness of p-OLSE when restricted to instances in which $\Delta_H = 1$, $\Delta_G = 1$ and $\Delta_L=\infty$ results in an instance in which the number of vertices in $G$, and hence $\tau(G)$, is upper bounded by a function of the parameter. Since $\tau(G) < |V(G)|$ it follows immediately that p-VC-OLSE is $W[1]$-hard in this case.  Therefore:

\begin{proposition}
\label{prop:whardnessvc}
p-VC-OLSE restricted to instances in which $\Delta_H = 1$ and $\Delta_G = 1$ is $W[1]$-complete.
\end{proposition}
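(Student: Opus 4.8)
The $W[1]$-hardness half was already settled in the paragraph preceding the statement: the reduction of Proposition~\ref{prop:whardness} is an fpt-reduction from \textsc{Clique} whose output has $\Delta_H=\Delta_G=1$ and $\Delta_L=\infty$ and satisfies $|V(G)|=k(k+2)$, so $\tau(G)=\nu$ is bounded by a function of the \textsc{Clique} parameter; as \textsc{Clique} is $W[1]$-complete, p-VC-OLSE with $\Delta_H=\Delta_G=1$ is $W[1]$-hard. Hence the plan is devoted to the remaining half, membership in $W[1]$, which is where essentially all of the proof's content lies.

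I would start from the structure forced by the degree bounds. Since $\Delta_G=1$, the graph $G$ is a matching $M_G$ with $|M_G|=\nu$ together with a (possibly very large) independent set $I_G$ of isolated vertices, and likewise $H$ is a matching $M_H$ plus isolated vertices. Every solution $S$ thus splits as $A\cup B$ with $A=S\cap V(M_G)$, $|A|\le 2\nu$, and $B\subseteq I_G$; moreover, since the restriction of a valid embedding to a smaller subgraph is again valid, it suffices to look for a solution of size at least $k$. First I would guess the \emph{skeleton} of $S$: for each of the $\le\nu$ edges of $M_G$, which of its endpoints (neither, a designated one, or both) lie in $A$. There are $2^{O(\nu)}$ skeletons; the branching can be amalgamated into a single many-one fpt-reduction, or performed as a Turing fpt-reduction. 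With a skeleton fixed, the $p:=|A|\le 2\nu$ vertices of $A$ occur in a prescribed $\prec_G$-order $x_1\prec_G\cdots\prec_G x_p$, and a valid placement of the core is a choice of images $y_1\prec_H\cdots\prec_H y_p$ with $y_i\in L(x_i)$, pairwise distinct, such that $\{y_i,y_j\}\in M_H$ whenever $\{x_i,x_j\}\in M_G$ (which endpoint maps to which being forced by $\prec_G,\prec_H$). Once $y_1,\dots,y_p$ are fixed, the open $\prec_H$-intervals between consecutive images are determined, each isolated vertex of $G$ has to land in the interval dictated by its $\prec_G$-position, and an interval-by-interval computation in the style of the dynamic program of Proposition~\ref{prop:noedges} gives, in polynomial time, the maximum number of vertices of $I_G$ that can be adjoined; adding $p$ yields the best solution size compatible with that placement.

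It remains to decide, inside the $W[1]$ framework, whether for some fixed skeleton there is a consistent placement $(y_1,\dots,y_p)$ whose best solution size reaches $k$. The natural route is a reduction to \textsc{Clique}: build an auxiliary graph whose vertices are the legal atomic choices --- an edge of $M_H$ assigned to a fully-used $M_G$-edge, or a vertex of $H$ assigned to a half-used $M_G$-vertex (only polynomially many of them, at most $|M_H|$ per fully-used edge and at most $|V(H)|$ per half-used vertex, no matter how large the lists are) --- and join two choices whenever they are order-consistent under $\prec_G,\prec_H$ and use disjoint vertices of $H$; a clique whose size equals the number of skeleton slots (which is $O(\nu)$) is precisely a consistent placement of the core. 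The step I expect to be the main obstacle is that the quantity we must certify --- $p$ together with the sum, over the $\prec_H$-intervals, of the interval-wise isolated-vertex counts --- is a \emph{global} function of the whole placement rather than a pairwise compatibility condition, so it cannot be read off the auxiliary graph directly. Making this precise will require either attaching weights to the auxiliary graph and verifying that the resulting budgeted variant of \textsc{Clique} remains in $W[1]$, or pushing the interval counting back into the guessing stage (for instance by also guessing coarse information about how the $y_i$ are distributed along $\prec_H$), so that the surviving \textsc{Clique} instance only has to check consistency. The hardness direction, in contrast, is immediate, so the real work of the proof is this membership bookkeeping.
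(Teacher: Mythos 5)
Your hardness argument is exactly the paper's proof: the paper's entire justification for Proposition~\ref{prop:whardnessvc} is the observation in the preceding paragraph that the reduction of Proposition~\ref{prop:whardness} produces instances with $|V(G)|=k(k+2)$, hence with $\tau(G)$ bounded by a function of the original parameter, so the same fpt-reduction works for the vertex-cover parameterization. The membership half, to which you devote most of your effort and which you leave with an explicitly acknowledged unresolved step (certifying the global interval-by-interval count of isolated vertices), is not argued in the paper at all --- completeness is simply asserted --- so you match the paper on everything it actually proves; if you do want to close your membership sketch, the natural fix is the one you hint at: precompute, for every pair of positions in $H$ and every $\prec_G$-interval, the maximum embeddable count via the dynamic program of Proposition~\ref{prop:noedges}, and additionally guess the $O(\nu)$ per-interval budgets $k_j$ summing to $k-p$, so that the post-guess verification reduces to $O(\nu)$ table lookups and pairwise consistency checks.
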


\vspace*{-1mm}
Therefore, we can focus our attention on studying the complexity of p-VC-OLSE restricted to instances in which $\Delta_L \in O(1)$.
\vspace*{-1mm}
\begin{theorem}
\label{thm:fptvslose}
p-VC-OLSE restricted to instances in which $\Delta_L = O(1)$ can be solved in time $O^*((2\Delta_L)^\nu)$ and hence is $\FPT$.
\end{theorem}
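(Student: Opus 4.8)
\emph{Overall approach.} The plan is a bounded-search-tree on a minimum vertex cover of $G$, where each leaf of the tree is resolved in polynomial time by the dynamic program of Proposition~\ref{prop:noedges}. If $\Delta_L = 0$ the instance is trivial, so I would assume $\Delta_L \ge 1$. First I would compute a minimum vertex cover $C$ of $G$ with $|C| = \nu$ (the standard branching algorithm for {\sc Vertex Cover} runs in $O^*(1.3^\nu)$ time, well within the target bound) and set $I = V(G) \setminus C$, which is an independent set. The structural fact I would rely on throughout is that every edge of $G$ has an endpoint in $C$; in particular, every neighbour in $G$ of an $I$-vertex lies in $C$, so if both endpoints of a $G$-edge lie in the sought solution $S$ then at least one endpoint lies in $C \cap S$.

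\emph{Branching step.} For each $c \in C$ independently I would guess either ``$c \notin S$'' or ``$c \in S$ with image $\phi(c) = v$'' for some $v \in L(c)$; this is at most $1 + |L(c)| \le 1 + \Delta_L \le 2\Delta_L$ choices per vertex of $C$, hence at most $(2\Delta_L)^\nu$ branches. A branch fixes a candidate set $C_S \subseteq C$ together with a candidate partial map $\phi_0 \colon C_S \to V(H)$; I would discard the branch unless $\phi_0$ is injective, respects $\prec_G$ and $\prec_H$ on $C_S$ and its image, and satisfies the embedding constraint on every pair of vertices of $C_S$. A genuine solution $S$ induces exactly the branch with $C_S = S \cap C$ and $\phi_0 = \phi|_{C \cap S}$, so no solution is lost.

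\emph{Extension step.} In a surviving branch it remains to decide whether $\phi_0$ can be extended by at least $k - |C_S|$ vertices of $I$. Since $I$ is independent and every edge of $G$ leaving an $I$-vertex ends in $C$, the constraints that a candidate $I$-vertex $u$ with image $v$ must satisfy, beyond $v \in L(u)$, are only: injectivity against $\phi(C_S)$; order-preservation, which together with the orders already fixed on $C_S$ and its image and transitivity of $\prec_H$ is equivalent to forcing $v$ into the $\prec_H$-interval delimited by the images of the $\prec_G$-nearest members of $C_S$ on either side of $u$; and, for every $c \in C_S$ adjacent to $u$, the requirement $v\phi(c) \in E(H)$ (and, depending on the precise form of the embedding constraint, its converse for non-adjacent $c$). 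Each of these merely carves out a sublist $\tilde L(u) \subseteq L(u)$. After imposing them, the residual task is to embed a maximum-cardinality subset of $I$ — whose induced subgraph in $G$ is edgeless — into $H$ respecting $\prec_G$, $\prec_H$ and the lists $\tilde L$, which is precisely the edgeless-source ordered list embedding problem of Proposition~\ref{prop:noedges}, solvable in $O(|I| \cdot |V(H)|)$ time irrespective of $\Delta_H$ and $\Delta_G$. Gluing $\phi_0$ with any returned extension of size at least $k - |C_S|$ yields a valid solution (and any solution restricts to such an extension in the branch it induces), so the instance is a yes-instance iff some branch yields an extension of the required size.

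\emph{Running time and the main obstacle.} There are at most $(2\Delta_L)^\nu$ branches, and computing $C$, validating a branch, building the lists $\tilde L$, and running the dynamic program each take polynomial time, for a total of $O^*((2\Delta_L)^\nu)$; as $\Delta_L = O(1)$ this is fpt-time. I expect the delicate point to be the extension step: one must verify carefully that the list-restrictions above are \emph{jointly} equivalent to validity of the full map $\phi_0 \cup \psi$ — in particular that order-preservation between a $C_S$-vertex and an $I$-vertex, and between two $I$-vertices lying in different $\prec_G$-gaps, is automatic from the interval restriction, and that no embedding constraint among solution vertices is missed, which is exactly where one uses that every $G$-edge inside a solution has an endpoint in $C_S$ so that all such constraints are already accounted for when $\phi_0$ is fixed.
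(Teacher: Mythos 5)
Your proposal is correct and follows essentially the same route as the paper: compute a vertex cover $C$, enumerate the intersection $S\cap C$ together with its partial embedding in $O^*((2\Delta_L)^\nu)$ branches, prune each remaining list in $I=V(G)\setminus C$ against the fixed partial map (order, adjacency, and injectivity constraints), and finish with the dynamic program of Proposition~\ref{prop:noedges}. The only cosmetic difference is that the paper splits $G$ and $H$ into the $r+1$ intervals determined by $S_C$ and runs the DP on each pair separately, whereas you run one global DP after the interval restriction has been folded into the lists -- these are equivalent.
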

\vspace*{-3mm}
\begin{proof}
Let $\Delta \geq 0$ be any fixed integer, and suppose that $\Delta_L \leq \Delta$. Let $(G, H, \prec_G, \prec_H, L, k, \nu)$ be an instance of p-VC-OLSE, where $k$ is the desired solution size and $\nu$ is the size of a vertex cover in $G$. In fpt-time (in $O^*(1.274^\nu)$ time~\cite{ckxvc}) we can compute a vertex cover $C$ of $G$ of size $\nu$ (if no such vertex cover exists we reject the instance). Let $I = V(G) \setminus C$, and note that $I$ is an independent set of $G$. Suppose that the solution we are seeking (if it exists) is $S$, and the valid mapping of $S$ is $\phi$. Let $S_C = S \cap C$, where $S_C$ is possibly empty, and let $\phi_C$ be the restriction of $\phi$ to $S_C$. We enumerate each subset of $C$ as $S_C$ in time $O^*(2^\nu)$, enumerate each possible mapping from $S_C$ to $L(S_C)$ as $\phi_C$ in time $O^*(\Delta_L^\nu)$, and check the validity of $\phi_C$ in polynomial time (if no valid $\phi_C$ exists, we reject the enumeration); since $\Delta_L \leq \Delta$, the enumeration can be carried out in time $O^*((2\Delta)^{\nu})$. Therefore, we will work under the assumption that the desired solution intersects $C$ at a known subset $S_C$, and that the restriction of $\phi$ to $S_C$ is a known map $\phi_C$, and reject the instance if this assumption is proved to be wrong.

We remove all vertices in $C \setminus S_C$ from $G$ (together with their incident edges) and update $L$ accordingly; without loss of generality, we will still use $G$ to refer to the resulting graph whose vertex-set at this point is $I \cup S_C$. Let $u_{i_1}, \ldots, u_{i_r}$ be the vertices in $S_C$, where $u_{i_1} \prec_G \ldots \prec_G u_{i_r}$. Since $\phi_C$ is valid, we have $\phi(u_{i_1}) \prec_H \ldots \prec_H \phi(u_{i_r})$. We now perform the following operation. For each vertex $u$ in $I$ and each vertex $v \in L(u)$, if setting $\phi(u) = v$ violates the embedding constraint in the sense that either (1) there is a vertex $u_{i_j} \in S_C$ such that $uu_{i_j} \in E(G)$ but $v\phi_C(u_{i_j}) \notin E(H)$ or (2) there is a vertex $u_{i_j} \in S_C$ such that $u \prec u_{i_j}$ (resp. $u_{i_j} \prec u$) but $\phi_C(u_{i_j}) \prec v$ (resp. $v \prec \phi_C(u_{i_j}$)), then remove $v$ from $L(u)$. Afterwards, partition the vertices in $G$ into at most $r+1$ intervals, $I_0, \ldots, I_r$, where $I_0$ consists of the vertices preceding $u_{i_1}$ (with respect to $\preceq_G$), $I_r$ consists of those vertices following $u_{i_r}$, and $I_j$ consists of those vertices that fall strictly between vertices $u_{i_{j-1}}$ and $u_{i_j}$, for $j=1, \ldots, r$. Similarly, partition $H$ into $r+1$ intervals, $I'_0, \ldots, I'_r$, where $I'_0$ consists of the vertices preceding $\phi_C(u_{i_1})$ (with respect to $\prec_H$), $I'_r$ those vertices following $\phi_C(u_{i_r})$, and $I'_j$ those vertices that fall strictly between vertices $\phi_C(u_{i_{j-1}})$ and $\phi_C(u_{i_j})$, for $j=1, \ldots, r$. Clearly, any valid mapping $\phi$ that respects $\prec_G$ and $\prec_H$ must map vertices in the solution that belong to $I_j$ to vertices in $I'_j$, for $j=0, \ldots, r$, in a way that respects the restrictions of $\prec_G$ and $\prec_H$ on $I_j$ and $I'_j$, respectively. On the other hand, since after the above operation every vertex $u$ in $I$ can be validly mapped to any vertex $v \in L(u)$, any injective mapping $\phi_j$ that maps a subset of vertices in $I_j$ to a subset in $I'_j$ in a way that respects the restrictions of $\prec_G$ and $\prec_H$ to $I_j$ and $I'_j$, respectively, can be extended to a valid embedding whose restriction to $S_C$ is $\phi_C$. Therefore, our problem reduces to determining whether there exist injective maps $\phi_j$, $j=0, \ldots, r$, mapping vertices in $I_j$ to vertices in $I'_j$, such that the total number of vertices mapped in the $I_j$'s is $k-r$. Consider the subgraphs $G_j=G[I_j]$ and $H_j=H[I'_j]$, for $j=0, \ldots, r$. Since $G_j$ is an independent set, the presence of edges in $H_j$ does not affect the existence of a valid list mapping from vertices in $G_j$ to $H_j$, and hence those edges can be removed. Therefore, we can solve the opt-OLSE problem using Proposition~\ref{prop:noedges} on the two graphs $G_j$ and $H_j$ to compute a maximum cardinality subset of vertices $S_j$ in $G_j$ that can be validly embedded into $H_j$ via an embedding $\phi_j$. If the union of the $S_j$'s with $S_C$ has cardinality at least $k$, we accept. If after all enumerations of $S_C$ and $\phi_C$ we do not accept, we reject the instance. This algorithm runs in time $O^*((2\Delta_L)^{\nu})$. \end{proof}

\section{Conclusion}\label{sec:conclusion}
We drew a complete complexity landscape of p-OLSE and p-OLISE, and their optimization versions, with respect to the computational frameworks of classical complexity, parameterized complexity, and approximation, in terms of the structural parameters $\Delta_H$, $\Delta_G$ and $\Delta_L$.  We also showed that relaxing the order constraint of p-OLSE and p-OLISE makes the problems significantly easier (in $\Pol$).
There are few interesting open questions that result from our research:

\begin{enumerate}
  \item  What is the complexity of the weighted versions of the problems presented in this paper (vertex-weights or edge-weights), with respect to the frameworks of classical complexity, approximation, and parameterized complexity?
   \item Are there meta-theorems, possibly similar to those for the {\sc Subgraph Isomorphism} and {\sc Graph Embedding} problems (see~\cite{grohebook}), in terms of the structures of $G$ and/or $H$, that would unify some of the complexity results about p-OLSE and p-OLISE presented in the paper?

   \item Finally, graph representations of proteins in practice often fall within the bounds of the parameters assumed for the fixed-parameter algorithms (and approximation algorithms) presented in the paper. How do these algorithms perform in practice?

  \end{enumerate}

\bibliographystyle{plain}
\bibliography{paper}

\end{document}